\numberwithin{equation}{section}
\def\k#1{\ket|#1>}
\let\succeq\succcurlyeq
\let\preceq\preccurlyeq
\newmdenv[%
  roundcorner=5pt,
  linecolor=blue!15,
  linewidth=2pt,
  subtitlebackgroundcolor=blue!15,
  subtitleaboveskip=0pt,
  subtitlebelowskip=0pt,
  subtitleinneraboveskip=0pt,
  innerbottommargin=0pt,
  subtitlefont=\normalfont
]{mdfigure}
\def\myfigureInternal#1#2#3{
\refstepcounter{figure} #1
\begin{mdfigure}[
  frametitle={
    \tikz[baseline=(current bounding box.east),outer sep=0pt]
    \node[anchor=east,rectangle,fill=blue!15,rounded corners]
    {Figure~\thefigure};},
  frametitleaboveskip=-10pt
  innertopmargin=0pt,
  innerbottommargin=0pt,
  roundcorner=5pt,
  linecolor=blue!15,
  linewidth=2pt,
  subtitlebackgroundcolor=blue!15,
  subtitleaboveskip=0pt,
  subtitlebelowskip=0pt,
  subtitleinneraboveskip=0pt,
  subtitlefont=\normalfont
]
#3
\mdfsubtitle{\medskip #2}
\end{mdfigure}
}
\def\myfigure#1#2#3{
\begin{figure}[htb]
\myfigureInternal{#1}{#2}{#3}
\negbigskip
\end{figure}
}
\newcommand{\oxplus}{\oplus}
\newcommand{\bigoxplus}{\bigoplus}
\begin{document}

\title{One-Way Ticket to Las Vegas and the Quantum Adversary}
\author{Aleksandrs Belovs\thanks{Faculty of Computing, University of Latvia} 
\and 
Duyal Yolcu\thanks{\url{https://github.com/qudent}}}
\date{}
\maketitle

\mycutecommand\onegamma{{\mathop{\gamma_2}\limits^\leftarrow}}
\mycutecommand\twogamma{{\mathop{\gamma_2}\limits^\leftrightarrow}}
\mycutecommand\subgamma{{\mathop{\gamma_2}\limits^\curvearrowbotleft}}


\begin{abstract}
We propose a new definition of quantum Las Vegas query complexity. 
We show that it is exactly equal to the quantum adversary bound.
This is achieved by a new and very simple way of transforming a feasible solution to the adversary optimisation problem into a quantum query algorithm.
This allows us to generalise the bound to include unidirectional access, multiple input oracles, and input oracles that are not unitary.
As an application, we demonstrate a separation between unidirectional and bidirectional access to an input oracle for a rather natural unitary permutation inversion problem.
\end{abstract}

\section{Introduction}
This paper combines two topics: Las Vegas query complexity and the quantum adversary bound.

\subsection{Las Vegas Complexity}
There are two main types of \emph{randomised} query algorithms, with different complexity measures:
\begin{itemize}
\item A \emph{Monte Carlo} query algorithm, also known as bounded-error, is allowed to output an incorrect answer with some small probability $\eps$, usually $1/3$.
The algorithm is allowed to make a certain number of queries, which cannot be exceeded.
This number is the query complexity of the algorithm.
\item A \emph{Las Vegas} query algorithm, also known as zero-error, always has to give the correct answer.
On the other hand, it has no strict limit on the number of queries it can make.
Sometimes it can make few queries, sometimes a lot.
Its query complexity is defined as the \emph{expected} number of queries it makes on a certain input.
\end{itemize}

Las Vegas algorithms have a number of nice properties.
First, complexity is independent of the choice of the error parameter $\eps$.  
Hence, one can talk about \emph{the exact} value of complexity for a particular problem on a particular input, which can even not be an integer.
Second, Las Vegas algorithms can be nicely composed as there is no need for error reduction. 
For Monte Carlo algorithms, one usually gets extra logarithmic factors due to the necessity to reduce the error of the inner subroutines.

One can terminate a Las Vegas algorithm after a certain number of queries, turning it into a Monte Carlo algorithm. 
By Markov's inequality, a Las Vegas algorithm with complexity $L$ can be turned into a Monte Carlo algorithm with error parameter $\eps$ and complexity $O(L/\eps)$.
On the other hand, there exist functions whose Las Vegas complexity is much larger than their Monte Carlo complexity.
Ref.~\cite{belovs:separations} features a quadratic separation for a total Boolean function.
For partial functions, the separation can be even larger.

Let us now turn to quantum query complexity.
In the overwhelming majority of cases, the complexity under consideration is Monte Carlo:  
the number of queries is fixed, and the algorithm can output an incorrect output with small probability.

Zero-error quantum algorithms have also been defined and studied~\cite{beals:pol, buhrman:boundsForSmallError, benDavid:quantumDistinguishingComplexity}.
A zero-error quantum algorithm is not allowed to give an incorrect output, but it can output '?' with probability at most $1/2$.
The answer '?' means that the algorithm has not figured out what the answer is.
This model is indeed a quantum counterpart of one way of defining randomised Las Vegas complexity.
However, it lacks the nice features of the randomised Las Vegas complexity outlined above.
The definition depends on the value with which '?' can be outputted, and it also does not compose nicely~\cite{buhrman:quantumZeroErrorCannotBeComposed}.

Another related notion is variable-time model introduced by Ambainis.
In this case, a quantum subroutine can run for an unpredicted number of steps, and the average running time is the corresponding quadratic mean $\sqrt{\sum_t p_t t^2}$, where $p_t$ is the probability the subroutine runs for $t$ steps.
Ambainis showed how to perform search~\cite{ambainis:searchVariableTimes} and amplitude amplification~\cite{ambainis:amplificationVariableTimes} on such subroutines.
A very recent paper by Jeffery~\cite{jeffery:subroutineComposition} also considers quantum walks with such subroutines.
Up to our knowledge, this notion has not been studied as a complexity measure \emph{per se}.
Also, these results are mostly concerned with time complexity, while we study query complexity in this paper.

\subsection{Adversary Bound}
\label{sec:introducitonAdversary}
The quantum adversary bound was first developed as a powerful tool for proving quantum query lower bounds, but it has been later extended to include upper bounds as well.

The adversary bound originates from the hybrid method by Bennett, Bernstein, Brassard, and Vazirani~\cite{bennett:strengths}, which was further refined by Ambainis in the first version of the adversary bound~\cite{ambainis:adv}.
Due to its attractive combinatorial formulation, it fostered a large number of applications~\cite{Durr:quantumGraph, berzina:graphProblems, buhrman:productVerification, dorn:algebraicProperties} to name just a few.

The bound was strengthened by H\o yer, Lee, and \v Spalek~\cite{hoyer:advNegative}.
Using the semidefinite formulation of the adversary bound by Barnum, Saks, and Szegedy~\cite{barnum:advSpectral}, they showed that the same expression still yields a lower bound if one replaces non-negative entries by arbitrary real numbers.  This \emph{negative-weighted} version of the bound is strictly more powerful than the positive-weighted one, but it is also harder to apply.
In a series of papers~\cite{reichardt:formulae, reichardt:spanPrograms, reichardt:advTight}, Reichardt \etal surprisingly proved that the negative-weighted version of the bound is \emph{tight}:  The dual formulation of the bound (which is equal to the primal formulation due to strong duality) can be transformed into a quantum query algorithm with the same complexity up to a constant factor.

The negative-weighted adversary bound has been used to prove lower bounds~\cite{belovs:kSumLower, belovs:onThePower, belovs:setEquality}, but more frequently to prove upper bounds, in particular using the learning graph approach~\cite{belovs:learning}.  For instance, the adversary bound (sometimes in the equivalent form of span programs) was used to construct quantum algorithms for formula evaluation~\cite{reichardt:formulae,  reichardt:unbalancedFormulas, zhan:treesWithHiddenStructure}, finding subgraphs~\cite{lee:learningTriangle, belovs:learningClaws, legall:constSizedHypergraphs}, $k$-distinctness problem~\cite{belovs:learningKDist}, and in learning and property testing~\cite{belovs:learningSymmetricJuntas, belovs:monotonicityQuantum}.

The next steps came when the adversary bound was extended to state generation by Ambainis, Magnin, R\"otteler, and Roland~\cite{ambainis:symmetryAssisted}; and state conversion by Lee, Mittal, Reichardt, \v Spalek, and Szegedy~\cite{lee:stateConversion}.
Belovs~\cite{belovs:variations} extended the bound for various types of input oracles, including the case when the input oracle can be an arbitrary unitary.
These generalisations came with a twist, as the bound became \emph{semi-tight}: a lower bound for the \emph{exact} version of the problem and an upper bound for the \emph{approximate} version of the bound.

Let us briefly touch on techniques used in the above papers.
Ambainis~\cite{ambainis:adv} and H\o yer \etal~\cite{hoyer:advNegative} only proved lower bounds, which they did considering a so-called progress function of the algorithm.
The upper bounds by Reichardt~\cite{reichardt:formulae, reichardt:spanPrograms, reichardt:advTight} used a rather complicated quantum walk, which was inspired by previous work on evaluating NAND-trees~\cite{farhi:nandTree, ambainis:formulaeEvaluation}.
The (discrete) quantum walk comprises two reflections, one simple and input-dependent, and the other one complicated and input-independent.
The analysis of the algorithm required technically involved spectral analysis.

The paper by Lee \etal~\cite{lee:stateConversion} featured many important technical innovations.
First, the problem was generalised to \emph{state conversion}, where the task of the algorithm is to transform one vector $\xi_x$ into another $\tau_x$ on every input $x$ in the domain $D$.
This turned out to be a very fruitful approach, as the algorithm can be broken into smaller steps, which can be then analysed independently.
Second, a very simple proof of the lower bound was presented, which worked by a direct conversion of the algorithm into the bound.
This essentially established the adversary bound as a semi-definite relaxation of the algorithm.
Third, the bound was formulated as an instance of filtered $\gamma_2$-norm, which is a generalisation of $\gamma_2$-norm used previously in other context, see \rf{sec:onegamma} for more detail.
Finally, the proof of the upper bound was significantly simplified by introducing easy and powerful Effective Spectral Gap Lemma to analyse the resulting quantum walk.
The lemma can be also used independently~\cite{belovs:electicityQuantumWalks, belovs:mergedWalk3Dist}.

Lee~\etal~\cite{lee:stateConversion} assumed the standard input oracle that encodes a string $x\in[q]^n$ for some alphabet $[q]$.
The problem of choice by Belovs~\cite{belovs:variations} was still state conversion $\xi_x\mapsto \tau_x$ but this time with \emph{general input oracles} $O_x$, which is just an arbitrary unitary transformation.
This removed the oracle-specific details from the proof, thus making it more transparent.
(Barnum~\cite{barnum:unitaryInputOracle} already considered the problem of function evaluation with unitary input oracles, but that paper went unnoticed at the time.)
The bound was formulated as an instance of relative $\gamma_2$-norm, which further generalises filtered $\gamma_2$-norm, and which, in our opinion, is more natural than the latter.
Belovs used the adversary bound for this problem to construct various adversaries for function and relation evaluation.

Finally, let us note that all the versions considered above are that of the so-called 
\emph{additive} adversary bound.
We leave out of consideration the multiplicative version by \v Spalek~\cite{spalek:multiplicative} based on the earlier work of Ambainis~\cite{ambainis:kFoldedSearch}.


\subsection{Our Results and Techniques}
\label{sec:ourResults}

\paragraph{Las Vegas Complexity.}
We propose a different definition of quantum Las Vegas query complexity, which is very natural and more quantum in spirit than the previous notion of zero-error quantum algorithm.
We define it as the total sum of the squared norms of all the states processed by the input oracle during the execution of the algorithm.
Since the square of the norm means probability in the quantum world, this quantity can be interpreted as the expected number of queries performed by the algorithm on input $x$.

Our variant of quantum Las Vegas complexity possesses all the nice properties mentioned above.
It does not feature any artificial constants.  
It composes nicely as we will show in Sections~\ref{sec:LasVegasProperties} and~\ref{sec:compositionRevisited}.
Finally, as we will show in \rf{sec:mainupper}, every quantum Las Vegas algorithm with complexity $L$ can be turned into a Monte Carlo algorithm with error $\eps$ and complexity $O(L/\eps)$.
Since the term `zero-error' is standard for the previous definition, we call our version `Las Vegas'.

Contrary to Monte Carlo complexity, Las Vegas complexity is input-dependent: different inputs can have different complexity.
Thus, we can study not only the worst-case complexity, but also track complexity on each input.
We capture this by introducing complexity profile, which is the vector recording the complexity of the algorithm on all inputs.

Very recently, and independently of our paper, Jeffery~\cite{jeffery:subroutineComposition} came up with essentially the same notion, which was combined with variable-time quantum algorithm to get a composition result.
The results of \rf{sec:compositionRevisited} can be seen as a query analogue of the latter composition result.

\paragraph{New Simplified Upper Bound Construction.}
As mentioned in \rf{sec:introducitonAdversary}, 
our paper continues the line of work relating quantum query algorithms and the adversary bound.
Following Lee~\etal~\cite{lee:stateConversion}, the relation between the two can be depicted as in \rf{fig:correspondence}.
\myfigure{\label{fig:correspondence}}
{Correspondence between quantum query algorithms and the adversary bound}
{
\negbigskip
\negmedskip
\[
\begin{tabular}{rc@{$\longleftrightarrow$\quad}l}
Computational problem && Adversary optimisation problem\\
Input of the problem && Variable-vector in the problem\\
Query algorithm solving the problem && Feasible solution to the problem\\
Complexity of the algorithm && Objective value of the feasible solution
\end{tabular}
\]
}

After formulating the adversary optimisation problem in Row 1 of \rf{fig:correspondence}, the main issue is to prove the corresponding lower and upper bounds.
A lower bound is a transformation of an algorithm into a feasible solution in Row 3 of \rf{fig:correspondence}, which respects the fourth row of the same diagram.
An upper bound is a transformation in the opposite direction, which turns a feasible solution into an algorithm.

For the lower bound, we follow the same approach that was developed in~\cite{lee:stateConversion} and used in~\cite{belovs:variations}.
The variable-vector is the direct sum of all the queries made to the oracle on the corresponding input.
Therefore, its squared norm lower bounds query complexity, as the state processed on one query has norm at most 1.

The crucial novel ingredient in our paper is a new construction of the upper bound.
The idea behind it is as follows.
What we would like to do is to reverse the above process and to give the variable-vector from the feasible solution as a query to the input oracle.
At the first sight, it is not clear how to achieve this.
Indeed, the algorithm does not know the input, hence, does not know which vector to give.
And even if it knew, the latter vector generally has norm much larger than 1, making it impossible to use it as a query.

We have found a very simple way around these complications.
Assume we add a small ``catalyst'' to the state of algorithm, which is just a scaled down variable-vector from the feasible solution.
We process the catalyst by the input oracle, as we wanted, and use the result to change a tiny part of the state in the required direction.
The constraints of the adversary optimisation problem ensure that the latter step can be implemented by an input-independent unitary.
What is remarkable, however, is that we get the catalyst back after this unitary!
So we can use it again, and again, until we perform the required transformation on all of the state.
Thus, it suffices for the algorithm to ``guess'' the catalyst just once to perform the transformation described above.

The ``guessing'' ability is folklore in quantum algorithms.
What is meant here is that if the catalysis is small, the distance between the original state and the state with the catalyst is also small, and it gets preserved during the execution of the algorithm.
Therefore, we end up close to the target state even if we started without the catalyst.
The smaller the catalyst, the larger the number of queries needed, but the smaller the error induced by guessing it.

Let us compare our algorithm with two previous approaches.
They are the aforementioned quantum-walk-based algorithm by Lee \etal~\cite{lee:stateConversion} and an adiabatic algorithm by Brandeho and Roland~\cite{brandeho:adiabaticAdversary}.
Both of them use the guessing ability to extend the initial state with a small state incorporating the feasible solution.
After that the algorithm uses a quantum walk or an adiabatic transformation, respectively.

What we demonstrated is that the same effect can be obtained by a very simple unitary transformation.
This substantially simplifies the analysis and makes the algorithm more transparent.
In particular, we see what queries are being made by the algorithm: it repeatedly calls the input oracle on the scaled down variable-vector from the feasible solution.

This allows us to make various improvements.
First, we see that the Las Vegas complexity of the algorithm is exactly the objective value of the optimisation problem. 
Second, we can easily incorporate multiple input oracles.
Third, the upper bound works assuming \emph{unidirectional} access to the input oracle, while the previous algorithm in~\cite{belovs:variations} required bidirectional access (the algorithm can query both the input oracle $O_x$ and its inverse $O^*_x$).
Fourth, we do not even need the input oracle to be unitary.
Finally, we get a slightly better dependence of the number of queries (in the traditional sense) on the error parameter $\eps$, which is now tight up to constant factors.
Let us further discuss these improvements.

\paragraph{Relation to Las Vegas Complexity.}
With the new upper bound, it becomes easy to calculate the Las Vegas complexity of the algorithm, which leads to the main result of this paper:
\begin{center}
\emph{The quantum adversary bound is equal to the Las Vegas complexity of state conversion.}
\end{center}
We note that this is a threefold tighter connection between the adversary bound and the usual (Monte Carlo) query complexity.
First, the bound is \emph{tight}, while connection to Monte Carlo complexity is semi-tight.
Second, the bound is \emph{exactly} equal to Las Vegas complexity, and not just up to a constant factor.
Third, the bound holds for \emph{all inputs simultaneously}, and not just worst-case.

This result automatically carries over to all special cases of state conversion, including state generation and function evaluation.

\paragraph{Multiple Input Oracles.}
Considering multiple input oracles is often useful.
For instance, even the standard input oracle $O_x\colon \ket|i>\ket|b>\mapsto \ket|i>\ket|b \oplus x_i>$ is a direct sum of multiple input oracles $O_{x_i}\colon \ket|b>\mapsto\ket|b\oplus x_i>$, which encode individual symbols of the input string $x$.

These settings were investigated previously, most notably in the context of compositional results.
Reichardt and \v Spalek~\cite{reichardt:formulae} considered span programs with costs, where costs were assigned to individual symbols, and which were meant to capture the complexity of the corresponding subproblems, and Ref.~\cite{reichardt:spanPrograms} similarly consider the adversary bound with costs.

Multiple oracles are also necessary in the study of trade-offs between different input resources.
Kimmel, Lin, and Lin~\cite{kimmel:oraclesWithCosts} used an adversary-based approach to show a trade-off between two input oracles.  Again, the adversary featured costs.
Belovs and Rosmanis~\cite{belovs:counting} used a similar approach with general input oracles.
Actually, the whole notion of Las Vegas complexity, including the multiple oracle case, is greatly inspired by the latter paper.

In the case of Las Vegas complexity, dealing with several input oracles is easy.
We can use the same definition (the sum of the squared norms of the states processed by the oracle) for each oracle independently.
The complexity profile becomes a matrix, where, for each input, the complexity of each of the input oracles is listed.

In the adversary bound, the variable-vector is similarly broken down into parts corresponding to different input oracles, and all the results carry over with minimal changes.
Having a vector of complexities of all the input oracles, it is easy to get compositional results as well as trade-offs.

\paragraph{Unidirectionality.}
The upper bound in~\cite{belovs:variations} used quantum walk, which imposed bidirectional access to the input oracle to implement the required reflection.
In the new upper bound, there are no reflections, hence, there is no need for this assumption.

Allowing bidirectional access has a lot of rationale.
First, oracles are usually thought as quantum subroutines, and each quantum subroutine can be easily inverted.
Also, many basic quantum algorithms like Grover's search and amplitude amplification often require bidirectional access to work.

On the other hand, unidirectional access also naturally comes up in some cases.
For instance, if we send the state to some other party to apply the input oracle,
we may trust the recipient to perform the required query, but we might not be able to ask them to perform it in reverse.

Finally, the assumption that we have unidirectional access to the input oracle simplifies the upper and the lower bounds.
The bidirectional case can be obtained as a special case, 
see \rf{sec:bidirectionality}.

\paragraph{General Input Oracles.}
As there is no bidirectionality assumption, we can replace the unitary oracle assumed in~\cite{belovs:variations} by an arbitrary linear transformation.
It turns out that many of results hold still hold even under such assumptions.
It seems, however, that contraction oracles, which are linear transformations of norm not exceeding 1, might be a good choice to consider.

Contraction oracles seem to contradict the unitarity condition usually imposed on quantum algorithms.
Nonetheless, such oracles naturally come up in practise.
For example, the input oracle can perform some measurement and continue only if the outcome is positive.
Similar settings appear in interaction-free measurements by Elitzur and Vaidman~\cite{elitzur:interactionFree} and subsequent bomb query algorithm by Lin and Lin~\cite{lin:bombTester}, measure-many quantum finite automata~\cite{kondacs:measureManyQuantumAutomata}, and faulty oracles~\cite{regev:faultyOracle}.

\paragraph{Subspace Conversion Problem.}
Finally, we define and study the subspace conversion problem, which is in between state conversion $\xi_x\mapsto\tau_x$, which we assume for the action of the algorithm, and unitary (or contraction) $O_x$, which we use for the input oracle.

In this problem, the task is to implement a linear transformation $T_x\colon \cK_x\to \cK$ defined on a linear subspace $\cK_x$ of the workspace $\cK$.
If $\cK_x$ is one-dimensional, this is state conversion; 
if $\cK_x = \cK$, this is unitary (or contraction).

We introduce a complexity notion for this problem, which is the largest Las Vegas complexity of the algorithm achieved when executed on a unit vector in $\cK_x$.
The definition turns out to be natural for composition, and it is still exactly characterised by the corresponding version of the adversary bound.

\paragraph{Unitary Permutation Inversion.}
Finally, we use this occasion to demonstrate a separation between unidirectional and bidirectional access to the input oracle.
We consider the unitary permutation inversion problem, where the oracle is a unitary that implements a permutation $\k i \mapsto \ket|\pi(i)>$, and the task is to find $\pi^{-1}(1)$.  
We prove an $\Omega(\sqrt n)$ lower bound, where $n$ is the size of the domain of $\pi$, whereas the problem is trivially solvable with 1 query to the inverse oracle.
Up to our knowledge, these types of questions have not been addressed before.

\mycutecommand\Ic{I^\circ}
\mycutecommand\Ib{I^\bullet}
\mycutecommand\State{\cS}
\mycutecommand\Query{\cQ}

\subsection{Overview of the Paper}
In this subsection, we give a very brief overview of the paper, highlighting the most important points.

The main part of the paper starts with \rf{sec:definitions}.
In \rf{sec:algorithm}, we define a quantum query algorithm as a sequence of linear transformations
\begin{equation}
\label{eqn:preAlgorithm}
U_T\, \tO\, U_{T-1}\,\tO\,\cdots U_{1}\, \tO\, U_0,
\end{equation}
where $U_i$ are some unitaries. 
The operator $\tO = (O\otimes \Ib)\oplus \Ic$ is a query, where $O$ is the input oracle, and $\Ib$ and $\Ic$ are some identity transformations.
Thus, the algorithm implements a transformation $O\mapsto \cA(O)$: from the input oracle to the linear operator~\rf{eqn:preAlgorithm}.
In \rf{sec:conditions}, we describe problems solved by the algorithm.
We first give a general definition, capable of describing a wide range of problems, but for the purposes of this paper, the most important problem is state conversion.
Given a family of input oracles $O_x$ and pairs $\xi_x\mapsto \tau_x$, where $x$ ranges overs some set $D$, the task is to develop an algorithm $\cA$ such that $\cA(O_x)$ maps $\xi_x$ into $\tau_x$ for all $x\in D$.

The remaining part of the paper follows a similar division.
Sections~\ref{sec:LasVegas} and~\ref{sec:LasVegasProperties} are devoted to the study of Las Vegas complexity of algorithms without connection to any particular problem.
In \rf{sec:advStateConversion}, we consider the state conversion problem, and in \rf{sec:subspaceConversion}, subspace conversion.
In particular, the adversary bound makes its first appearance in \rf{sec:advStateConversion} as it is tied to a particular problem being solved.
Other problems can be studied as well, for instance, in \rf{sec:booleanFunction}, we consider the problem of Boolean function evaluation, and Ref.~\cite{belovs:variations} considers a wide range of other problems, which we leave outside the confines of this paper.
\rf{sec:onegamma} is an intermission, and Sections~\ref{sec:bidirectionality} and~\ref{sec:permutationInversion} contain complementary results.
Let us describe these sections in more detail.

In \rf{sec:LasVegas}, we define Las Vegas complexity a quantum query algorithm $\cA$.
Let $\Query_t(\cA,O)\xi$ be the state processed by the input oracle (the $O\otimes \Ib$ part of the query operator $\tO$) on the $t$-th query when executed on the input oracle $O$ and the initial state $\xi$.
We define the Las Vegas complexity $L(\cA,O,\xi)$ as the sum of $\|\Query_t(\cA,O)\xi\|^2$ over all $t$.
Note that it depends both on the input oracle $O$ and the initial state $\xi$.
In \rf{sec:multipleOracles}, we define the same notion for multiple input oracles.
In essence, the complexity $L(\cA,O,\xi)$ becomes a tuple which accounts for the total squared norm of the states processed by each of the input oracles.
The difference between the single-oracle and the multiple-oracles variants is mostly cosmetic.
The reader may choose to assume the single-oracle variant throughout the paper.

In \rf{sec:LasVegasProperties}, we study various properties of Las Vegas complexity without relation to any particular task.
We consider various ways algorithms can be composed: inversion, direct sum, tensor product, sequential and functional composition, and show that our definition of Las Vegas complexity encompasses these composition variants naturally.
The results are pretty straightforward, but there is one subtlety involving functional composition, where one algorithm $\cB$ is used as an input oracle for another algorithm $\cA$.
The thing is that the algorithm $\cA$ executes the input oracle as $O\otimes \Ib$, while we assume the algorithm $\cB$ implements $O$.
This means that the complexity of $\cB$ on the state $\psi_t = \Query_t(\cA, \cB)\xi$ is just not defined.
We use an obvious solution to slice $\psi_t$ as $\psi_{t,1}\oplus \cdots \oplus \psi_{t,d}$, where $d$ is the dimension of $\Ib$, and each $\psi_{t,j}$ can be processed by $O$.
Now, the complexity of $\cB$ on each $\psi_{t,i}$ is well-defined, and we can define the total complexity as their sum.
There are many different possible slicing, as they depend on the choice of an orthonormal basis in the space of $\Ib$.
We show that the total complexity is independent of the choice of slicing.

In \rf{sec:onegamma}, we describe our modification to the relative $\gamma_2$-norm from~\cite{belovs:variations}, as different settings require different versions of the bound.
One thing we should account for is unidirectionality.
We also have to convert to multi-objective version of the bound, as we are interested in the full complexity profile of the algorithm.
The variant with multiple input oracles requires yet another modification.
We formulate the dual versions, which can be used to prove lower bounds on worst-case complexity.

\rf{sec:advStateConversion} is the main part of the paper.
In it, we study Las Vegas complexity of state conversion, and show how it can be characterised by an instance of (unidirectional) relative $\gamma_2$-bound: the adversary optimisation problem.
This section is designed to be self-contained with minimal dependency on the previous sections.
We first prove a lower bound for the exact version of the problem in \rf{sec:mainlower}, and then an upper bound for the approximate version in \rf{sec:mainupper}.
The corresponding ideas were already explained in \rf{sec:ourResults}.
The algorithm in \rf{sec:mainupper} transforms
\[
\xi^+_x = \xi_x \oplus \frac1{\sqrt T} v_x
\quad\longmapsto\quad 
\tau^+_x = \tau_x \oplus \frac1{\sqrt T} v_x,
\]
where $\xi_x\mapsto\tau_x$ is the required state conversion problem, $v_x$ is a feasible solution to the adversary optimisation problem, and $T$ is an arbitrary positive integer.
The Las Vegas complexity of the algorithm on input $x$ is $\|v_x\|^2$ independently from the value of $T$.
(The total number of queries does depend on $T$, though).
As $T$ increases, we can get arbitrarily close to the required transformation, while Las Vegas complexity stays the same.
In \rf{sec:equality}, we improve on this result.
We show how to perform transformation $\xi_x\mapsto \tau_x$ \emph{exactly}, while now we can get Las Vegas complexity arbitrarily close to $\|v_x\|^2$.

\rf{sec:subspaceConversion} deals with the question of what happens if some of the input oracles $O_x$ in the state conversion problem are equal.
If $O_x = O_y$, then we get exactly the same action of the algorithm on the inputs $x,y\in D$.
The motive of \rf{sec:linearConsistency} is linear consistency of the feasible solution to the adversary bound for such pairs of inputs.
We show that we can assume consistency without any loss.
Moreover, this brings us to the formulation of the subspace conversion problem, and the corresponding adversary bound in \rf{sec:subspaceConversionProblem}.
In \rf{sec:compositionRevisited}, we revisit the functional composition property from \rf{sec:LasVegasProperties} and show that we can upper bound the complexity of the composed algorithm as the product of complexities of the constituents under the assumption that the algorithm follows the specification of the subspace-converting subroutine.

In \rf{sec:bidirectionality}, we prove the relation between unidirectional and bidirectional versions of the bound.
In particular, we get back the bidirectional results from~\cite{belovs:variations}.
In \rf{sec:permutationInversion}, we prove a separation between unidirectional and bidirectional input oracle for the unitary permutation inversion problem.
Finally, in \rf{sec:discussion}, we make some final comments.

Ref.~\cite{yolcu:adversary} contains an alternative exposition of some of the results in \rf{sec:advStateConversion}, and some additional results on more general control problems.

\section{Preliminaries}
If not said otherwise, a \emph{vector space} is a finite-dimensional complex inner product space.  They are denoted by calligraphic letters.  We assume that each vector space has a fixed orthonormal basis, and we often identify an operator with the corresponding matrix.
The inner product is denoted by $\ip<\cdot,\cdot>$.
$A^*$ stands for the adjoint linear operator, and $A\elem[i,j]$ for the $(i,j)$th entry of the matrix $A$.
$A\circ B$ stands for the Hadamard (entry-wise) product of matrices.
$I_{\cX}$ stands for the identity operator in $\cX$.
All projectors are orthogonal projectors.
For vectors $u,v\in \bR^n$, we write $u\le v$ if $u\elem[i]\le v\elem[i]$ for all $i\in [n]$.
We use $A\succ 0$ and $A\succeq 0$ to denote positive definite and semi-definite matrices, respectively.
We use the ket-notation to emphasise that a vector is a state of a quantum register, or to denote the elements of the computational basis.

We also need the following generalisation of the well-known parallelogram identity.
We were not able to find its statement in the existing literature.

\begin{thm}[Generalised Parallelogram Identity]
\label{thm:parallelogram}
Let $v_1,\dots,v_d\in \bC^n$, and
\begin{equation}
\label{eqn:unitaryU}
U = \begin{pmatrix}
\alpha_{1,1} & \alpha_{1,2} & \dots& \alpha_{1,d}\\
\alpha_{2,1} & \alpha_{2,2} & \dots& \alpha_{2,d}\\
\vdots & \vdots &\ddots & \vdots\\
\alpha_{d,1} & \alpha_{d,2} & \dots &\alpha_{d,d}
\end{pmatrix}
\end{equation}
be a unitary matrix.  Then,
\[
\|v_1\|^2 + \|v_2\|^2 + \cdots + \|v_d\|^2 = \sum_{j=1}^d 
\normA| \alpha_{1,j} v_1 + \alpha_{2,j}v_2 + \cdots + \alpha_{d,j} v_d |^2 .
\]
\end{thm}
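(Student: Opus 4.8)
The plan is to expand the right-hand side directly using the inner product and exploit the unitarity of $U$ column by column. Write the $j$-th term as
\[
\normA| \textstyle\sum_{k=1}^d \alpha_{k,j} v_k |^2 = \sum_{k=1}^d \sum_{\ell=1}^d \overline{\alpha_{k,j}}\, \alpha_{\ell,j}\, \ip<v_k, v_\ell>,
\]
being careful about which index gets the complex conjugate (with the convention that $\ip<\cdot,\cdot>$ is conjugate-linear in the first argument, the coefficient of $\ip<v_k,v_\ell>$ is $\overline{\alpha_{k,j}}\alpha_{\ell,j}$). Then sum over $j$ and swap the order of summation to obtain
\[
\sum_{j=1}^d \normA| \textstyle\sum_k \alpha_{k,j} v_k |^2 = \sum_{k=1}^d \sum_{\ell=1}^d \Bigl(\sum_{j=1}^d \overline{\alpha_{k,j}}\,\alpha_{\ell,j}\Bigr) \ip<v_k, v_\ell>.
\]

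The key step is to recognise that the inner sum $\sum_{j=1}^d \overline{\alpha_{k,j}}\,\alpha_{\ell,j}$ is exactly the $(k,\ell)$ entry of $U U^*$ (the inner product of row $k$ and row $\ell$ of $U$). Since $U$ is unitary, $UU^* = I$, so this sum equals $\delta_{k,\ell}$. Substituting back collapses the double sum to $\sum_{k=1}^d \ip<v_k,v_k> = \sum_{k=1}^d \|v_k\|^2$, which is the left-hand side. One should double-check the row-versus-column convention: unitarity gives both $U^*U = I$ and $UU^* = I$, and the sum over the column index $j$ with a fixed pair of row indices $k,\ell$ is the $(k,\ell)$ entry of $UU^*$, so either way the identity holds; it is worth stating this explicitly to avoid a transposition error.

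There is essentially no obstacle here beyond bookkeeping — the only thing to be careful about is the placement of complex conjugates (getting the sesquilinearity convention consistent throughout) and making sure the indices $k,\ell$ range over rows while $j$ ranges over columns when invoking $UU^* = I$. The classical parallelogram identity is recovered by taking $d = 2$ and $U = \frac{1}{\sqrt 2}\begin{pmatrix} 1 & 1 \\ 1 & -1 \end{pmatrix}$, which is a useful sanity check to include.
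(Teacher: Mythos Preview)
Your proof is correct. It differs from the paper's argument, which is a one-line observation: letting $V$ be the $n\times d$ matrix with columns $v_1,\dots,v_d$, the identity is exactly $\|V\|_{\mathrm F}^2 = \|VU\|_{\mathrm F}^2$, which holds because right-multiplication by a unitary preserves the Frobenius norm. Your approach instead expands the inner products by hand and invokes the row-orthonormality relation $UU^*=I$ directly. The paper's proof is shorter and makes the structural reason (unitary invariance of a norm) transparent; your computation is more elementary and self-contained, requiring no familiarity with the Frobenius norm. Both are perfectly valid, and indeed your expansion is precisely what one would write out to verify the Frobenius-norm claim from scratch. One minor quibble: the sum $\sum_j \overline{\alpha_{k,j}}\,\alpha_{\ell,j}$ is strictly speaking the $(\ell,k)$ entry of $UU^*$ rather than the $(k,\ell)$ entry, but as you note this is immaterial since $UU^*=I$ is Hermitian.
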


\begin{proof}
Let $V$ be the $n\times d$ matrix with $v_j$ as the columns.
The above identity is equivalent to
\[
\normA| V |_{\mathrm F}^2 = \normA| VU |_{\mathrm F}^2,
\]
where $\norm|\cdot|_{\mathrm F}$ stands for the Frobenius norm.
The equality follows from the fact that unitaries preserve the Frobenius norm.
\end{proof}

The parallelogram identity is the special case of \rf{thm:parallelogram} with $U = H$, the Hadamard matrix.

\section{Quantum-Algorithmic Definitions}
\label{sec:definitions}
In this section, we give the main definition of a quantum algorithm solving a computational problem.
Let us very briefly recall the textbook definition of a quantum query algorithm.
A standard reference is a survey by Buhrman and de Wolf~\cite{buhrman:querySurvey}.
(Note, however, that it only deals with Boolean functions.  See also~\cite{cleve:quantumComplexityTheory}.)
The task is evaluation of a function $f\colon D\to [\ell]$ with domain $D\subseteq [q]^n$.  
The algorithm can perform arbitrary unitary transformations, as well as access the input string $x=(x_1,\dots,x_n)\in D$ via the \emph{standard input oracle}:
\[
O_x\colon \ket|i>\ket|b>\mapsto \ket|i>\ket|b \oplus x_i>,
\]
where $\oplus$ is the bit-wise XOR operation (one can also use modular addition).
The algorithm is said to compute the function $f$ if, for all $x\in D$, measuring the output register of the final state of the algorithm gives $f(x)$ with high probability.
The unitary operations are free, and each execution of $O_x$ costs one query.
The goal is to minimise the number of queries.

We separate the algorithm itself from the input and output conditions.
The algorithm becomes a map from operators on the input register ($O_x$) to operators on the space of the algorithm (the transformation performed by the algorithm).
The input condition is the input oracle $O_x$ given on a particular input $x\in D$, and the output condition is the set of admissible transformations performed by the algorithm.
Actually, we choose to treat input and output conditions similarly as sets of admissible transformations, which allows algorithms to be used as input oracles for other algorithms.
We keep the set of input labels $D$, but it need not be considered as the domain of a function any longer.

We describe the algorithmic part in \rf{sec:algorithm}, and the input/output conditions in \rf{sec:conditions}.

\subsection{Quantum Query Algorithm}
\label{sec:algorithm}

The overall form of a quantum query algorithm is similar to the textbook version.
When we use the term `algorithm' later in the paper, we mean a quantum query algorithm of the following form.

\begin{defn}[Quantum Query Algorithm]
\label{defn:algorithm}
Let $\cM$ and $\cH$ be vectors spaces.
A \emph{quantum query algorithm} in $\cH$ with an oracle in $\cM$ is a function 
which maps linear operators $O\colon \cM\to\cM$ into linear operators $\cA(O)\colon \cH\to\cH$, and which has the following form:
\begin{equation}
\label{eqn:algorithm}
\cA(O) = U_T\, \tO\, U_{T-1}\,\tO\,\cdots  U_{1}\, \tO\, U_0.
\end{equation}
Here, each $U_i\colon \cH\to\cH$ is a unitary that does not depend on $O$, and $\tO$ is some ``embedding'' of $O$ into $\cH$ of the form $\tO = (O\otimes \Ib)\oplus \Ic$, where $\Ib$ and $\Ic$ are identity transformations of some size.
\end{defn}

The operator $O$ is called the \emph{input oracle}, and each execution of $\tO$ is called a \emph{query}.
To make the definition simpler, we have chosen to have one fixed embedding $\tO$.
It is often more convenient to allow different embeddings at different queries.
The two definitions are equivalent, see \rf{sec:slicing}.

The spaces $\cM$ and $\cH$ are called the \emph{input} and the \emph{work} spaces of the algorithm, respectively.
It is sometimes useful to consider also the \emph{output} subspace $\cK\subseteq \cH$ of the algorithm.
Conceptually, it contains the ``interesting'' part of $\cA(O)$, while its orthogonal complement in $\cH$ is the ``scratch space'' of the algorithm.
If not specified, we may always take $\cK=\cH$.

If $\cA(O)\xi=\tau$ for some $\xi,\tau\in\cH$ we say that the algorithm $\cA$ performs transformation $\xi\mapsto\tau$ on the oracle $O$.
We call $\xi$ the \emph{initial} and $\tau$ the \emph{terminal} state of the algorithm\footnote{
The letters $\xi$ and $\tau$ stand for \textgreek{ξεκίνημα} and \textgreek{τέλος}, respectively.
}.

Let us mention the main differences with the textbook definition.
The first difference is that we allow arbitrary input oracles $O$. 
The second difference is the ability to ``skip'' query: to apply $I^{\circ}$ on some part of the workspace.
Alternatively, in the language of circuits, we may apply a controlled version of $O$, not just $O$.
Textbook quantum query algorithms do skip queries, but it is usually done implicitly by setting up a state that does not change by any input oracle, e.g., a uniform superposition on the second register.
Since we allow arbitrary unitaries as oracles, this option is out of stock for us, and we have to skip queries explicitly.
Interestingly, this is exactly this feature that allows us to define quantum Las Vegas complexity.

Let us also emphasise the differences with the definition of a quantum query algorithm in~\cite{belovs:variations}.
The first one is what we call directionality.
The algorithm in~\cite{belovs:variations} is \emph{bidirectional}: it allows execution of both $\tO$ as well as its inverse $\tO^{-1}$.
The algorithm in \rf{defn:algorithm} is \emph{unidirectional}: it only allows execution of $\tO$.
For the standard input oracle, the difference is irrelevant since each standard input oracle is its own inverse (or can be easily constructed from it).
This is not true for arbitrary unitaries.
Note that unidirectionality is without any loss of generality, as it is possible to simulate bidirectional access to a unitary $O$ with unidirectional access to $O\oplus O^*$, see \rf{sec:bidirectionality}.

The second difference is that, while we still require that all $U_i$ are unitary, there is no more need to require the input oracle $O$ to be unitary.
We allow $O$ to be an arbitrary linear operator.
However, a more interesting choice is to consider contractions as input oracles.
Note that if $O$ is a unitary or a contraction, then $\cA(O)$ is also a unitary or a contraction, respectively.

\subsection{Input and Output Conditions}
\label{sec:conditions}
Here, we give a general take on input and output conditions imposed on a quantum algorithm, as well as define all types of conditions we consider in this paper. 
In Sections~\ref{sec:advStateConversion} and~\ref{sec:subspaceConversion}, we redefine the specific conditions under consideration.

The simplest way to impose requirements on an algorithm $\cA$ from \rf{defn:algorithm} is to specify its outputs $\cA(O_x)\colon \cH\to\cH$ on fixed inputs $O_x\colon \cM\to\cM$ as $x$ ranges over some set $D$.
There is nothing fundamentally wrong with this approach, except that it may be too specific.
For instance, the textbook definition has a very specific input oracle $O_x$, but a very vague output condition.
To capture this, for each $x\in D$, we define not one, but a collection $\cE_x$ of admissible linear transformations.
This gives the following very general definition.

\begin{defn}[Computational Problem]
A computational problem is given by a set of labels $D$, where, for each $x\in D$, a set of admissible inputs $\cO_x$ and a set of admissible outputs $\cE_x$ are specified.
A quantum algorithm $\cA$ \emph{solves} the problem if, for each $x\in D$ and each $O\in \cO_x$, it holds that $\cA(O)\in \cE_x$.
\end{defn}

We treat input and output uniformly, therefore, we use a term \emph{admissible set} for both sets of admissible inputs and outputs. 
We define different types of admissible sets, which are depicted in \rf{fig:models}.
We do this in terms of $\cE_x$, the output space $\cK$, and the workspace $\cH$.
The definitions for input conditions are similar with $\cE_x$ replaced by $\cO_x$, and $\cK$ and $\cH$ by $\cM$.
We say that we have a problem of type$_1$ with input oracles of type$_2$, if all $\cE_x$ are of type$_1$ and all $\cO_x$ are of type$_2$.

\myfigure{\label{fig:models}}%
{Various types of input/output conditions.  
The ones at the top are more general in the sense that the lower ones are special cases thereof as indicated by arrows.
The ones to the right are more restrictive in the sense that they impose more restrictions on the set of admissible operators.
}%
{
\negbigskip
\[
\begin{tikzpicture}
\node at (0,0) (SbC) {Subspace Conversion} ;
\node at (-5,-1) (StC) {State Conversion} ;
\node at (5,-1.1) (General) {\txt{General Input Oracle\\(Unitary/Contraction)}} ;
\node at (-5,-2.25) (StG) {State Generation} ;
\node at (-5,-3.5) (Function) {Function Evaluation} ;
\graph[use existing nodes]{
SbC<-StC<-StG<-Function;
SbC<-General;
};
\end{tikzpicture}
\]
}

Most types of admissible sets considered in this paper are special cases of the following type.
\begin{defn}[Subspace Conversion]
\label{defn:subspaceConversion}
An admissible set $\cE_x$ is \emph{subspace conversion} if there exists a linear transformation $S_x\colon \cK_x\to\cK$ defined on some linear subspace $\cK_x\subseteq \cK$ such that $\cE_x$ consists of all extensions of $S_x$ to a linear operator on $\cH$.

There are two main cases.
In the \emph{isometric} case, we only allow unitaries in $\cE_x$.  Of course, this makes sense only if $S_x$ is an isometry itself.
In the general \emph{non-isometric} case, we assume that $S_x$ are contractions, and require the operators in $\cE_x$ to be contractions as well.
\end{defn}

Therefore, subspace conversion is specified by its action on the output space $\cK$, but accommodates any workspace $\cH$ as long as it is a superspace of $\cK$.
The vectors in $\cK\setminus\cK_x$ are interpreted as ones where the action of the algorithm is not defined.  Note that it is possible that $A\in\cE_x$ maps such vectors outside of $\cK$.

There are two main special cases of subspace conversion, which are more important than the general case itself.
The first one is when $\cK_x = \cK$.
In this case, $S_x$ gives a linear map from $\cK$ to itself.
The algorithm can still use a larger workspace, but it is completely inaccessible from outside, therefore, it makes sense to identify $\cE_x$ with $S_x$.
This is our default type of input condition, which we call \emph{general input oracle}.
Alternatively, we call it unitary, contraction, or linear input oracle in dependence on the type of $S_x$.
For the output condition, we call it \emph{unitary} or \emph{contraction implementation}.

The second important special case is when $\cK_x$ is one-dimensional.
We call it \emph{state conversion}, and denote by $\xi_x\mapsto \tau_x$, meaning that $A\xi_x = \tau_x$ for all $A\in\cE_x$.
This is our default type of output condition.

There are important special cases of state conversion as well.
\emph{State generation} is state conversion when all the initial states $\xi_x$ are equal to some predefined state $\ket|0>$.
The most widely used version is \emph{function evaluation}, which is state generation when $\tau_x$ is an element of the computational basis $\ket|f(x)>$ for some function $f\colon D\to K$.

It is also possible to define approximate and non-coherent versions of above conditions.
In the $\eps$-approximate version, we take the $\eps$-neighbourhood of $\cE_x$.
For instance, an algorithm $\cA$ solves an $\eps$-approximate version of state conversion $\xi_x \mapsto \tau_x$ if, for all $x\in D$ and all $O\in\cO_x$, we have $\| \cA(O)\xi_x - \tau_x\|\le \eps$.
We say that $\cA$ solves the non-coherent version of the problem, if $\cA(O)\xi_x = \tau_x\otimes\zeta$ for some junk state $\zeta$ that may depend on $x$ and $O$.
Finally, we can consider $\eps$-approximate non-coherent version as well, where we require that $\| \cA(O)\xi_x - \tau_x\otimes\zeta\|\le \eps$.

Function evaluation is usually considered in the approximate non-coherent case, as it is required that measuring the output register of the final state gives $f(x)$ with bounded error.
However, for bidirectional oracles, coherent and non-coherent versions differ at most by a factor of 2 in complexity.  Indeed, it is possible to evaluate the function non-coherently, copy the final output into a new register, and run the program in reverse.
For unidirectional oracles, however, this simple trick does not work, as it is impossible to run the program in reverse.
It also does not work for state generation, as it is impossible to copy general quantum state.

\section{Quantum Las Vegas Query Complexity}
\label{sec:LasVegas}

In this section, we define the main notion of this paper: quantum Las Vegas query complexity.
Usually query complexity of the algorithm like in \rf{defn:algorithm} is defined as $T$: the number of invocations of the input oracle.
We will often call it \emph{Monte Carlo} query complexity in this paper.
Contrary to Monte Carlo complexity, Las Vegas complexity is input-dependent, as it depends both on the oracle $O$ and the initial state.

\subsection{Definition}
\label{sec:LasVegasDefinition}

Let $\cA$ be an algorithm as in \rf{defn:algorithm}, and $O\colon \cM\to\cM$ be an input oracle.
We need the following two linear transformations on the workspace $\cH$, which can be seen as partial executions of the algorithm.
For $t\in[T+1]$, let
\begin{equation}
\label{eqn:State}
\State_t (\cA, O) =  U_{t-1}\, \tO\, U_{t-2}\,\tO\,\cdots U_{1}\, \tO\, U_0
\end{equation}
be the transformation that maps the initial state $\xi$ to the state just before the $t$-th application of the input oracle $O$.
In particular, $\State_0 (\cA, O) = U_0$ and $\State_{T+1} (\cA, O) = \cA(O)$.

Recall that the query is of the form $\tO = (O\otimes \Ib)\oplus \Ic$.
Let $\Pi$ denote the projection on the part of the space processed by $O\otimes \Ib$.
The second transformation is
\begin{equation}
\label{eqn:Query}
\Query_t (\cA, O) = \Pi\State_t (\cA, O),
\end{equation}
which maps $\xi$ to the state processed by the input oracle on the $t$-th query.


\begin{defn}
\label{defn:LasVegas}
The \emph{quantum Las Vegas query complexity} of the algorithm $\cA$ on the input oracle $O\colon \cM\to\cM$ and the initial state $\xi\in \cH$ is defined as
\begin{equation}
\label{eqn:LasVegasDefinition}
L(\cA, O, \xi) = \sum_{t=1}^T \normA|\Query_t (\cA, O)\xi|^2.
\end{equation}
\end{defn}

Under usual assumptions of $O$ being unitary and $\|\xi\|=1$, the term $\norm|\Query_t (\cA, O)\xi|^2$ can be interpreted as the probability that the algorithm $\cA$ actually executes the query on the $t$-th step, and not skips it.
Therefore, $L(\cA, O, \xi)$ can be seen as the expected number of queries similarly to the definition of the randomized Las Vegas query complexity.
Las Vegas complexity does not exceed the Monte Carlo complexity $T$, but it can be much smaller.

The definition also encapsulates the case of algorithms with intermediate measurements as we briefly discuss here.
Assume we have a quantum algorithm $\cB$ with intermediate measurements.
The definition is similar to \rf{defn:algorithm} with the difference that the algorithm can perform measurements in the middle, so that the forthcoming unitaries $U_i$ depend on the outcome of the previous measurements.
In particular, the number of queries can also depend on the outcomes of the measurements.
Let $T(\cB, O, \xi)$ be the expected number of queries performed by $\cB$ on oracle $O$ and initial state $\xi$.
Such an algorithm can be turned into a usual algorithm $\cA$ as in \rf{defn:algorithm} by 
deferring the measurements to the end of the algorithm~\cite[Section 4.4]{chuang:quantum}.
It is not hard to see that $T(\cB, O, \xi)\ge L(\cA, O,\xi)$.
Note, however, that in the absence of measurements, the terminal state of $\cA$ differs from the terminal state of $\cB$.
In particular, $\cA$ computes the non-coherent version of a state conversion problem even if the original algorithm $\cB$ computes the coherent version.

\subsection{Multiple Input Oracles}
\label{sec:multipleOracles}

Assume we have $s$ input oracles, $O^{(1)},O^{(2)},\cdots,O^{(s)}$, where $O^{(i)}$ acts on some space $\cM^{(i)}$, and we want to provide the algorithm with access to all of them.
This can be seen as a special case of \rf{defn:algorithm}, where the algorithm has access to the combined oracle
\begin{equation}
\label{eqn:combinedOracle}
O = O^{(1)}\oplus O^{(2)}\oplus\cdots\oplus O^{(s)}
\end{equation}
acting on $\cM = \cM^{(1)}\oplus\cdots\oplus \cM^{(s)}$.
Indeed, it is possible to simulate a query to $O^{(i)}$ using one query to $O$, and it is possible to simulate a query to $O$ using one query to each of $O^{(i)}$.

Now suppose we want to measure complexity of each oracle $O^{(i)}$ individually.
In the case of Las Vegas complexity, this can be handled very naturally.
Decompose
\begin{equation}
\label{eqn:queryDecomposition}
\Query_t(\cA, O)\xi = \Query_t^{(1)}(\cA, O)\xi\;\oplus\; \Query_t^{(2)}(\cA, O)\xi\;\oplus\;\cdots\;\oplus\; \Query_t^{(s)}(\cA, O)\xi,
\end{equation}
where $\Query_t^{(i)}(\cA, O)\xi$ is the state processed by the $i$-th input oracle on the $t$-th query.

\begin{defn}
\label{defn:LasVegasMultipleOracles}
In the above settings, the \emph{Las Vegas complexity of the $i$-th input oracle} is defined as 
\[
L^{(i)}(\cA, O, \xi) = \sum_{t=1}^T \norm|\Query_t^{(i)} (\cA, O)\xi|^2.
\]
The Las Vegas complexity $L(\cA, O, \xi)$ of the algorithm $\cA$ on the composed input oracle $O$ from~\rf{eqn:combinedOracle} is the vector in $\bR^s$ consisting of the individual complexities $L^{(i)}(\cA, O, \xi)$.
\end{defn}

Almost all the results in this paper can be generalised to include this variation of Las Vegas complexity with minimal changes in the proof.
To make this explicit, we introduce the following piece of notation.
Let $v\in \cM\otimes \cW$ for some $\cW$.
We have the following imposed decomposition
\[
v = v^{(1)}\oplus v^{(2)}\oplus \cdots\oplus v^{(s)},
\]
with $v^{(i)}\in \cM^{(i)}\otimes\cW$.
We define
\begin{equation}
\label{eqn:Dnorm}
\Dnorm|v|^2 = \sB[ \normA|v^{(1)}|^2, \normA|v^{(2)}|^2,\dots, \normA|v^{(s)}|^2 ] \in \bR^s.
\end{equation}
This notation is chosen to emphasise similarity to $\|v\|^2$, and we never use $\Dnorm|v|$ alone.
This gives us almost the same definition for Las Vegas complexity as in~\rf{eqn:LasVegasDefinition}:
\begin{equation}
\label{eqn:LasVegasMultipleOracles}
L(\cA, O, \xi) = \sum_{t=1}^T \DnormB|\Query_t (\cA, O)\xi|^2.
\end{equation}

The upcoming sections can be read using one of the two assumptions:
\begin{itemize}
\item There is a single input oracle $O$.  
In this case, definitions from~\rf{sec:LasVegasDefinition} hold, $s=1$ everywhere, and $\Dnorm|v|^2$ stands for $\|v\|^2$.  In particular, Eq.~\rf{eqn:LasVegasDefinition} and~\rf{eqn:LasVegasMultipleOracles} are the same.
\item There are multiple input oracles.  
In this case, we use $O$ as in~\rf{eqn:combinedOracle} to combine them in a single input oracle.
We use \rf{defn:LasVegasMultipleOracles}, and $\Dnorm|v|^2$ is as in \rf{eqn:Dnorm}.
\end{itemize}
Most of the time, there is no difference between the two cases.

Let us list the properties of $\Dnorm|v|^2$ that we will need.
They follow easily from the definition~\rf{eqn:Dnorm}.
%
\begin{subequations}
\begin{align}
&\Dnorm|cv|^2 = |c|^2\cdot \Dnorm|v|^2 \label{eqn:DnormScaling} \\
&\Dnorm|u\oplus v|^2 = \Dnorm|u|^2 + \Dnorm|v|^2 \label{eqn:DnormSum}\\
&\text{If $O$ is a unitary of the form in~\rf{eqn:combinedOracle}, then $\Dnorm|v|^2 = \Dnorm|Ov|^2$.} \label{eqn:DnormUnitary}
\end{align}
\end{subequations}

Finally, the generalised parallelogram identity also holds.  Namely, in assumptions of~\rf{thm:parallelogram}:
\begin{equation}
\label{eqn:DnormParallelogram}
\Dnorm|v_1|^2 + \Dnorm|v_2|^2 + \cdots + \Dnorm|v_d|^2 
= 
\sum_{j=1}^d \DnormA| \alpha_{1,j} v_j + \alpha_{2,j}v_j + \cdots + \alpha_{d,j} v_d |^2 .
\end{equation}

\section{Properties of Las Vegas Complexity}
\label{sec:LasVegasProperties}
Apart from functional composition, which was the main focus of previous work, algorithms can be composed in many different ways, some of which we describe in this section.
Most of them were used before implicitly, and one of our goals was to formulate them in a more explicit way.

We also show that quantum Las Vegas complexity can handle these composition variants naturally.
Most of the results hold for linear input oracles, but we require unitary input oracles for some.

\subsection{Basic Properties}

\begin{prp}[Scaling]
\label{prp:scaling}
For every algorithm $\cA$, oracle $O\colon \cM\to\cM$, and states $\xi,\tau\in\cH$, if $\cA$ transforms $\xi\mapsto\tau$ on $O$, then it also transforms $c\xi\mapsto c\tau$ for all $c\in\bC$ and
\[
L(\cA, O, c\xi) = |c|^2 L(\cA, O, \xi).
\]
\end{prp}

\begin{proof}
This follows from the definition~\rf{eqn:LasVegasMultipleOracles} and~\rf{eqn:DnormScaling}.
\end{proof}

Note that while $\Query_t(\cA, O)$ is linear, it distorts inner products even if $O$ is a unitary.
Hence, there is no general way to relate $L(\cA, O, \xi + \xi')$ to $L(\cA, O, \xi)$ and $L(\cA, O, \xi')$ even for orthogonal $\xi$ and $\xi'$.
However, we have the following result.
\begin{prp}[Parallelogram Identity]
\label{prp:parallelogram}
For every algorithm $\cA$, oracle $O\colon \cM\to\cM$, states $\xi_1,\dots,\xi_d\in\cH$, and unitary $U$ as in~\rf{eqn:unitaryU}, we have
\[
L(\cA, O, \xi_1) + L(\cA, O, \xi_2) + \cdots + L(\cA, O, \xi_d) = 
\sum_{j=1}^d 
L\sA[\cA, O,  \alpha_{1,j} \xi_1 + \alpha_{2,j}\xi_2 + \cdots + \alpha_{d,j} \xi_d ].
\]
\end{prp}

\pfstart
The proof is analogous to \rf{prp:scaling}, but this time we use~\rf{eqn:DnormParallelogram}.
\pfend

\begin{prp}[Inversion]
\label{prp:inversion}
For every algorithm $\cA$ in $\cH$ with oracles in $\cM$, there exists the inverse algorithm $\cA^{-1}$ in the same spaces such that for every unitary input oracle $O\colon \cM\to\cM$, we have $\cA^{-1}(O^*) = \sA[\cA(O)]^{-1}$.
Moreover, if $\cA$ transforms $\xi\mapsto \tau$ on a unitary input oracle $O$, then
\[
L(\cA^{-1}, O^*, \tau) = L(\cA, O, \xi) .
\]
\end{prp}

\pfstart
The algorithm $\cA^{-1}$ is just the inverse of~\rf{eqn:algorithm}:
\[
\cA^{-1}(O) = U_0^*\, \tO\, U_{1}^*\,\tO\,\cdots U^*_{T-1}\, \tO\, U_T^*.
\]
The relation between Las Vegas query complexities follows from the identity
\[
\Query_t (\cA^{-1}, O^*)\tau = (O\otimes \Ib)\Query_{T+1-t} (\cA, O)\xi
\]
and~\rf{eqn:DnormUnitary}.
\pfend

\subsection{Slicing}
\label{sec:slicing}
Let us now describe possible alternatives to the \rf{defn:algorithm} of the quantum query algorithm, and show that they preserve Las Vegas complexity.
In particular, we show that we can replace the ``embedding'' $\tO = (O\otimes \Ib)\oplus \Ic$ with a simpler construction.

\begin{defn}[Sliced Algorithm]
\label{defn:sliced}
We call a quantum algorithm from \rf{defn:algorithm} \emph{sliced} if its query $\tO$ is of the form $\tO = O \oplus \Ic$.
\end{defn}

Clearly, a sliced algorithm is a special case of the general algorithm.
In the other direction, we have the following result.

\begin{prp}[Slicing]
\label{prp:slicing}
Every algorithm $\cA$ can be transformed into a sliced algorithm $\cA'$ such that, for every oracle $O\colon\cM\to\cM$ and initial state $\xi\in \cH$, we have $\cA(O) = \cA'(O)$ and $L(\cA, O, \xi) = L(\cA', O, \xi)$.
\end{prp}

\pfstart
Let $\tO = (O\otimes \Ib)\oplus \Ic$ be the query of the algorithm $\cA$.
We can rewrite
\begin{equation}
\label{eqn:slicingA}
O\otimes \Ib 
= O\oplus O \oplus \cdots \oplus O
= (O\oplus I \oplus \cdots \oplus I)
(I\oplus O \oplus \cdots \oplus I)
\cdots
(I\oplus I \oplus \cdots \oplus O),
\end{equation}
where there are $d = \dim \Ib$ multipliers on the right-hand side.
Conjugating each of them by a unitary, we can implement $\tO$ using $d$ queries to $\tO' = O\oplus {\Ic}'$.
This does not change the action of the algorithm.

Neither does this change its Las Vegas complexity.
Indeed, let $\psi_t = \Query_t(\cA, O)\xi$ be the state processed by $\tO$ on the $t$-th query, and $\psi_{t,1},\dots,\psi_{t,d}$ be the corresponding states processed by the oracle $\tO'$ on the right-hand side of~\rf{eqn:slicingA}.
Then
\begin{equation}
\label{eqn:slicingB}
\psi_t = \psi_{t,1}\oplus \psi_{t,2}\oplus \cdots \oplus \psi_{t,d},
\end{equation}
and the result follows from~\rf{eqn:DnormSum}.
\pfend

Note that the algorithm depends on the choice of slicing in~\rf{eqn:slicingA}, which in turn depends on the choice of the orthonormal basis in the space of $\Ib$.
By~\rf{eqn:slicingA}, this does not change the action of the algorithm, and by~\rf{eqn:slicingB}, this does not change its complexity.
Thus, we can further assume, without loss of generality, that a quantum algorithm is sliced.
We will use this in this section, as it simplifies some constructions and some proofs.

Also, note that the proof of \rf{prp:slicing} still works if we have different embeddings of $O$ on each query of the algorithm in \rf{defn:algorithm}.
Thus, this variant of the definition is also equivalent to \rf{defn:algorithm}.

\subsection{Space Extension}

\def\sS[#1]{\vcenter{\hbox{$\scriptstyle ($}} #1 \vcenter{\hbox{$\scriptstyle )$}}}

The following two results formally state that we can embed an algorithm into a larger space.
The work space extension is straightforward:

\begin{prp}[Work Space Extension]
\label{prp:workSpaceExtension}
Let $\cA$ be an algorithm in $\cH$ with oracles in $\cM$.
Then, for every $\cH'$, there is an algorithm $\cA\oplus I_{\cH'}$ in $\cH\oplus\cH'$ with oracles in $\cM$ such that for every $O\colon \cM\to\cM$, $\xi\in \cH$, and $\xi'\in\cH'$, we have
$\sS[\cA\oplus I_{\cH'}](O) = \cA(O)\oplus I_{\cH'}$
and
$L(\cA\oplus I_{\cH'}, O, \xi\oplus \xi') = L(\cA, O, \xi)$.
\end{prp}

\pfstart
Let $\cA$ be as in \rf{defn:algorithm}.
To get $\cA\oplus I_{\cH'}$, replace each $U_i$ with $U_i\oplus I_{\cH'}$, and each $\Ic$ from $\tO$ with $\Ic \oplus I_{\cH'}$.
\pfend


The input space extension is also possible.
For simplicity, we assume the algorithm $\cA$ is sliced.
We state the extension in a rather general way.
Essentially, we require that the input oracle in the extended space agrees with the original oracle on the states actually being queried.

Let $\cA$ be a sliced algorithm in $\cH$ with oracle $O\colon \cM\to\cM$, and $\cM\oplus \cM'$ be a superspace of $\cM$.
We construct an algorithm $\cA'$ in $\cH\oplus \cM'$ with oracle $O'\colon \cM\oplus \cM'\to\cM\oplus \cM'$ in the following way.
Each unitary $U_i$ is replaced by $U_i\oplus I_{\cM'}$ and each query $O\oplus \Ic$ is replaced by $O'\oplus \Ic$ acting in $\cH\oplus \cM'$.

\begin{prp}[Input Space Extension]
\label{prp:inputSpaceExtension}
In the above assumptions, 
if $O\colon\cM\to\cM$, $O'\colon \cM\oplus\cM'\to\cM\oplus\cM'$ and $\xi\in \cH$ are such that
\begin{equation}
\label{eqn:inputOracleConsistency}
O \Query_t(\cA, O)\xi = O'\Query_t(\cA, O)\xi
\end{equation}
for all $t$,
then
$\cA'(O')\xi = \cA(O)\xi$
and 
$L(\cA', O', \xi) = L(\cA, O, \xi)$.

In particular, Eq.~\rf{eqn:inputOracleConsistency} holds if $O' = O\oplus O''$ for some $O''$ acting in $\cM'$.
\end{prp}

\pfstart
Recall the operator $\State_t$ defined in~\rf{eqn:State}.
By induction on $t$, it is easy to show that $\State_t(\cA', O')\xi = \State_t(\cA, O) \xi$,
from which the statement follows.
\pfend

We will often identify the algorithms $\cA$ and $\cA'$ above.


\subsection{Sequential Composition and Direct Sum}

\begin{prp}[Sequential Composition]
\label{prp:sequential}
Assume there are two algorithms $\cA$ and $\cB$ in $\cH$ with oracles in $\cM$.
Then, there exists an algorithm $\cB*\cA$ such that for all $O\colon \cM\to\cM$ and $\xi\in\cH$ we have $\sS[\cB*\cA](O) = \cB(O)\cA(O)$ and 
\[
L(\cB*\cA, O, \xi) = L\sA[\cB, O, \cA(O)\xi] + L(\cA, O, \xi).
\]
\end{prp}

\pfstart
The algorithm $\cB * \cA$ is the algorithm $\cB$ applied after $\cA$.
\pfend

The condition that $\cA$ and $\cB$ share the same workspace seems restrictive, but it is necessary for the formal statement of \rf{prp:sequential}.
Usually it makes sense to assume that $\cA$ and $\cB$ share the same \emph{output} space $\cK$.
Then, in the spirit of \rf{defn:subspaceConversion}, the initial state $\xi$ is assumed to be such that both $\cA(O)\xi$ and $\sS[\cB * \cA](O)\xi$ are in $\cK$.
Let $\cW$ and $\cW'$ be the orthogonal complements of $\cK$ in the workspaces of $\cA$ and $\cB$, respectively (the ``scratch spaces'').
We can still apply \rf{prp:sequential} with $\cH = \cK\oplus \cW\oplus \cW'$ and assuming that the algorithms $\cA$ and $\cB$ are extended by the identity to $\cH$ using \rf{prp:workSpaceExtension}.

Also, \rf{prp:sequential} assumes that $\cA$ and $\cB$ use the same input oracle $O$.
This is without loss of generality.
Indeed, let $\cA$ and $\cB$ use different oracles $O'\colon\cM'\to\cM'$ and $O''\colon \cM''\to\cM''$.
Extend the input space of both algorithm to $\cM = \cM'\oplus\cM''$, and assume they both use the input oracle $O = O'\oplus O''$.
By \rf{prp:inputSpaceExtension}, the action of both algorithms does not change.
The same observations also applies to Propositions~\ref{prp:directSum} and~\ref{prp:tensorProduct} below.

\begin{prp}[Direct Sum]
\label{prp:directSum}
Let $\cA$ and $\cB$ be two algorithms in spaces $\cH$ and $\cH'$ respectively, and both with oracles in $\cM$.
Then, there exists an algorithm $\cA\oplus\cB$ in $\cH\oplus\cH'$ with oracles in $\cM$ such that for all $O\colon \cM\to\cM$, $\xi\in\cH$, and $\xi'\in\cH'$, we have $\sS[\cA\oplus \cB](O) = \cA(O)\oplus\cB(O)$ and
\begin{equation}
\label{eqn:directSum}
L(\cA\oplus\cB, O, \xi\oplus\xi') = L(\cA, O, \xi) + L(\cB, O, \xi').
\end{equation}
\end{prp}

\pfstart
The algorithm $\cA\oplus \cB$ can be implemented as $(I_\cH\oplus \cB)*(\cA\oplus I_{\cH'})$.
The result follows from Propositions~\ref{prp:workSpaceExtension} and~\ref{prp:sequential}.
\pfend

\subsection{Functional Composition and Tensor Product}
\label{sec:composition}
Functional composition is a more interesting way of composing algorithms.
It can be constructed with ease assuming the outer algorithm is sliced.

\begin{prp}[Functional Composition]
\label{prp:composition}
Let $\cA$ be a sliced algorithm in $\cH$ with oracles in $\cN$, and $\cB$ be an algorithm in $\cN$ with oracles in $\cM$.
Then, there exists an algorithm $\cA\circ \cB$ in $\cH$ with oracles in $\cM$ such that for all $O\colon \cM\to\cM$ and $\xi\in \cH$, we have
$\sS[\cA\circ\cB](O) = \cA(\cB(O))$ and
\begin{equation}
\label{eqn:composition}
L(\cA\circ\cB,O,\xi) = \sum_t L\s[{\cB, O, \Query_t \sA[\cA, \cB(O)]\xi}].
\end{equation}
\end{prp}

\pfstart
Denote by $O'\colon \cN\to\cN$ the input oracle of the outer algorithm $\cA$.
Replace each query $\tO' = O'\oplus \Ic$ of $\cA$ by a copy of the algorithm $\cB\oplus\Ic$ obtained via \rf{prp:workSpaceExtension}.
The theorem follows from \rf{prp:sequential} and the observation that the copy of the algorithm $\cB$ replacing the $t$-th query processes the state $\Query_t \sA[\cA, \cB(O)]\xi$.
\pfend

This result requires a number of comments.

First, it is usually convenient to assume that $\cN$ is the \emph{output} space of the algorithm $\cB$, not its workspace.
This can be achieved by applying \rf{prp:inputSpaceExtension}, \emph{cf.} the discussion after \rf{prp:sequential}.

Next, \rf{prp:composition} assumes the that algorithm $\cA$ has a single input oracle (while the algorithm $\cB$ and, consequently, $\cA\circ\cB$ can have multiple input oracles).
Let us now consider the case when $\cA$ has multiple input oracles $O^{(i)}\colon \cN^{(i)}\to\cN^{(i)}$.
For each $i$, let $\cB^{(i)}$ be an algorithm in $\cN^{(i)}$ with the oracle in $\cM$.
Using \rf{prp:directSum}, they can be combined into a single algorithm $\cB = \bigoplus_i \cB^{(i)}$ acting in $\cN = \bigoplus_{i} \cN^{(i)}$, which is the same space where the combined input oracle of $\cA$ acts.
Thus, using~\rf{eqn:directSum}, we obtain the following version of~\rf{eqn:composition}:
\begin{equation}
\label{eqn:compositionMultipleOracles}
L(\cA\circ\cB,O,\xi) = \sum_{i} \sum_t L\s[{\cB^{(i)}, O, \Query_t^{(i)} \sA[\cA, \cB(O)]\xi}].
\end{equation}

We will return to the above two comments in \rf{sec:compositionRevisited}.
\medskip

The final comment concerns slicing.
Namely, when applying \rf{prp:composition} to a non-sliced algorithm $\cA$ as in \rf{defn:algorithm}, it is first necessary to slice the latter using \rf{prp:slicing}.
Slicing is convenient here as it allows us to use Las Vegas complexity of $\cB$ on the state $\Query_t \sA[\cA, \cB(O)]\xi$ directly.
The downside of this approach is that the resulting algorithm depends on the way how we slice the query $O\otimes \Ib$ of the algorithm $\cA$ in~\rf{eqn:slicingA}.
As discussed before, this does not change the action of the algorithm.
However, it is not clear how it affects complexity.

In order to understand this, it suffices to consider one query of the outer algorithm.
That is, we can assume the composed algorithm is of the form $\cB\otimes \Ib$.
Applying \rf{prp:composition} to the sliced algorithm and using the following decomposition similar to~\rf{eqn:slicingB}:
\begin{equation}
\label{eqn:xiSlicing}
\xi = \xi_1 \oplus \xi_2 \oplus \cdots \oplus \xi_d
\end{equation}
with each $\xi_j$ in $\cN$, we get 
\begin{equation}
\label{eqn:algorithmTimesIdenitiy}
L\s[\cB\otimes\Ib, O, \xi] = \sum_{j=1}^d L(\cB, O, \xi_j).
\end{equation}

\begin{obs}
\label{obs:slicingIndependent}
The value of the right-hand side of~\rf{eqn:algorithmTimesIdenitiy} is independent from the choice of a particular slicing in~\rf{eqn:xiSlicing}.
\end{obs}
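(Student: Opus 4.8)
The plan is to express the right-hand side of~\rf{eqn:algorithmTimesIdenitiy} in a way that manifestly does not reference the slicing, using the parallelogram identity for Las Vegas complexity (\rf{prp:parallelogram}). A slicing of $\xi\in\cN^{\oplus d}$ is determined by a choice of orthonormal basis $e_1,\dots,e_d$ of $\bC^d$, so that $\xi_j = (I_\cN\otimes\bra{e_j})\,\xi$ in the identification $\cN^{\oplus d}\cong \cN\otimes\bC^d$. Two slicings correspond to two orthonormal bases, and these are related by a unitary change of basis $U = (\alpha_{i,j})$ on $\bC^d$: if $\xi = \bigoplus_j \xi_j$ in the first slicing, then in the second slicing the components are $\xi'_k = \sum_j \alpha_{j,k}\xi_j$ (up to conjugating $U$, which does not matter below). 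So it suffices to show that
\[
\sum_{j=1}^d L(\cB, O, \xi_j) = \sum_{k=1}^d L\sA[\cB, O, \textstyle\sum_{j=1}^d \alpha_{j,k}\,\xi_j].
\]

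This is precisely the statement of \rf{prp:parallelogram} applied to the algorithm $\cB$, the oracle $O$, the $d$ states $\xi_1,\dots,\xi_d\in\cN$, and the unitary $U$: the left-hand side is $\sum_j L(\cB,O,\xi_j)$ and the right-hand side is $\sum_j L\sA[\cB, O, \alpha_{1,j}\xi_1 + \cdots + \alpha_{d,j}\xi_d]$, which matches after relabelling the summation index. Hence the value of~\rf{eqn:algorithmTimesIdenitiy} is the same for the two slicings, and since any two slicings are connected by such a unitary, it is independent of the choice of slicing altogether.

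The only point that needs a little care is the bookkeeping of which unitary relates the two bases and in which direction the indices of $U$ run (rows versus columns, $U$ versus $U^*$); since \rf{prp:parallelogram} is stated for an arbitrary unitary and is symmetric under $U\mapsto U^*$, this is harmless. I expect no real obstacle here — the content is entirely in \rf{prp:parallelogram}, and this observation is just the remark that the sum appearing in functional composition is a parallelogram-type invariant.
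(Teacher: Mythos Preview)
Your proposal is correct and follows essentially the same approach as the paper: both arguments identify two slicings with two orthonormal bases of the $\Ib$-space related by a unitary $U$, express the new components as $\xi'_j = \sum_i \alpha_{i,j}\xi_i$, and invoke \rf{prp:parallelogram} to conclude that $\sum_j L(\cB,O,\xi_j) = \sum_j L(\cB,O,\xi'_j)$. The paper's proof is slightly more terse but otherwise identical in content.
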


Therefore, for a non-sliced algorithm $\cA$, we can write an analogue of~\rf{eqn:composition}:
\begin{equation}
\label{eqn:compositionNonSliced}
L(\cA\circ\cB,O,\xi) = \sum_t L\s[{\cB\otimes\Ib, O, \Query_t \sA[\cA, \cB(O)]\xi}],
\end{equation}
which is well-defined due to the above observation.
Similarly, in the case of multiple input oracles, we can write the following analogue of~\rf{eqn:compositionMultipleOracles}:
\[
L(\cA\circ\cB,O,\xi) = \sum_{i} \sum_t L\s[{\cB^{(i)}\otimes \Ib, O, \Query_t^{(i)} \sA[\cA, \cB(O)]\xi}].
\]

\pfstart[Proof of \rf{obs:slicingIndependent}]
In~\rf{eqn:xiSlicing}, we decomposed $\xi$ assuming some standard basis in the space of $\Ib$.
Let $u_1,\dots,u_d$ be another orthonormal basis of the same space.
Thus, we have a similar decomposition
\begin{equation}
\label{eqn:xiSlicingB}
\xi = \xi'_1\otimes u_1 + \xi'_2 \otimes u_2 + \cdots + \xi'_d\otimes u_d
\end{equation}
with $\xi'_1,\dots,\xi'_d\in\cN$, but this time based on the basis $u_1,\dots,u_d$.

Since the the basis $u_1,\dots,u_d$ is orthonormal the decompositions in~\rf{eqn:xiSlicing} and~\rf{eqn:xiSlicingB} are connected by a unitary $U$ in the following way, where we assume the unitary $U$ is given by~\rf{eqn:unitaryU}:
\[
\xi'_j = \alpha_{1,j} \xi_1 + \alpha_{2,j}\xi_2 + \cdots + \alpha_{d,j} \xi_d .
\]
Therefore, the complexity of the algorithm obtained when using the slicing based on $u$ is
\[
\sum_{j=1}^d L(\cB, O, \xi'_j) 
= \sum_{j=1}^d L(\cB, O, \alpha_{1,j} \xi_1 + \alpha_{2,j}\xi_2 + \cdots + \alpha_{d,j} \xi_d)
= \sum_{j=1}^d L(\cB, O, \xi_j) 
\]
by \rf{prp:parallelogram}.
\pfend

As a by-product we get a nice expression for a tensor product of algorithms.

\begin{prp}[Tensor Product]
\label{prp:tensorProduct}
Let $\cA$ and $\cB$ be two algorithms in spaces $\cH$ and $\cH'$ respectively, and both with oracles in $\cM$.
Then, there exists an algorithm $\cA\otimes\cB$ in $\cH\otimes\cH'$ with oracles in $\cM$ such that for all $O\colon \cM\to\cM$, we have $\sS[\cA\otimes \cB](O) = \cA(O)\otimes\cB(O)$.
Moreover, if $O$ is a unitary, then
\begin{equation}
\label{eqn:tensorProduct}
L(\cA\otimes\cB, O, \xi) = L(\cA\otimes I_{\cH'}, O, \xi) + L(I_{\cH}\otimes \cB, O, \xi),
\end{equation}
where the two terms on the right-hand side are defined as in~\rf{eqn:algorithmTimesIdenitiy}.
\end{prp}

\pfstart
We can implement $\cA\otimes \cB$ as $(I_\cH\otimes \cB)*(\cA\otimes I_{\cH'})$.
By \rf{prp:sequential}, we get
\[
L\sA[\cA\otimes\cB, O, \xi] = L\sA[\cA\otimes I_{\cH'}, O, \xi] + L\sA[I_{\cH}\otimes \cB, O, {\sS[\cA(O)\otimes I_{\cH'}]}\xi].
\]
Therefore, it remains to prove that
\[
L(I_{\cH}\otimes \cB, O, \xi) = L\sA[I_{\cH}\otimes \cB, O, {\sS[\cA(O)\otimes I_{\cH'}]}\xi].
\]
But this follows from \rf{obs:slicingIndependent}, as multiplication by a unitary $\cA(O)\otimes I_{\cH'}$ can be seen as a change of basis in $\cH$.
\pfend

If $O$ is not unitary, we do not get such a nice expression as~\rf{eqn:tensorProduct}.
For instance, the complexity depends on whether we implement $\cA\otimes \cB$ as $(I_\cH\otimes \cB)*(\cA\otimes I_{\cH'})$ or as $(\cA\otimes I_{\cH'})*(I_\cH\otimes \cB)$.

\section{
\texorpdfstring{Unidirectional Relative $\gamma_2$-bound}
{Unidirectional Relative gamma2-bound}
}
\label{sec:onegamma}

The variants of the adversary bound in~\cite{lee:stateConversion} and~\cite{belovs:variations} are formulated in terms of generalisations of the $\gamma_2$-norm.
The $\gamma_2$-norm was originally developed in the context of operator factorisation in Banach spaces~\cite[Section 13]{tomczak:banach}.
It has an independent formulation as the Schur (Hadamard) product operator norm~\cite{bhatia:positive}.
In the realm of theoretical computer science, it was first used in communication complexity~\cite{linial:complexityOfSignMatrices, linial:lowerBoundsInCommunicationComplexity, lee:directDiscrepancy}.
In the context of the quantum adversary, its generalisations appeared in~\cite{lee:stateConversion} and~\cite{belovs:variations} as filtered and relative $\gamma_2$-norms, respectively.

We have to generalise the latter in several directions.
First, in order to deal with unidirectional access to the input oracle, we have to define the unidirectional version of the bound, which we do in \rf{sec:singleObjective}.
The previous (bidirectional) case can be obtained as a special case, see \rf{sec:bidirectionality}.
Second, in order to switch from the worst-case complexity to the complete complexity profile, we have to introduce the multi-objective version of the bound.
Finally, we also have to modify the bound to capture the case of several input oracles.
All this is done in \rf{sec:multiObjective}.

In \rf{sec:gammaProperties}, we prove few basic properties of the unidirectional relative $\gamma_2$-bound, which we will need later in the paper.


\subsection{Single-Objective Version}
\label{sec:singleObjective}

\begin{defn}[Unidirectional relative $\gamma_2$-bound]
\label{defn:onegamma}
Let $\cK$ and $\cM$ be vector spaces, and $D$ be a set of labels.
Let $E = \{E_{xy}\}$ and $\Delta = \{\Delta_{xy}\}$, where $x,y\in D$, be two families of linear operators: $A_{xy}\colon \cK\to\cK$ and $\Delta_{xy}\colon \cM\to\cM$ that satisfy $E_{xy} = E_{yx}^*$ and $\Delta_{xy} = \Delta_{yx}^*$ for all $x,y\in D$.

The \emph{unidirectional relative $\gamma_2$-bound}
\[
\onegamma(E | \Delta) = \onegamma(E_{xy} \mid \Delta_{xy})_{x,y\in D},
\]
is defined as the optimal value of the following optimisation problem, where $V_x$ are linear operators:
\begin{subequations}
\label{eqn:onegammaMultidimensional}
\begin{alignat}{3}
&\mbox{\rm minimise} &\quad& \max\nolimits_{x\in D} \norm|V_x|^2 &\quad&\\
& \mbox{\rm subject to}&&  
E_{xy} = V_x^* (\Delta_{xy}\otimes I_{\cW}) V_y && \text{\rm for all $x, y\in D$;}  \label{eqn:onegammaMultidimensionalCondition}
\\
&&& \text{$\cW$ is a vector space}, &&
V_x \colon \cK\to \cM\otimes\cW.
\end{alignat}
\end{subequations}
\end{defn}

Depending on the context, we will denote by $\onegamma(E|\Delta)$ both the optimal value and the optimization problem itself.

We will be mostly using the following one-dimensional version, where each $E_{x,y} = e_{x,y}$ is a scalar.  Then, the bound reads as follows:
\begin{subequations}
\label{eqn:onegamma}
\begin{alignat}{3}
&\mbox{\rm minimise} &\quad& \max\nolimits_{x\in D} \norm|v_x|^2 &\quad&\\
& \mbox{\rm subject to}&&  
e_{xy} = \ipA<v_x,\;  (\Delta_{xy}\otimes I_{\cW}) v_y> && \text{\rm for all $x, y\in D$;} \label{eqn:onegammaCondition} \\
&&& \text{$\cW$ is a vector space}, &&
v_x \in \cM\otimes\cW.
\end{alignat}
\end{subequations}

The version~\rf{eqn:onegamma} is the one mentioned in \rf{fig:correspondence} in the introduction.  Its feasible solutions correspond to the algorithms solving the problem.
In order to prove lower bounds, we need another closely related notion.
Let us define the following generalisation of the Hadamard product.
Assume $X$ and $Y$ be some sets of labels, and $\Delta = (\Delta_{x,y})$, where $x\in X$ and $y\in Y$, be a set of matrices of the same dimensions.
For $\Gamma$, an $X\times Y$ matrix, we define $\Gamma\circ \Delta$ as an $X\times Y$ block matrix, where the block corresponding to $x\in X$ and $y\in Y$ is given by $\Gamma\elem[x,y]\Delta_{x,y}$.

\begin{defn}[Unidirectional subrelative $\gamma_2$-bound]
\label{defn:subgamma}
In assumptions of \rf{defn:onegamma}, the \emph{unidirectional subrelative $\gamma_2$-bound}
\[
\subgamma(E | \Delta) = \subgamma(E_{xy} \mid \Delta_{xy})_{x,y\in D},
\]
is defined as the optimal value of the following optimisation problem:
\begin{subequations}
\label{eqn:onegammadual}
\begin{alignat}{2}
&\mbox{\rm maximise} &\quad& \lambda_{\max} (\Gamma\circ E) \\
& \mbox{\rm subject to}&&  \lambda_{\max}(\Gamma\circ \Delta)\le 1 \label{eqn:onegammadualCondition}, 
\end{alignat}
\end{subequations}
where $\Gamma$ ranges over $D\times D$ Hermitian matrices.
Here $\lambda_{\max}$ stands for the largest eigenvalue of a Hermitian matrix.
\end{defn}

This version is similar to the dual of the relative $\gamma_2$-norm from~\cite{belovs:variations}, except that it has $\lambda_{\max}$ instead of the spectral norm.  The latter, in its turn, is similar to the negative-weighted adversary from~\cite{hoyer:advNegative}.
It is easy to show that~\rf{eqn:onegammadual} lower bounds~\rf{eqn:onegammaMultidimensional}.

\begin{thm}[Weak Duality]
\label{thm:weakduality}
For $E$ and $\Delta$ as in Definitions~\ref{defn:onegamma} and~\ref{defn:subgamma}, we have $\onegamma(E|\Delta)\ge \subgamma(E|\Delta)$.
\end{thm}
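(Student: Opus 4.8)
The plan is to prove weak duality by the standard trick of pairing a feasible solution of the primal minimisation problem~\rf{eqn:onegammaMultidimensional} with a feasible solution $\Gamma$ of the dual maximisation problem~\rf{eqn:onegammadual}, and showing directly that the dual objective $\lambda_{\max}(\Gamma\circ E)$ is bounded above by the primal objective $\max_x\|V_x\|^2$. Once such an inequality holds for an arbitrary pair of feasible solutions, taking the infimum over primal solutions and supremum over dual solutions yields $\onegamma(E|\Delta)\ge\subgamma(E|\Delta)$.

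First I would fix a feasible $\{V_x\}$ with associated space $\cW$ and a Hermitian $\Gamma$ with $\lambda_{\max}(\Gamma\circ\Delta)\le 1$. Let $z=(z_x)_{x\in D}$ be a unit eigenvector of $\Gamma\circ E$ realising $\lambda_{\max}(\Gamma\circ E)$, so that $\lambda_{\max}(\Gamma\circ E)=\sum_{x,y}\overline{\Gamma\elem[x,y]}\,\ip<z_x,E_{xy}z_y>$ — more precisely, writing out the quadratic form of the block matrix $\Gamma\circ E$ on the vector $\bigoplus_x z_x\in\cK^{\oplus D}$. Now substitute the primal constraint $E_{xy}=V_x^*(\Delta_{xy}\otimes I_\cW)V_y$ into this expression. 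Setting $w_x:=V_x z_x\in\cM\otimes\cW$, the eigenvalue becomes $\sum_{x,y}\Gamma\elem[x,y]\,\ip<w_x,(\Delta_{xy}\otimes I_\cW)w_y>$, which is exactly the quadratic form of the block matrix $(\Gamma\circ\Delta)\otimes I_\cW$ evaluated on $\bigoplus_x w_x$. Hence $\lambda_{\max}(\Gamma\circ E)\le \lambda_{\max}\bigl((\Gamma\circ\Delta)\otimes I_\cW\bigr)\cdot\sum_x\|w_x\|^2=\lambda_{\max}(\Gamma\circ\Delta)\cdot\sum_x\|V_x z_x\|^2$.

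Finally I would bound the two factors: $\lambda_{\max}(\Gamma\circ\Delta)\le 1$ by dual feasibility, and $\sum_x\|V_x z_x\|^2\le \sum_x\|V_x\|^2\|z_x\|^2\le (\max_x\|V_x\|^2)\sum_x\|z_x\|^2=\max_x\|V_x\|^2$ since $z$ is a unit vector. Combining gives $\lambda_{\max}(\Gamma\circ E)\le\max_x\|V_x\|^2$, as desired.

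The main technical point — and the step I expect to require the most care — is the manipulation $\lambda_{\max}\bigl((\Gamma\circ\Delta)\otimes I_\cW\bigr)=\lambda_{\max}(\Gamma\circ\Delta)$ together with the correct bookkeeping of the block structure: one must check that substituting $E_{xy}=V_x^*(\Delta_{xy}\otimes I_\cW)V_y$ into the block matrix $\Gamma\circ E$ genuinely factors as $W^*\bigl((\Gamma\circ\Delta)\otimes I_\cW\bigr)W$ for the block-diagonal operator $W=\bigoplus_x V_x$, so that the Rayleigh quotient inequality applies cleanly. The Hermiticity conditions $E_{xy}=E_{yx}^*$ and $\Delta_{xy}=\Delta_{yx}^*$ guarantee that $\Gamma\circ E$ and $\Gamma\circ\Delta$ are Hermitian whenever $\Gamma$ is, so that $\lambda_{\max}$ is well defined; this should be remarked on. Everything else is a routine chain of inequalities.
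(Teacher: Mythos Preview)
Your proposal is correct and essentially identical to the paper's proof: both establish the factorisation $\Gamma\circ E = V^*\bigl[(\Gamma\circ\Delta)\otimes I_\cW\bigr]V$ for the block-diagonal $V=\bigoplus_x V_x$, then bound $\lambda_{\max}(\Gamma\circ E)$ via the Rayleigh quotient using $\|V\|^2=\max_x\|V_x\|^2$ and $\lambda_{\max}\bigl((\Gamma\circ\Delta)\otimes I_\cW\bigr)=\lambda_{\max}(\Gamma\circ\Delta)$. The only cosmetic difference is that you work with an explicit top eigenvector $z$ while the paper uses the variational formula $\lambda_{\max}(A)=\max_{\|v\|=1}v^*Av$ directly.
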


\pfstart
Assume we have a feasible solution $V_x$ to $\onegamma(E|\Delta)$.
From~\rf{eqn:onegammaMultidimensionalCondition}, we get that for every $D\times D$-matrix $\Gamma$:
\[
\Gamma\circ E = V^* \skA[(\Gamma\circ\Delta)\otimes I_\cW]\,V,
\]
where $V = \bigoplus_{x\in D} V_x$ is the block-diagonal matrix with the blocks $V_x$ on the diagonal.
Hence,
\begin{align*}
\lambda_{\max}(\Gamma\circ E) 
&= \max_{v: \|v\|=1} v^* (\Gamma\circ E) v
=  \max_{v: \|v\|=1} (Vv)^* \skA[(\Gamma\circ\Delta)\otimes I_\cW]\, Vv\\
&\le \|V\|^2 \cdot \lambda_{\max} \sA[(\Gamma\circ\Delta)\otimes I_\cW] 
= \max_{x\in D}\|V_x\|^2 \cdot \lambda_{\max} \s[\Gamma\circ\Delta]
\le \onegamma(A|\Delta)\lambda_{\max}(\Gamma\circ \Delta).\qedhere
\end{align*}
\pfend

Let us note, although we will not need it in this paper, that in the one-dimensional case it is possible to strengthen the previous theorem.
\begin{thm}
\label{thm:onegammadual}
If all $E_{x,y}=e_{x,y}$ are one-dimensional, then $\onegamma(E|\Delta)=\subgamma(E|\Delta)$.
\end{thm}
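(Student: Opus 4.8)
The plan is to establish strong duality $\onegamma(E|\Delta)=\subgamma(E|\Delta)$ in the one-dimensional case via semidefinite programming duality, so the main task is to cast both optimisation problems as a primal/dual SDP pair and verify that strong duality applies (Slater's condition or a similar qualification). First I would rewrite $\onegamma(E|\Delta)$ in a manifestly semidefinite form: introducing the Gram-type matrix $X$ with blocks $X_{xy} = V_x^*(\Delta_{xy}\otimes I_\cW)V_y$ is not directly an SDP because of the $\Delta_{xy}$ twist, so instead I would follow the standard trick from \cite{lee:stateConversion,belovs:variations}. Namely, stack the $v_x\in\cM\otimes\cW$ into a single operator and observe that a feasible family $\{v_x\}$ exists achieving objective value $\le\nu$ if and only if there is a positive semidefinite $D\times D$ block matrix whose $(x,y)$ block, when Hadamard-composed against the $\Delta$'s, reproduces $e_{xy}$, with all diagonal blocks bounded by $\nu$. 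Concretely, the condition $e_{xy}=\ipA<v_x,(\Delta_{xy}\otimes I_\cW)v_y>$ can be linearised by considering the vectors $w_x := $ (a purification/dilation of $v_x$) so that the twisted inner products become ordinary inner products of a related family; this is exactly where the assumption that $e_{xy}$ is scalar is used, since then $\Gamma\circ E$ is literally the Hadamard product $\Gamma\circ E$ with $E = (e_{xy})$ an ordinary matrix, and the constraint decouples into a clean eigenvalue condition.

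Next I would write down the Lagrangian dual. The primal minimises $\nu$ subject to the linear constraints $e_{xy}=\langle\text{blocks}\rangle$ and the semidefinite constraint $\nu I - (\text{diagonal block}) \succeq 0$; dualising the equality constraints with a Hermitian matrix $\Gamma=(\Gamma_{xy})$ and the psd constraints with psd multipliers, one collects terms and finds that the dual is precisely $\max \lambda_{\max}(\Gamma\circ E)$ subject to $\lambda_{\max}(\Gamma\circ\Delta)\le 1$ — i.e. $\subgamma(E|\Delta)$. Weak duality is already \rf{thm:weakduality}; the content here is the reverse inequality. For that I would invoke conic duality: the primal problem is strictly feasible in the relevant sense (one can always take $\cW$ large enough and perturb to make the diagonal blocks strictly below $\nu I$ while still satisfying the affine constraints, provided $E$ itself is feasible — and if $E$ admits no feasible solution at all then $\onegamma=+\infty$ and there is nothing to prove, while if it does, Slater's condition for the dual side gives attainment), hence strong duality holds and the optimal values coincide.

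The main obstacle I anticipate is the faithful translation of the twisted constraint $e_{xy}=\ipA<v_x,(\Delta_{xy}\otimes I_\cW)v_y>$ into a genuine SDP without the $\Delta$ twist obstructing the psd structure — in other words, showing that optimising over all auxiliary spaces $\cW$ and all $v_x$ is equivalent to optimising a psd matrix variable of bounded size, so that SDP duality theorems apply verbatim. This is a dimension-reduction / Gram-decomposition argument: any solution of the affine-plus-spectral system can be realised with $\cW$ of dimension at most $|D|\dim\cM$, after which the problem is a finite-dimensional SDP and classical results (e.g. the SDP strong duality theorem under Slater, as in \cite{barnum:advSpectral} and subsequent work) close the gap. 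I would also need to double-check the role of the Hadamard product identity $\Gamma\circ E = V^*((\Gamma\circ\Delta)\otimes I_\cW)V$ from the proof of \rf{thm:weakduality}, since running that computation in reverse — constructing $\Gamma$ from an optimal dual certificate and recovering a matching primal — is exactly the step that fails in the general (non-scalar $E$) case and works only because scalars commute past everything. Once these pieces are in place the equality follows immediately.
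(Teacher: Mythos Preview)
Your overall plan---recast $\onegamma$ as a finite SDP in a positive-semidefinite Gram variable, take the Lagrangian dual to obtain $\subgamma$, and close the gap via Slater---is the paper's approach. The gap is in where you verify Slater. You propose checking strict feasibility on the \emph{primal} side and dismiss the infeasible case with ``$\onegamma=+\infty$ and there is nothing to prove.'' But the theorem asserts equality, so in that case you would still need $\subgamma=+\infty$, i.e.\ that primal infeasibility forces dual unboundedness; this alternative is not automatic for general SDPs and you give no argument for it. Even when the primal is feasible, \emph{strict} feasibility is not obvious: the PSD constraint $X\succeq 0$ on the Gram variable cannot simply be perturbed to $X\succ 0$ without disturbing the affine equality constraints, since the diagonal blocks $\Delta_{xx}$ need not vanish.

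The paper sidesteps both issues by checking Slater on the \emph{dual} instead. Taking the Lagrange multiplier $\Lambda=0$ and $Y$ a positive multiple of the identity (normalised to $\tr Y=1$) makes every dual inequality strict for \emph{any} $E$ and $\Delta$, so strong duality holds unconditionally with no case split. Your reformulation step is also left vague (``purification/dilation,'' ``the standard trick''); the paper makes it explicit by taking the PSD variable to be $X=\tr_\cW(\overline{vv^*})$, after which the constraint $e_{xy}=\ip<v_x,(\Delta_{xy}\otimes I_\cW)v_y>$ becomes linear in $X$ and $\|v_x\|^2$ is read off the diagonal.
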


Therefore, the lower bound~\rf{eqn:onegammadual} is tight in this case.
The proof follows from strong duality and is a variant of the proof in~\cite{belovs:variations}.
It can be found in \rf{app:duality}.
Note that \rf{thm:weakduality} is not true in general, when $E_{x,y}$ are not one-dimensional.

\subsection{Multi-Objective Version}
\label{sec:multiObjective}

Since we consider Las Vegas complexity of each individual input, considering a single number as an output of an optimisation problem like~\rf{eqn:onegammaMultidimensional} and~\rf{eqn:onegamma} is too restrictive.
Here we define the multi-objective version of the same optimisation problem.
Additionally, we consider the version of the bound with multiple $\Delta$, which corresponds to the multiple-oracle case of \rf{sec:multipleOracles}.

For the latter, assume that $\cM$ from \rf{defn:onegamma} is decomposed as $\cM = \cM^{(1)}\oplus \cM^{(2)}\oplus \cdots \oplus \cM^{(s)}$, and each $\Delta_{xy}$ has a similar decomposition:
\begin{equation}
\label{eqn:DeltaDecomposition}
\Delta_{xy} = \Delta^{(1)}_{xy} \oplus \Delta^{(2)}_{xy}\oplus \cdots\oplus \Delta^{(s)}_{xy}
\end{equation}
with $\Delta^{(i)}_{xy}\colon \cM^{(i)}\to\cM^{(i)}$.
Additionally, we write $V_x\colon \cK \to \cM\otimes \cW$ from~\rf{eqn:onegammaMultidimensional} as a vertical stack of matrices
\[
V_x = \begin{pmatrix}
V_x^{(1)}\\V_x^{(2)}\\\vdots\\V_x^{(s)}
\end{pmatrix}
\]
with $V_x^{(i)}\colon \cK\to \cM^{(i)}\otimes \cW$.
We also generalise~\rf{eqn:Dnorm} to such matrices:
\[
\Dnorm|V_x|^2 = \sB[ \|V_x^{(1)}\|^2,  \|V_x^{(2)}\|^2, \dots, \|V_x^{(s)}\|^2 ].
\]

\begin{defn}[Multi-objective unidirectional relative $\gamma_2$ optimisation problem]
In notation of \rf{defn:onegamma} and the above assumptions on $\cM$ and $\Delta$, the \emph{multi-objective unidirectional relative $\gamma_2$ optimisation problem} 
\[
\onegamma(E | \Delta) = \onegamma(E_{xy} \mid \Delta_{xy})_{x,y\in D},
\]
is defined as follows:
\begin{subequations}
\label{eqn:onegammaMultiobjective}
\begin{alignat}{3}
&\mbox{\rm minimise} &\quad& (\Dnorm|V_x|^2)_{x\in D} &\quad&\\
& \mbox{\rm subject to}&&  
E_{xy} = V_x^* (\Delta_{xy}\otimes I_{\cW}) V_y && \text{\rm for all $x, y\in D$;}  \label{eqn:onegammaMultiobjectiveCondition}
\\
&&& \text{$\cW$ is a vector space}, &&
V_x \colon \cK\to \cM\otimes\cW.
\end{alignat}
\end{subequations}
\end{defn}

The bound~\rf{eqn:onegammaMultiobjective} is equivalent to~\rf{eqn:onegammaMultidimensional} with the only difference in the objective, which justifies the use of the same notation $\onegamma(E|\Delta)$.
Later we will almost exclusively use the multi-objective version. 

In the multiple-oracle case, we assume that the decomposition in~\rf{eqn:DeltaDecomposition} is implicit.  
Note that it only changes the objective, and does not change the constraints.
Also, the constraint~\rf{eqn:onegammaMultiobjectiveCondition} in this case is equivalent to
\[
E_{xy} = \sum_{i=1}^s \sA[V_x^{(i)}]^* \sA[\Delta_{xy}^{(i)}\otimes I_{\cW}] V_y^{(i)}.
\]

\begin{defn}
\label{defn:feasibleObjective}
For a feasible solution $V_x$ of~\rf{eqn:onegammaMultiobjective}, we call $(\Dnorm|V_x|^2)_{x\in D}$ the \emph{objective profile} of the feasible solution.
In the single-oracle case, it is a vector in $\bR^D$.
In the multiple-oracle case, it is a vector in $\bR^D\otimes \bR^s$.
The \emph{feasible objective space} of the optimization problem~\rf{eqn:onegammaMultiobjective} is the set of all objective profiles over all feasible solutions $V_x$ of~\rf{eqn:onegammaMultiobjective}.
\end{defn}

\begin{clm}
\label{clm:feasibleClosed}
If all $E_{x,y}=e_{x,y}$ are one-dimensional, the feasible objective space of~\rf{eqn:onegammaMultiobjective} is a topologically closed subset of $\bR^D\otimes \bR^s$.
\end{clm}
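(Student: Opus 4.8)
The plan is to reduce the statement to an assertion about a \emph{fixed} finite-dimensional space and then run a routine compactness argument. The only feature of~\rf{eqn:onegamma} that resists this reduction is that the auxiliary space $\cW$ is allowed to vary from one feasible solution to another, so the first step is a compression observation. Given any feasible $(v_x)_{x\in D}$ with $v_x\in\cM\otimes\cW$, let $\cW_0\subseteq\cW$ be the smallest subspace with $v_x\in\cM\otimes\cW_0$ for all $x$ — equivalently, the span of the $\cW$-parts of the Schmidt decompositions of the $v_x$ — so that $\dim\cW_0\le|D|\cdot\dim\cM$. Since each $\Delta_{xy}$ acts only on the $\cM$-factor and every $v_x\in\cM\otimes\cW_0$ automatically has $v_x^{(i)}\in\cM^{(i)}\otimes\cW_0$, regarding the $v_x$ as elements of $\cM\otimes\cW_0$ changes neither the constraints~\rf{eqn:onegammaCondition} nor the objective profile $(\Dnorm|v_x|^2)_{x\in D}$. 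Hence, fixing once and for all $\cW=\bC^N$ with $N=|D|\cdot\dim\cM$ (padding with zeros), the feasible objective space is exactly the image, under the map $\Phi\colon(v_x)_x\mapsto(\Dnorm|v_x|^2)_{x\in D}$, of the set
\[
\mathcal{F}=\bigl\{\,(v_x)_{x\in D}\in(\cM\otimes\cW)^D \ :\ e_{xy}=\ipA<v_x,(\Delta_{xy}\otimes I_\cW)v_y>\ \text{ for all }x,y\in D\,\bigr\}.
\]

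Next I would record two elementary facts in this fixed-$\cW$ picture. First, $\mathcal{F}$ is a closed subset of the finite-dimensional space $(\cM\otimes\cW)^D$, because each defining equation sets the constant $e_{xy}$ equal to a continuous (sesquilinear) function of $(v_x)_x$. Second, $\mathcal{F}$ is in general unbounded, so one cannot simply invoke ``a continuous image of a compact set is compact''; instead the useful property is that $\Phi$ is \emph{proper along} $\mathcal{F}$: for a feasible tuple, $\|v_x\|^2=\sum_{i=1}^s\|v_x^{(i)}\|^2$ is the sum of the $x$-row of $\Phi\bigl((v_x)_x\bigr)$, so the $\Phi$-preimage in $\mathcal{F}$ of any bounded subset of $\bR^D\otimes\bR^s$ is bounded.

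The conclusion then follows in the standard way. Let $p^{(k)}$ be a sequence in the feasible objective space with $p^{(k)}\to p$ in $\bR^D\otimes\bR^s$, and choose $(v_x^{(k)})_x\in\mathcal{F}$ — all living in the same $\cM\otimes\cW$, using the compression step — with $\Phi\bigl((v_x^{(k)})_x\bigr)=p^{(k)}$. A convergent sequence is bounded, so the properness observation makes the tuples $(v_x^{(k)})_x$ bounded in $(\cM\otimes\cW)^D$; pass to a subsequence converging to some $(v_x)_x$. Closedness of $\mathcal{F}$ gives feasibility of $(v_x)_x$, and continuity of $\Phi$ gives $\Phi\bigl((v_x)_x\bigr)=\lim_k p^{(k)}=p$, so $p$ belongs to the feasible objective space, which is therefore closed.

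I expect the compression observation to carry essentially all the content; the rest is a textbook ``bounded sequence in finite dimensions'' argument. The two points needing care are checking that passing to $\cW_0$ really leaves both the constraints and each $\|v_x^{(i)}\|^2$ untouched — this is exactly where one uses that $\Delta_{xy}$ and the decomposition $\cM=\bigoplus_i\cM^{(i)}$ act only on the $\cM$-factor — and remembering that $\mathcal{F}$ is unbounded, so boundedness of the solutions has to be deduced from convergence of their objective profiles rather than assumed. The one-dimensionality of the $E_{xy}$ enters only through the identity $\|v_x\|^2=\sum_i\|v_x^{(i)}\|^2$, which is what makes the objective profile control the size of a feasible solution.
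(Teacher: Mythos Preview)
Your proof is correct. Both your argument and the paper's follow the same outline---reduce to a fixed finite-dimensional parameter space, then observe that the objective profile controls the size of a feasible solution, so compactness applies---but the reduction step differs. You bound $\dim\cW$ explicitly via the Schmidt-support observation $\dim\cW_0\le|D|\cdot\dim\cM$; the paper instead passes to the semidefinite variable $X=\tr_{\cW}(\overline{vv^*})$, which lives in the fixed space of $(D\times M)\times(D\times M)$ positive semidefinite matrices regardless of $\cW$, and for which the objective profile is read off the diagonal and $\tr X$ equals the sum of all profile entries. The SDP change of variables is slightly slicker if one has already set up the formulation~\rf{eqn:alternativeonegamma} for the duality proof, while your compression is more self-contained. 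One small remark: your closing comment that one-dimensionality ``enters only through'' the identity $\|v_x\|^2=\sum_i\|v_x^{(i)}\|^2$ is not quite accurate---properness would still go through for operators via $\|V_x\|^2\le\sum_i\|V_x^{(i)}\|^2$---but this does not affect the validity of the argument for the case at hand.
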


This is also true in general, but we only need the one-dimensional case, which we prove in \rf{app:duality}.

Finally, for the multi-oracle case, we have the following variant of \rf{thm:weakduality}, which binds the matrix $\Gamma$ to the individual $\|V_x^{(i)}\|^2$.
It can be used to prove trade-offs between input oracles.

\begin{thm}
For every feasible solution $V_x$ to~\rf{eqn:onegammaMultiobjective} and every $D\times D$ Hermitian matrix $\Gamma$, we have
\[
\lambda_{\max} (\Gamma\circ E) \le \sum_{i=1}^s \lambda_{\max}(\Gamma\circ\Delta^{(i)}) \max_{x\in D} \normA|V_x^{(i)}|^2.
\]
\end{thm}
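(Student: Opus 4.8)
The plan is to mimic the proof of \rf{thm:weakduality} but keep track of each oracle component separately rather than collapsing everything into a single $\|V\|^2$. Given a feasible solution $V_x$ to~\rf{eqn:onegammaMultiobjective}, I would first record the consequence of the constraint~\rf{eqn:onegammaMultiobjectiveCondition} in its multi-oracle form, namely
\[
E_{xy} = \sum_{i=1}^s \sA[V_x^{(i)}]^* \sA[\Delta_{xy}^{(i)}\otimes I_{\cW}] V_y^{(i)}.
\]
Hadamard-multiplying by an arbitrary $D\times D$ Hermitian matrix $\Gamma$ and writing $V^{(i)} = \bigoplus_{x\in D} V_x^{(i)}$ for the block-diagonal matrix, this becomes
\[
\Gamma\circ E = \sum_{i=1}^s \sA[V^{(i)}]^* \skA[(\Gamma\circ\Delta^{(i)})\otimes I_\cW] V^{(i)}.
\]

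Next I would bound $\lambda_{\max}(\Gamma\circ E)$ by the variational characterisation. For a unit vector $v$,
\[
v^*(\Gamma\circ E)v = \sum_{i=1}^s (V^{(i)}v)^* \skA[(\Gamma\circ\Delta^{(i)})\otimes I_\cW] (V^{(i)}v)
\le \sum_{i=1}^s \normA|V^{(i)}v|^2 \,\lambda_{\max}\sA[(\Gamma\circ\Delta^{(i)})\otimes I_\cW],
\]
where in the last step I use that each summand is at most the largest eigenvalue of the corresponding Hermitian operator times the squared norm of $V^{(i)}v$. Since tensoring with $I_\cW$ does not change the largest eigenvalue, $\lambda_{\max}\sA[(\Gamma\circ\Delta^{(i)})\otimes I_\cW] = \lambda_{\max}(\Gamma\circ\Delta^{(i)})$. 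Finally, because $V^{(i)}$ is block-diagonal, $\normA|V^{(i)}v|^2 \le \|V^{(i)}\|^2 = \max_{x\in D}\normA|V_x^{(i)}|^2$ for every unit $v$. Taking the maximum over unit $v$ on the left yields
\[
\lambda_{\max}(\Gamma\circ E) \le \sum_{i=1}^s \lambda_{\max}(\Gamma\circ\Delta^{(i)})\max_{x\in D}\normA|V_x^{(i)}|^2,
\]
as claimed.

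One subtlety I would want to be careful about, though it is not really an obstacle, is the case where some $\lambda_{\max}(\Gamma\circ\Delta^{(i)})$ is negative: the bound
$(V^{(i)}v)^*[(\Gamma\circ\Delta^{(i)})\otimes I_\cW](V^{(i)}v) \le \|V^{(i)}v\|^2\lambda_{\max}$ still holds since it is just the statement that $A \preceq \lambda_{\max}(A) I$ for Hermitian $A$, so nothing special is needed. The only genuinely non-routine point is making sure the operator-component decomposition of the constraint is used correctly — i.e., that the cross terms between different $\cM^{(i)}$ vanish because $\Delta_{xy}$ is block-diagonal in~\rf{eqn:DeltaDecomposition}, so that $V_x^*(\Delta_{xy}\otimes I_\cW)V_y$ really does split as the sum over $i$ without interaction terms. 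Given that, the argument is a direct transcription of the single-objective weak duality proof with the single $\|V\|^2$ replaced by the per-oracle quantities, and I expect no real difficulty.
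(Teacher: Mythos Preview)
Your proposal is correct and is essentially the paper's proof verbatim: form the block-diagonal $V^{(i)} = \bigoplus_{x\in D} V_x^{(i)}$, express $\Gamma\circ E$ as $\sum_i (V^{(i)})^*[(\Gamma\circ\Delta^{(i)})\otimes I_\cW]V^{(i)}$, and bound each term in the variational formula by $\lambda_{\max}(\Gamma\circ\Delta^{(i)})\max_x\|V_x^{(i)}\|^2$.

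One small correction to your subtlety paragraph: the step that would actually need care when $\lambda_{\max}(\Gamma\circ\Delta^{(i)})<0$ is not the inequality $w^*Aw\le\lambda_{\max}(A)\|w\|^2$ (which, as you say, always holds) but the \emph{next} step, replacing $\|V^{(i)}v\|^2$ by the larger quantity $\max_x\|V_x^{(i)}\|^2$, since multiplying by a negative number would reverse that inequality. The paper's proof elides this in exactly the same way; in the intended adversary setting with unitary oracles one has $\Delta^{(i)}_{xx}=0$, so $\tr(\Gamma\circ\Delta^{(i)})=0$ and hence $\lambda_{\max}(\Gamma\circ\Delta^{(i)})\ge 0$, and the issue does not arise.
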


\pfstart
Again, let $V = \bigoplus_{x\in D} V_x$ and $V^{(i)} = \bigoplus_{x\in D} V_x^{(i)}$.
The proof follows the proof of \rf{thm:weakduality} with the following change at the last step:
\begin{align*}
\lambda_{\max}(\Gamma\circ E) 
&= \max_{v: \|v\|=1} v^* (\Gamma\circ E) v
=  \max_{v: \|v\|=1} (Vv)^* \skA[(\Gamma\circ\Delta)\otimes I_\cW]\, Vv\\
&= \max_{v: \|v\|=1} \sum_{i=1}^s (V^{(i)}v)^* \skA[(\Gamma\circ\Delta^{(i)})\otimes I_\cW]\, V^{(i)}v\\
&\le \sum_{i=1}^s \lambda_{\max}(\Gamma\circ\Delta^{(i)}) \max_{x\in D} \normA|V_x^{(i)}|^2.\qedhere
\end{align*}
\pfend

\subsection{Properties}
\label{sec:gammaProperties}

Let us list some properties of the unidirectional relative $\gamma_2$-bound.
We are mostly interested in the case when the right-hand side $\Delta_{x,y}$ is fixed, and the left-hand side $e_{x,y}$ is variable.

\begin{prp}
\label{prp:onegammaLinearity}
Assume that $w$ and $w'$ are in the feasible objective spaces of optimization problems $\onegamma\sA[e_{x,y} | \Delta_{x,y}]_{x,y\in D}$ 
and 
$\onegamma\sA[e'_{x,y} | \Delta_{x,y}]_{x,y\in D}$, respectively.
Then, for all real $c, c'\ge 0$, the vector $cw + c' w'$ is in the feasible objective space of $\onegamma\sA[c_1e^{(1)}_{x,y} + c_2e^{(2)}_{x,y} \mid \Delta_{x,y}]_{x,y\in D}$.
\end{prp}

\pfstart
Assume that 
$\sA[v_x]_{x\in D}$ is a feasible solution to $\onegamma\sA[e_{x,y} \mid \Delta_{x,y}]_{x,y\in D}$
with objective profile $w$, and $v'_x$ is defined similarly for $w'$.
Then, 
$\sA[\sqrt{c}v_x \oplus \sqrt{c'}v'_x]_{x\in D}$ is a feasible solution to $\onegamma\sA[c e_{x,y} + c'e'_{x,y} \mid \Delta_{x,y}]_{x,y\in D}$
with objective profile $cw + c' w'$ by~\rf{eqn:DnormScaling} and~\rf{eqn:DnormSum}.
\pfend

We will often have that $\Delta_{xx}=0$ for all $x\in D$.
In this case, it is easy to specify all families of $e_{x,y}$ that have a feasible solution.

\begin{prp}
\label{prp:onegammaFeasible}
Assume that $\Delta_{x,y}$ in addition to $\Delta_{x,y}=\Delta_{y,x}^*$ satisfy $\Delta_{x,x}=0$ for all $x$.
Let $(e_{x,y})_{x,y\in D}$ be any collection of complex numbers such that $e_{x,y}=e^*_{y,x}$ for all $x,y\in D$, and $e_{x,y}=0$ whenever $\Delta_{x,y}=0$.
Then the optimisation problem $\onegamma\sA[e_{x,y}\mid \Delta_{x,y}]_{x,y\in D}$ has a feasible solution.
\end{prp}

\pfstart
Due to \rf{prp:onegammaLinearity}, it suffices to consider the case when there exist distinct $x_0, y_0\in D$ such that $e_{x_0,y_0} = e_{y_0,x_0}^*$ are the only non-zero $e_{x,y}$.
By the assumption, $\Delta_{x_0,y_0}\ne 0$.
Hence, there exist vectors $u,v$ such that $u^*\Delta_{x_0,y_0}v = 1$.
Define the feasible solution as $v_{x_0} = u$, $v_{y_0} = e_{x_0,y_0} v$, and $v_{x}=0$ otherwise.
\pfend

Describing the set of $e_{x,y}$ that have feasible solution in the general case (when $\Delta_{x,x}\ne 0$) is more complicated, and we do not do it here.

\begin{prp}
\label{prp:upwardsClosed}
If $\Delta_{x,x}=0$ for all $x$, then the feasible objective space of $\onegamma(E|\Delta)$ is upwards closed, i.e., if $w\in \bR^D\otimes \bR^s$ is in the feasible objective space, and $w'\ge w$ (component-wise), then $w'$ is also in the feasible objective space.
\end{prp}

\pfstart
Let $V_x$ be a feasible solution such that $\Dnorm|V_x|=w_x$ for all $x$.
Let $\cM_x$ be pairwise orthogonal copies of $\cM$ that are also orthogonal to $\cM\otimes\cW$.
There exist $V'_x\colon \cK\to(\cM\otimes\cW)\oplus \cM_x$ such that their projection to $\cM\otimes\cW$ agree to $V_x$ and $\Dnorm|V'_x| = w'_x$.

They also satisfy the constraints~\rf{eqn:onegammaMultiobjectiveCondition} with the properly enlarged $\cW$.
Indeed, for $x\ne y$ this follows from the orthogonality of $\cM_x$ and $\cM_y$, and for $x=y$ this follows from $\Delta_{x,x}=0$.
\pfend

\section{Adversary Bound for State Conversion}
\label{sec:advStateConversion}
This is the central section of the paper, in which we define the adversary bound for state conversion with general input oracles, and prove that it equals Las Vegas complexity.
In \rf{sec:StateConversionDef}, we restate the state conversion problem, define its Las Vegas complexity, and formulate the corresponding adversary optimisation problem.
In \rf{sec:intuition}, we explain the intuition behind the latter definition.
Sections~\ref{sec:mainlower} and~\ref{sec:mainupper} are devoted to the two main technical results: a lower bound for exact, and an upper bound for approximate state conversion.
They are the cornerstones of what comes next.
In \rf{sec:equality}, we prove an upper bound for exact state conversion, thus showing that the adversary bound is precisely equal to Las Vegas complexity.
We finish the section with two examples.
In \rf{sec:ExampleOfTwoLabels}, we consider a simple example of a state conversion problem with $|D|=2$, and in \rf{sec:booleanFunction} we obtain the adversary bound of Boolean function evaluation.

The main results in Sections~\ref{sec:mainlower} and~\ref{sec:mainupper} hold even for general linear input oracles.  In \rf{sec:equality}, we have to assume that the input oracles are unitary.

\subsection{Definitions}
\label{sec:StateConversionDef}

Our choice of problem for this section is state conversion with general input oracles.
The motivation for this initial choice is as follows.
First, we want more control on the input oracle: we require that, for each $x\in D$, we have only one input oracle.
Second, we would like to have larger flexibility on the side of the algorithm, that is why we choose the state conversion problem, where we have to map one state into another.
In the beginning, we even do it approximately.
Going to more specific tasks, like state generation, does not give us anything.
We will extend the output and the input conditions to subspace conversion in the next section.

Let us give an explicit definition of state conversion, which follows from the general consideration of \rf{sec:conditions}.

\begin{defn}[State Conversion with General Input Oracles]
\label{defn:stateConversion}
Let $D$ be a set of labels, and $\cM$ and $\cK$ vector spaces.
For each $x\in D$, let $O_x\colon \cM\to\cM$ be a linear transformation.
A \emph{state conversion problem} is given by a collection of tuples $\xi_x\mapsto \tau_x$ where $x$ ranges over $D$ and $\xi_x,\tau_x\in \cK$.
Assume that $\cK$ is embedded in the space $\cH$ of a quantum algorithm $\cA$.
We say that the algorithm $\cA$ \emph{solves} the state conversion problem $\xi_x\mapsto \tau_x$ on input oracles $O_x$, if $\cA(O_x)\xi_x = \tau_x$ for all $x\in D$.
\end{defn}

Some of the results in this section hold even if we only assume that $O_x$ are linear transformations.
However, we will usually assume that $O_x$ are contractions or unitaries.
The lower bound result hold even for infinite $D$, but for the upper bounds it is crucial that $D$ is finite.

This definition also includes the case of multiple input oracles as described in \rf{sec:multipleOracles}.
Then, as in~\rf{eqn:combinedOracle}, each $O_x = O^{(1)}_x\oplus O^{(2)}_x\oplus \cdots O^{(s)}_x$ with $O_x^{(i)}$ acting in $\cM^{(i)}$.

We derive Las Vegas complexity of this problem from the general definition of \rf{sec:LasVegas}.
We study not only the worst-case complexity, but consider each input $x\in D$ individually.

\begin{defn}[Las Vegas complexity of State Conversion]
\label{defn:LAStateConversion}
Assume we have a state conversion problem is as in \rf{defn:stateConversion}, and an algorithm $\cA$ that solves it.
The Las Vegas complexity of the algorithm $\cA$ on input $x\in D$, is defined as
$L_x(\cA) = L(\cA, O_x, \xi_x)$.
The \emph{worst-case Las Vegas complexity} is defined as $\max_{x\in D} L_x(\cA)$, in which case, we assume we have a single input oracle.
The \emph{complexity profile} of the algorithm $\cA$ is the vector in $\bR^D\otimes \bR^s$ given by $L_D(\cA) = (L_x(\cA))_{x\in D}$.
The \emph{feasible complexity space} of the problem is a subset of $\bR^D\otimes \bR^s$ which is the set of all complexity profiles of the algorithms solving the problem.
\end{defn}


Let us now define the adversary optimisation problem corresponding to the state conversion problem.
It is a generalisation of the version of the adversary bound from~\cite{belovs:variations} to the case of unidirectional input oracles.

\begin{defn}[Adversary Optimisation Problem]
Assume $\xi_x\mapsto \tau_x$ is a state conversion problem with unidirectional input oracles $O_x\colon \cM\to\cM$, as $x\in D$.
Its \emph{adversary optimisation problem} is the following unidirectional $\gamma_2$ optimisation problem:
\begin{equation}
\label{eqn:adversary}
\onegamma\sB[\ip<\xi_x, \xi_y> - \ip<\tau_x,\tau_y> \mid I_\cM - O_x^*O_y]_{x,y\in D}.
\end{equation}
\end{defn}

Here $I_\cM$ stands for the identity on $\cM$, but we often omit this subscript.
It is easy to see that the constraints of \rf{defn:onegamma} are satisfied, and this is a legitimate unidirectional $\gamma_2$-optimisation problem.
Let us write it down explicitly as we will be using it quite extensively.
We consider it as a multi-objective optimisation problem.
\begin{subequations}
\label{eqn:advExplicit}
\begin{alignat}{3}
&\mbox{\rm minimise} &\quad& \sA[\Dnorm|v_x|^2]_{x\in D} &\quad&\\
& \mbox{\rm subject to}&&  
\ip<\xi_x, \xi_y> - \ip<\tau_x,\tau_y> = \ipA<v_x,\;  ((I-O^*_xO_y)\otimes I_{\cW}) v_y> && \text{\rm for all $x, y\in D$;}  \label{eqn:advExplicitCondition}\\
&&& \text{$\cW$ is a vector space}, \qquad
v_x \in \cM\otimes\cW.
\end{alignat}
\end{subequations}

\subsection{Intuition}
\label{sec:intuition}
Let us describe the intuition behind the bound~\rf{eqn:adversary}.
For a collection of vectors $(\xi_x)_{x\in D}$, let $G_\xi$ denote the corresponding Gram matrix: $G_\xi\elem[x,y] = \ip<\xi_x, \xi_y>$.
Two collections of vectors can be transformed one into another by a unitary transformation if and only if they have the same Gram matrix.
Since unitary transformations are free in quantum query algorithms, we may replace collections of vectors by the corresponding Gram matrices.
For instance, rather than saying that an algorithm solves state conversion $\xi_x\mapsto\tau_x$, we can say that it transforms $G_\xi$ into $G_\tau$, or write $G_\tau\mapsto G_\xi$.

Then, the left-hand side of~\rf{eqn:advExplicitCondition} gives the difference of the corresponding Gram matrices $G_\xi - G_\tau$.
The right-hand side
\begin{equation}
\label{eqn:SuperQuery}
\ipA<v_x,\;  ((I - O^*_xO_y)\otimes I_{\cW}) v_y> =
\ipA<v_x, v_y> -
\ipA<(O_x\otimes I_{\cW}) v_x,\;  (O_y\otimes I_{\cW}) v_y>
\end{equation}
gives the change in the Gram matrix when the state $v_x$ is processed by the oracle $O_x$.
The objective value $\Dnorm|v_x|^2$ can be interpreted as the corresponding Las Vegas complexity.
Therefore, the optimisation problem seeks for the best possible states $v_x$ to be processed by the oracle to get the required change in the Gram matrix.
The issue of how to get the states $v_x$ to the input oracle is ignored here.
Therefore, one can see the adversary optimisation problem as a semi-definite relaxation of a quantum query algorithm.

To get the lower bound in \rf{sec:mainlower}, we accumulate changes in the Gram matrix like in~\rf{eqn:SuperQuery} over all the queries made by the algorithm.
The proof closely follows the proof of Theorem 10 from~\cite{belovs:variations}.
To get the algorithm in \rf{sec:mainupper}, we repeatedly apply a scaled down version of the query in~\rf{eqn:SuperQuery}.
The result is that the Gram matrix slowly slides close to the line connecting $G_\xi$ to $G_\tau$ in the cone of $D\times D$ semi-definite matrices.

\subsection{Lower Bound (For Exact Version)}
\label{sec:mainlower}

\begin{thm}
\label{thm:mainlower}
Assume $\cA$ is an algorithm that performs state conversion $\xi_x\mapsto \tau_x$ with unidirectional access to general linear oracles $O_x$ as $x\in D$.  
Then, its complexity profile $L_D(\cA)$ is in the feasible objective space of the adversary optimization problem~\rf{eqn:adversary}.
\end{thm}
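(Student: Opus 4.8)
The plan is to directly convert the algorithm into a feasible solution of the adversary optimisation problem~\rf{eqn:adversary}, following the strategy of Theorem~10 of~\cite{belovs:variations} adapted to the unidirectional and multiple-oracle settings. First I would fix the algorithm $\cA$ in the form~\rf{eqn:algorithm} with query count $T$, and for each $x\in D$ consider the states $\psi_{t,x} = \Query_t(\cA, O_x)\xi_x$ processed by the input oracle on the $t$-th query, living in $\Pi(\cH) \cong \cM\otimes\cW$ for a suitable ``rest-of-workspace'' register $\cW$ (the image of $\Pi$). Taking into account multiple oracles, $\psi_{t,x}$ decomposes as $\psi^{(1)}_{t,x}\oplus\cdots\oplus\psi^{(s)}_{t,x}$ as in~\rf{eqn:queryDecomposition}. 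The candidate feasible solution is the direct sum over time steps,
\[
v_x \;=\; \psi_{1,x}\oplus \psi_{2,x}\oplus\cdots\oplus\psi_{T,x} \;\in\; \cM\otimes(\cW\otimes\bC^T),
\]
so that by~\rf{eqn:DnormSum} we immediately get $\Dnorm|v_x|^2 = \sum_{t=1}^T \Dnorm|\Query_t(\cA,O_x)\xi_x|^2 = L_x(\cA)$, which is exactly the complexity profile we want as the objective.

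Next I would verify the constraint~\rf{eqn:advExplicitCondition}, i.e. that $\ip<\xi_x,\xi_y> - \ip<\tau_x,\tau_y> = \ip<v_x, ((I - O_x^*O_y)\otimes I)v_y>$ for all $x,y\in D$. The key idea is a telescoping / progress-function argument: track the inner product $p_t := \ip<\State_t(\cA,O_x)\xi_x,\ \State_t(\cA,O_y)\xi_y>$ as $t$ runs from $0$ to $T+1$. At $t=0$ we have $p_0 = \ip<U_0\xi_x, U_0\xi_y> = \ip<\xi_x,\xi_y>$ since $U_0$ is unitary and input-independent; at $t=T+1$ we have $p_{T+1} = \ip<\cA(O_x)\xi_x, \cA(O_y)\xi_y> = \ip<\tau_x,\tau_y>$. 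Between consecutive steps, the input-independent unitaries $U_t$ preserve the inner product, so the only change comes from applying $\tO$, and that change is precisely
\[
p_t - p_{t+1}' \;=\; \ip<\psi_{t,x},\psi_{t,y}> - \ip<(O_x\otimes I)\psi_{t,x},\ (O_y\otimes I)\psi_{t,y}> \;=\; \ip<\psi_{t,x},\ ((I - O_x^*O_y)\otimes I)\psi_{t,y}>,
\]
using the $\tO = (O\otimes\Ib)\oplus\Ic$ structure: on the $\Ic$-part nothing changes, on the $(O\otimes\Ib)$-part the state $\psi_{t,\cdot}$ is mapped to $(O\otimes I)\psi_{t,\cdot}$. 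Summing over $t=1,\dots,T$ telescopes to $\ip<\xi_x,\xi_y> - \ip<\tau_x,\tau_y> = \sum_t \ip<\psi_{t,x}, ((I-O_x^*O_y)\otimes I)\psi_{t,y}>$, and by the block structure of the direct sum defining $v_x$ the right-hand side is exactly $\ip<v_x, ((I-O_x^*O_y)\otimes I)v_y>$, as required.

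The main obstacle, such as it is, is bookkeeping rather than depth: one must be careful that the register $\cW$ into which the $v_x$ live is the \emph{same} for all $x$ (it is, since it is determined by $\cA$, not by $x$), that the decomposition $\Pi(\cH)\cong\cM\otimes\cW$ respects the $O\otimes\Ib$ action uniformly in $x$ (this is what lets $I - O_x^*O_y$ appear with a single $I_\cW$ factor), and — in the multiple-oracle case — that the per-oracle decomposition~\rf{eqn:DeltaDecomposition} of $I_\cM - O_x^*O_y = \bigoplus_i (I_{\cM^{(i)}} - (O_x^{(i)})^*O_y^{(i)})$ is compatible with the decomposition of $v_x$ into $v_x^{(i)}$, so that the objective comes out componentwise as the multiple-oracle complexity profile of \rf{defn:LasVegasMultipleOracles}. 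I would also remark that since $\State_0(\cA,O) = U_0$ is unitary and $\xi_x$ need not be normalised, the argument uses only that $U_t$ are unitary and $O_x$ are linear; in particular no contraction or unitarity assumption on $O_x$ is needed, matching the stated generality. Finally, nothing here requires $D$ to be finite, so the lower bound holds for infinite $D$ as claimed in \rf{sec:StateConversionDef}.
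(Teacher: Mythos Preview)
Your proposal is correct and follows essentially the same route as the paper's own proof: define the feasible solution as the direct sum over time steps of the queried states, compute $\Dnorm|v_x|^2 = L_x(\cA)$ directly from~\rf{eqn:DnormSum}, and verify~\rf{eqn:advExplicitCondition} by telescoping the inner product $\ip<\State_t(\cA,O_x)\xi_x,\,\State_t(\cA,O_y)\xi_y>$, using that the input-independent unitaries preserve it and that the query contributes exactly $\ip<\psi_{t,x},((I-O_x^*O_y)\otimes I)\psi_{t,y}>$. Aside from minor indexing choices (you start the telescope at $t=0$ rather than $t=1$, and there is a stray prime in ``$p_t - p_{t+1}'$''), this is the paper's argument verbatim, including the observations on linearity of $O_x$, the multiple-oracle case, and infinite $D$.
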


\pfstart
Denote for brevity
\[
\psi_{t,x} = \State_t(\cA, O_x)\xi_x =  U_{t-1}\, \tO_x\, U_{t-2}\,\tO_x\,\cdots U_{1}\, \tO_x U_0\xi_x,
\]
and let 
$
\psi'_{t,x} = \Query_t(\cA, O_x)\xi_x
$
be the state processed on step $t$ by the input oracle.
The operators $\State_t$ and $\Query_t$ are defined in~\rf{eqn:State} and~\rf{eqn:Query}, respectively.

We have $\ip<\psi_{1,x}, \psi_{1,y}> = \ip<\xi_x, \xi_y>$, and
$\psi_{T+1, x} = \tau_x$.  This gives
\[
\ip<\xi_x, \xi_y> - \ip<\tau_x, \tau_y>
=
\sum_{t=1}^T \sB[\ip<\psi_{t,x}, \psi_{t,y}> - \ip<\psi_{t+1,x}, \psi_{t+1,y}>].
\]
Next, for the effect of one query $\tO_x = (O_x\otimes\Ib)\oplus \Ic$:
\begin{equation}
\label{eqn:mainderivation}
\begin{aligned}
\ip<\psi_{t,x}, \psi_{t,y}> - \ip<\psi_{t+1,x}, \psi_{t+1,y}>
&= \ip<\psi_{t,x}, \psi_{t,y}> - \ip<\tO_x\psi_{t,x}, \tO_y\psi_{t,y}> \\
&= \ip<\psi_{t,x}', \psi_{t,y}'> - \ipA<(O_x\otimes \Ib)\psi'_{t,x}, (O_y\otimes \Ib)\psi'_{t,y}>\\
&= \ip<\psi_{t,x}', \psi_{t,y}'> - \ipA<\psi'_{t,x}, (O_x^*O_y\otimes \Ib)\psi'_{t,y}>\\
&= \ipA<\psi'_{t,x}, ((I_{\cM}-O_x^*O_y)\otimes \Ib)\psi'_{t,y}>.
\end{aligned}
\end{equation}
This means that we can take 
\begin{equation}
\label{eqn:v_x}
v_x = \bigoxplus_{t=1}^T \psi'_{t,x}
\end{equation}
as a feasible solution to~\rf{eqn:adversary}.
By~\rf{eqn:LasVegasMultipleOracles} and~\rf{eqn:DnormSum}, $\Dnorm|v_x|^2$ is equal to the Las Vegas complexity $L_x$, hence, this feasible solution has $L_D(\cA)$ as its objective profile.
\pfend

The theorem is proven for exact and coherent state conversion.
However, it can be used for approximate or non-coherent state conversion $\xi_x\mapsto\tau_x$ as well.
Indeed, the latter is equivalent to exact coherent state conversion $\xi_x\mapsto \tau_x'$ for \emph{some} $\tau'_x$ satisfying the corresponding closeness requirements to $\tau_x$ as explained in \rf{sec:conditions}.
See \rf{sec:permutationInversion} for an example.

\mycutecommand\TotalQuery{\cV}
The map $\xi_x\mapsto v_x$ is important enough, so that we introduce a special notation for it:
\begin{equation}
\label{eqn:TotalQuery}
\TotalQuery(\cA, O)\colon \xi\mapsto \bigoplus_{t=1}^T \Query_t(\cA, O)\xi,
\end{equation}
which is a linear transformation.

\subsection{Upper Bound (For Approximate Version)}
\label{sec:mainupper}

\begin{thm}
\label{thm:mainupper}
Let $\xi_x\mapsto \tau_x$ be a state conversion problem and $O_x$ a general linear oracle, where $x$ ranges over a finite set $D$.
Assume $(v_x)_{x\in D}$ is a feasible solution to the adversary optimization problem~\rf{eqn:adversary}
with $L = \max_{x\in D} \|v_x\|^2$.
Then, for every $\eps>0$, there exists an algorithm $\cA$ with the following properties:
\itemstart
\item it solves state conversion $\xi^+_x \mapsto \tau^+_x$ with unidirectional access to $O_x$, where $\xi^+_x$ and $\tau^+_x$ are some states (not necessarily in $\cK$) satisfying $\|\xi_x^+ - \xi_x\|, \|\tau^+_x-\tau_x\|\le \eps$ for all $x\in D$;
\item its Monte Carlo query complexity is $T = \ceil[L/\eps^2]$;
\item for each $x$, its Las Vegas query complexity $L_x$ is $\Dnorm|v_x|^2$.
\itemend
Specifically, the algorithm transforms 
\begin{equation}
\label{eqn:mainupperConversion}
\xi^+_x = \xi_x \oplus \frac1{\sqrt T} v_x
\quad\longmapsto\quad 
\tau^+_x = \tau_x \oplus \frac1{\sqrt T} v_x
\end{equation}
in $T$ queries, and the state processed by the input oracle on each query is $v_x/\sqrt{T}$.
\end{thm}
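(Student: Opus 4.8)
The plan is to carry out the ``catalyst'' construction sketched in \rf{sec:ourResults}. I would take the workspace to be $\cH=(\cM\otimes\cW)\oplus\cH_0$, where $\cW$ is the auxiliary space of the feasible solution and $\cH_0\supseteq\cK$ is large, with a query $\tO=(O\otimes I_{\cW})\oplus I_{\cH_0}$ so that the summand $\cM\otimes\cW$ is exactly the part processed by the oracle. Throughout the run I keep the scaled-down feasible vector $\frac1{\sqrt T}v_x$ parked in that queried summand, while $\cH_0$ holds the ``working'' state, which starts at $\xi_x$ and is pushed a $1/T$ fraction of the way towards $\tau_x$ by each query followed by one input-independent unitary; after $T$ queries the working state has reached $\tau_x$ and the catalyst is untouched. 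Note this never uses unitarity of $O_x$, only the adversary constraint, which is why it is valid for general linear oracles.

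For the details I would first fix interpolating collections. For each $t=0,1,\dots,T$ the matrix $(1-\frac tT)G_\xi+\frac tT G_\tau$ is a convex combination of the Gram matrices $G_\xi$ and $G_\tau$, hence positive semidefinite, so I can pick vectors $\eta_{t,x}\in\cH_0$ realising it, with $\eta_{0,x}=\xi_x$ and $\eta_{T,x}=\tau_x$. I would then prescribe the intended run of the algorithm on $O_x$: the state before the $t$-th query is $\psi_{t,x}=\frac1{\sqrt T}v_x\oplus\eta_{t-1,x}$ (so that $\psi_{1,x}=\xi^+_x$), and the state immediately after the $t$-th query is $\psi'_{t,x}=\tO_x\psi_{t,x}=\frac1{\sqrt T}(O_x\otimes I_\cW)v_x\oplus\eta_{t-1,x}$. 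Since the queried summand of $\psi_{t,x}$ equals $\frac1{\sqrt T}v_x$ at every step, once the algorithm is shown to exist we get $\Query_t(\cA,O_x)\xi^+_x=\frac1{\sqrt T}v_x$ and hence, by \rf{eqn:DnormScaling}, $L(\cA,O_x,\xi^+_x)=\sum_{t=1}^T\Dnorm|\frac1{\sqrt T}v_x|^2=\Dnorm|v_x|^2$.

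The heart of the argument is to realise the transitions $\psi'_{t,x}\mapsto\psi_{t+1,x}$ (and a final $\psi'_{T,x}\mapsto\tau^+_x$) by \emph{input-independent} unitaries $U_t$, i.e.\ by one unitary serving every $x\in D$ at once. Since $D$ is finite, $\cH$ can be enlarged enough that it suffices, by the Gram-matrix criterion of \rf{sec:intuition}, to check that $\{\psi'_{t,x}\}_x$ and $\{\psi_{t+1,x}\}_x$ have the same Gram matrix. This is exactly where the adversary constraint \rf{eqn:advExplicitCondition} enters: expanding $\ip<(O_x\otimes I_\cW)v_x,(O_y\otimes I_\cW)v_y>$ via \rf{eqn:SuperQuery} and substituting the constraint shows that the $t$-th query shifts the Gram matrix of the run by precisely $\frac1T(G_\tau-G_\xi)$, which is exactly the increment $G_{\eta_t}-G_{\eta_{t-1}}$ baked into the $\eta$'s; so the two Gram matrices agree and the required $U_t$ exists. (For multiple oracles the same calculation applies with \rf{eqn:advExplicitCondition} read in the summed form $\ip<\xi_x,\xi_y>-\ip<\tau_x,\tau_y>=\sum_i\ipA<v_x^{(i)},((I-(O_x^{(i)})^*O_y^{(i)})\otimes I_\cW)v_y^{(i)}>$.) I expect this Gram-matrix bookkeeping — and in particular the point that a single unitary works for all labels simultaneously — to be the one genuine step; the rest is routine.

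Finally I would assemble $\cA(O)=U_T\,\tO\,U_{T-1}\cdots U_1\,\tO\,U_0$ (with $U_0$ a fixed reshuffling sending $\xi^+_x$ to $\psi_{1,x}$), which has the form of \rf{defn:algorithm} and makes exactly $T$ queries. By construction $\cA(O_x)\xi^+_x=\tau^+_x=\tau_x\oplus\frac1{\sqrt T}v_x$, which gives \rf{eqn:mainupperConversion}; and $\|\xi^+_x-\xi_x\|=\|\tau^+_x-\tau_x\|=\|v_x\|/\sqrt T\le\sqrt{L/T}$, which is at most $\eps$ once we set $T=\ceil[L/\eps^2]$. Combining this with the Las Vegas count of the previous paragraph yields all three asserted properties.
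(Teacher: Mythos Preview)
Your proof is correct and follows essentially the same catalyst-based approach as the paper. The one cosmetic difference is that the paper instantiates your abstract interpolants $\eta_{t,x}$ explicitly as $\frac{1}{\sqrt T}\bigl(\sum_{j<t}\tau_x\otimes|j\rangle+\sum_{j\ge t}\xi_x\otimes|j\rangle\bigr)$ in $\cK\otimes\bC^T$, which allows a \emph{single} unitary $U$ (the one with $U(v'_x\oplus\xi_x)=v_x\oplus\tau_x$, whose existence is exactly the Gram-matrix identity you verify) to be reused at every step by acting on the slice $(\cM\otimes\cW)\oplus\cK\otimes|t\rangle$, rather than constructing separate $U_t$.
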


\pfstart
Denote $v'_x = (O_x\otimes I_\cW) v_x$.  As in~\rf{eqn:SuperQuery}, we obtain
\[
\ip<\xi_x, \xi_y> - \ip<\tau_x, \tau_y>
= \ipA<v_x, ((I-O_x^*O_y)\otimes I_\cW)v_y>
= \ip<v_x, v_y> -  \ip<v_x',v_y'>
\]
for all $x,y\in D$.
This is equivalent to
\[
\ip<v_x',v_y'> + \ip<\xi_x, \xi_y> 
= \ip<v_x, v_y> + \ip<\tau_x, \tau_y>.
\]
This means that there exists a unitary transformation $U$ satisfying
\[
U(v'_x \oxplus \xi_x) = v_x \oxplus \tau_x
\]
for all $x\in D$.

Let us now describe the algorithm.
It depends on an integer parameter $T$, which is also its query complexity.
Its space is of the form $\cV\oplus \cK\otimes\cJ$, where $\cV$ is isomorphic to $\cM\otimes \cW$ and $\cJ$ is an $T$-qudit.

Up to unitaries, the transformation in~\rf{eqn:mainupperConversion} is equivalent to
\begin{equation}
\label{eqn:mainUpperTransform}
\frac1{\sqrt T} \ket \cV|v_x> + \ket \cK |\xi_x> \otimes \sC[\frac{1}{\sqrt T}\sum_{j=1}^{T} \ket \cJ|j>]
\quad\longmapsto\quad
\frac1{\sqrt T} \ket \cV|v_x> + \ket \cK |\tau_x> \otimes \sC[\frac{1}{\sqrt T}\sum_{j=1}^{T} \ket \cJ|j>].
\end{equation}
The algorithm performs this transformation by going through the states
\begin{equation}
\label{eqn:mainUpperA}
\psi_{t,x} = \frac1{\sqrt T} \ket \cV|v_x> 
+ \ket \cK |\tau_x> \otimes \sC[\frac{1}{\sqrt T}\sum_{j=1}^{t-1} \ket \cJ|j>]
+ \ket \cK |\xi_x> \otimes \sC[\frac{1}{\sqrt T}\sum_{j=t}^{T} \ket \cJ|j>]
\end{equation}
just before the $t$-th query.
Note that $\psi_{1,x}$ and $\psi_{T+1,x}$ are the states on the left- and the right-hand sides of~\rf{eqn:mainUpperTransform}, respectively.
On the $t$-th query, apply the input oracle $O_x\otimes I_\cW$ to the register $\cV$ in $\psi_{t,x}$.
This results in the state
\[
\frac1{\sqrt T} \ket \cV|v_x'> 
+ \ket \cK |\tau_x> \otimes \sC[\frac{1}{\sqrt T}\sum_{j=1}^{t-1} \ket \cJ|j>]
+ \ket \cK |\xi_x> \otimes \sC[\frac{1}{\sqrt T}\sum_{j=t}^{T} \ket \cJ|j>].
\]
Next, apply $U$ to the space $\cV \oplus \cK\otimes\ket \cJ|t>$, which gives $\psi_{t+1,x}$.
After $T$ iterations, we get the required transformation.

The differences $\xi^+_x-\xi_x$ and $\tau^+_x - \tau_x$ are both $v_x/\sqrt{T}$.
The norm of this vector is less than $\eps$ as long as $T \ge L/\eps^2$, as required.
Finally, the Las Vegas complexity on input $x$ is exactly
\[
T \cdot \Dnorm|\frac{v_x}{\sqrt T}|^2 = \Dnorm|v_x|^2.\qedhere
\]
\pfend

\myfigure{\label{fig:algorithm}}
{Visualisation of the algorithm $\cA$ used in \rf{thm:mainupper} and \rf{cor:main}.
The algorithm follows the straight line from $G_{\xi^+}$ to $G_{\tau^+}$, as the Gram matrices of the states in~\rf{eqn:mainUpperA} for various $t$ are uniformly placed on this line.
The wiggly line indicates the application of $\cA$ to $G_\xi$ in \rf{cor:main}.
The algorithm follows closely to the line connecting $G_{\xi^+}$ and $G_{\tau^+}$ and terminates in a point $G_{\tau'}$ close to $G_{\tau^+}$, but not, generally, $G_\tau$.
}
{
\negbigskip
\[
\begin{tikzpicture}
[
>={Latex[length=2.5mm]},
inner sep=1.5pt,
tochka/.style={circle, fill=black, radius=1pt}
]
\node [tochka, label=above:$G_{\xi^+}$] (xiplus) at (0, 0)  {} ;
\node [tochka, label=below right:$G_{\xi}$] (xi) at (0.25, -0.25)  {} ;
\node [tochka, label=above right:$G_{\tau^+}$] (tauplus) at (7, 2)  {} ;
\node [tochka, label=below right:$G_{\tau}$] (tau) at (7.25, 1.75)  {} ;
\node [tochka, label=above left:$G_{\tau'}$] (realend) at (6.7, 2.2)  {} ;
\draw [->] (xiplus) to (tauplus);
\draw [->, decorate, decoration={zigzag, amplitude=.5mm}] (xi) .. controls (2,1.2) and (5, 0.9)  .. (realend);
\end{tikzpicture}
\]
}

An immediate corollary is that for contraction oracles we can replace $\xi^+_x$ with the original $\xi_x$ and get essentially the same bound on Monte Carlo complexity.
\begin{cor}
\label{cor:main}
Assume the premises of \rf{thm:mainupper}, where the input oracles $O_x$ are contractions.
Then, for every $\eps>0$, there exists a quantum algorithm with Monte Carlo query complexity $\ceil[4L/\eps^2]$ that $\eps$-approximately and coherently solves state conversion $\xi_x \mapsto \tau_x$ with unidirectional access to $O_x$.
\end{cor}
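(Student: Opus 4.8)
The plan is to obtain \rf{cor:main} as an immediate consequence of \rf{thm:mainupper}: one runs the algorithm constructed there, but feeds it the ``bare'' initial state $\xi_x$ in place of the extended state $\xi^+_x$, and controls the resulting error by the smallness of the catalyst $\frac1{\sqrt T} v_x$. No new construction is needed; the work is entirely in the error bookkeeping and in the choice of the parameter $T$.

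Concretely, I would apply \rf{thm:mainupper} with parameter $T = \ceil[4L/\eps^2]$, producing an algorithm $\cA$ with unidirectional access to $O_x$, Monte Carlo complexity $T$, which maps $\xi^+_x = \xi_x\oplus\frac1{\sqrt T}v_x$ to $\tau^+_x = \tau_x\oplus\frac1{\sqrt T}v_x$ for every $x\in D$. Regard $\xi_x$ and $\tau_x$ as the vectors $\xi_x\oplus 0$ and $\tau_x\oplus 0$ in the workspace of $\cA$, so that $\normA|\xi^+_x-\xi_x| = \normA|\tau^+_x-\tau_x| = \|v_x\|/\sqrt T$. Because $O_x$ is a contraction, the operator $\cA(O_x)$ is a contraction as well (see the remark following \rf{defn:algorithm}), hence $\normA|\cA(O_x)\xi_x - \cA(O_x)\xi^+_x| \le \normA|\xi_x - \xi^+_x| = \|v_x\|/\sqrt T$. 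Combining this with $\cA(O_x)\xi^+_x = \tau^+_x$ and the triangle inequality,
\[
\normA|\cA(O_x)\xi_x - \tau_x| \;\le\; \normA|\cA(O_x)\xi_x - \tau^+_x| + \normA|\tau^+_x - \tau_x| \;\le\; \frac{2\|v_x\|}{\sqrt T} \;\le\; \frac{2\sqrt L}{\sqrt T} \;\le\; \eps,
\]
where the last step uses $T\ge 4L/\eps^2$ together with $\|v_x\|^2\le L$. Thus $\cA$ $\eps$-approximately and coherently solves the state conversion $\xi_x\mapsto\tau_x$ with unidirectional access to $O_x$ in $\ceil[4L/\eps^2]$ queries, as claimed.

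Given \rf{thm:mainupper}, there is essentially no obstacle here; the two points worth attention are (i) that $\cA(O_x)$ is genuinely a contraction, which is precisely where the hypothesis ``$O_x$ a contraction'' (rather than an arbitrary linear operator) is used — this step breaks down for general linear oracles — and (ii) the bookkeeping of the two error contributions, each of size $\|v_x\|/\sqrt T$, one coming from swapping $\xi^+_x$ for $\xi_x$ at the input and one from the leftover catalyst in $\tau^+_x$, which together account for the constant $4$ in $\ceil[4L/\eps^2]$ as opposed to the $\ceil[L/\eps^2]$ of \rf{thm:mainupper}.
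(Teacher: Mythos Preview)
Your proof is correct and follows essentially the same approach as the paper: run the algorithm of \rf{thm:mainupper} on the bare state $\xi_x$, use that $\cA(O_x)$ is a contraction to bound the propagated error by $\|v_x\|/\sqrt T$, and combine via the triangle inequality with the catalyst error $\|\tau^+_x-\tau_x\|=\|v_x\|/\sqrt T$. The paper phrases the final step as ``apply \rf{thm:mainupper} with $\eps/2$'' rather than setting $T=\ceil[4L/\eps^2]$ directly, but this is the same arithmetic.
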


This is a unidirectional version of the main technical result of~\cite{belovs:variations}.
This version has slightly better dependence on $\eps$, compared to~\cite{belovs:variations}, which had $O\s[\eps^{-2}\log\frac1\eps]$.
By the example due to Kothari~\cite{kothari:SDPCharacterization}, see~\cite{belovs:variations}, the dependence on $\eps$ is tight up to constant factors.

\pfstart[Proof of \rf{cor:main}]
Consider the same algorithm $\cA$ as in \rf{thm:mainupper}.
Denote by $\tau'_x$ its final state when executed on the initial state $\xi_x$ and the oracle $O_x$. See \rf{fig:algorithm} for an illustration.

Since $O_x$ is a contraction, the whole algorithm $\cA(O_x)$ is a contraction as well.
Hence,
\[
\|\tau'_x - \tau^+_x\| = \|\cA(O_x)\xi_x - \cA(O_x)\xi^+_x\| \le \|\xi_x - \xi^+_x\| \le \eps.
\]
Since $\|\tau^+_x -\tau_x\|\le \eps$, the triangle inequality gives us $\|\tau_x - \tau'_x\|\le 2\eps$.
Dividing $\eps$ by 2, we get the required algorithm.
\pfend

We also get a relation between Las Vegas and Monte Carlo complexities, which is a direct consequence of \rf{thm:mainlower} and \rf{cor:main}.

\begin{cor}
Assume there is an algorithm that solves state conversion $\xi_x\mapsto\tau_x$ with contraction input oracles $O_x$ exactly and has worst-case Las Vegas complexity $L$.
Then, there exists an algorithm that $\eps$-approximately and coherently solves the same problem and has Monte Carlo complexity $O(L/\eps^2)$.
\end{cor}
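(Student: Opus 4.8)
The plan is to obtain this corollary by composing \rf{thm:mainlower} with \rf{cor:main}, with essentially no new computation. First I would apply \rf{thm:mainlower} to the hypothesised algorithm $\cA$: since $\cA$ solves state conversion $\xi_x\mapsto\tau_x$ exactly with (unidirectional) access to the contraction oracles $O_x$, its complexity profile $L_D(\cA)$ lies in the feasible objective space of the adversary optimisation problem~\rf{eqn:adversary}. Unwinding the definition of the feasible objective space, this hands us a feasible solution $(v_x)_{x\in D}$ of~\rf{eqn:adversary} with $\|v_x\|^2 = L_x(\cA)$ for every $x\in D$. As the worst-case Las Vegas complexity of $\cA$ is $L$ — and the worst case is only defined in the single-oracle setting, so we are indeed dealing with one input oracle and $\Dnorm|v_x|^2 = \|v_x\|^2$ — we conclude $\max_{x\in D}\|v_x\|^2 = L$.

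Next I would feed $(v_x)$ into \rf{cor:main}. Its hypotheses are met verbatim: $\xi_x\mapsto\tau_x$ is a state conversion problem, the $O_x$ are contractions, and $(v_x)$ is a feasible solution to~\rf{eqn:adversary} whose associated quantity `$L$' (namely $\max_{x\in D}\|v_x\|^2$) equals the $L$ of the present statement. Hence, for every $\eps>0$, \rf{cor:main} produces a quantum algorithm of Monte Carlo query complexity $\ceil[4L/\eps^2] = O(L/\eps^2)$ that $\eps$-approximately and coherently performs $\xi_x\mapsto\tau_x$ with unidirectional access to $O_x$. This is exactly the algorithm claimed, completing the argument.

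I do not expect a genuine obstacle here; the corollary is a bookkeeping step that chains the ``algorithm $\Rightarrow$ feasible solution (objective preserved on each input)'' direction of \rf{thm:mainlower} with the ``feasible solution $\Rightarrow$ $O(L/\eps^2)$-query approximate algorithm'' direction of \rf{cor:main}. The only two points warranting a moment's care are bookkeeping in nature: (i) checking that the parameter `$L$' appearing in \rf{thm:mainupper}/\rf{cor:main} — defined as $\max_x\|v_x\|^2$ of the chosen feasible solution — coincides with the hypothesised worst-case Las Vegas complexity, which is precisely what the equalities $\|v_x\|^2 = L_x(\cA)$ from \rf{thm:mainlower} guarantee; and (ii) noting that \rf{cor:main} presumes $D$ finite (inherited from \rf{thm:mainupper}), so the statement should be read, or is implicitly intended, under that finiteness assumption.
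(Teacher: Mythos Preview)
Your proposal is correct and matches the paper's approach exactly: the paper states that this corollary ``is a direct consequence of \rf{thm:mainlower} and \rf{cor:main}'', which is precisely the two-step chaining you describe. Your additional bookkeeping remarks (matching the two uses of `$L$', and the implicit finiteness of $D$) are accurate and appropriate.
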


Since the distance $\|\tau_x - \tau'_x\|\le\eps$ converts to error $\eps^2$ after measurement, it is reasonable to say that the complexity of the algorithm is inversely linear in the error, which is similar to the randomised case.
This is the result mentioned in the introduction.

\subsection{Upper Bound For Exact Version}
\label{sec:equality}

The results of the previous two subsections are good enough for most purposes.
In particular, in \rf{thm:mainupper}, we can take $\xi^+_x$ and $\tau^+_x$ as close to $\xi_x$ and $\tau_x$ as we want and the Las Vegas complexity stays $\Dnorm|v_x|^2$.
However, in this section we improve this result and show how to perform \emph{exact} state conversion $\xi_x\mapsto \tau_x$ essentially in the same budget.
Together with the lower bound, \rf{thm:mainlower}, this shows that Las Vegas complexity is exactly equal to the adversary bound.

This is not only mathematically more satisfying and follows the convention of Las Vegas complexity to describe exact computation.
One of the motivations behind this result is that \emph{a priori} there is no good way to bind Las Vegas complexity on close initial states.  
If $\xi_x$ and $\xi^+_x$ are at a distance $\eps$, then, from general principles, it can be deduced that their Las Vegas complexities differ by at most $\eps T$.
But this is useless because $T$ generally is not bounded.
This poses problems if, for example, we want to compose Las Vegas programs using Propositions~\ref{prp:sequential} or~\ref{prp:composition} and we only have approximate versions of the subroutines.
\medskip


In this section, we extensively use the language of Gram matrices introduced in \rf{sec:intuition}.
Our first observation is that if both Gram matrices $G_\xi$ and $G_\tau$ are full rank, then state conversion can be performed exactly.

\begin{lem}
\label{lem:exactForPositive}
Assume the premises of \rf{thm:mainupper}.
If we additionally have that $G_\xi$ and $G_\tau$ are positive definite, then state conversion $\xi_x\mapsto \tau_x$ can be solved exactly with Las Vegas complexity $\Dnorm|v_x|^2$.
Moreover, the state processed by the oracle on each query is $v_x/\sqrt{T}$, where $T$ is the number of queries.
\end{lem}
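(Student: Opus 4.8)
The plan is to exploit positive-definiteness of $G_\xi$ and $G_\tau$ in order to \emph{hide} the catalyst $v_x/\sqrt T$ inside the initial and terminal states, rather than appending it as a separate register. Writing $G_v$ for the Gram matrix of the feasible solution, $G_v\elem[x,y]=\ip<v_x,v_y>$, and $G_\xi,G_\tau$ for the Gram matrices of $(\xi_x)_{x\in D}$ and $(\tau_x)_{x\in D}$, I would run the algorithm of \rf{thm:mainupper} not on the problem $\xi_x\mapsto\tau_x$ but on an auxiliary problem $\hat\xi_x\mapsto\hat\tau_x$ chosen so that feeding $\hat\xi_x\oplus\frac1{\sqrt T}v_x$ into it is the same, up to an input-independent unitary, as feeding $\xi_x$ itself, and symmetrically at the output end.

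First I would fix a positive integer $T$ sufficiently large that both $G_\xi-\frac1T G_v$ and $G_\tau-\frac1T G_v$ are positive semidefinite; this is possible because $G_\xi,G_\tau\succ0$, so subtracting a small enough positive semidefinite matrix preserves positive semidefiniteness. Being positive semidefinite, these two matrices are themselves Gram matrices of some vector families $(\hat\xi_x)_{x\in D}$ and $(\hat\tau_x)_{x\in D}$, which I would take in a common space $\cK'$. Then $\hat\xi_x\oplus\frac1{\sqrt T}v_x$ has Gram matrix $(G_\xi-\frac1T G_v)+\frac1T G_v=G_\xi$, hence differs from $\xi_x$ by an input-independent unitary $U_0$; symmetrically, $\hat\tau_x\oplus\frac1{\sqrt T}v_x$ has Gram matrix $G_\tau$ and differs from $\tau_x$ by an input-independent unitary $U_T$.

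Next I would observe that $(v_x)_{x\in D}$ remains a feasible solution to the adversary optimisation problem for the conversion $\hat\xi_x\mapsto\hat\tau_x$: its constraint reads $\ip<\hat\xi_x,\hat\xi_y>-\ip<\hat\tau_x,\hat\tau_y>=\ipA<v_x,\;((I-O_x^*O_y)\otimes I_\cW)v_y>$, and the left-hand side equals $(G_\xi-G_\tau)\elem[x,y]=\ip<\xi_x,\xi_y>-\ip<\tau_x,\tau_y>$, while the right-hand side does not involve $\xi$ or $\tau$ at all; moreover $\Dnorm|v_x|^2$ is the same in both problems. Applying \rf{thm:mainupper} with this $T$ to $\hat\xi_x\mapsto\hat\tau_x$ therefore yields an algorithm that maps $\hat\xi_x\oplus\frac1{\sqrt T}v_x\mapsto\hat\tau_x\oplus\frac1{\sqrt T}v_x$ in exactly $T$ queries, processing $v_x/\sqrt T$ on each query. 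Pre-composing with $U_0$ and post-composing with $U_T$ — after first extending everything to a common workspace via \rf{prp:workSpaceExtension}, so that $U_0,U_T$ are absorbed into the first and last unitaries — gives an algorithm performing $\xi_x\mapsto\tau_x$ \emph{exactly}, still in $T$ queries and still processing $v_x/\sqrt T$ on query $t$. Its Las Vegas complexity on input $x$ is then $\sum_{t=1}^T\Dnorm|\frac{v_x}{\sqrt T}|^2=\Dnorm|v_x|^2$, as claimed.

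I expect the only genuinely delicate point to be the first step — namely recognising that positive-definiteness is precisely what lets the \emph{free} input-independent unitary $U_0$ carve the catalyst out of $\xi_x$ instead of merely appending it — together with the routine but slightly fussy bookkeeping needed to embed the various workspaces compatibly so that $U_0$, the sub-algorithm of \rf{thm:mainupper}, and $U_T$ compose into a single algorithm of the form in \rf{defn:algorithm}.
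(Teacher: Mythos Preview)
Your proposal is correct and is essentially identical to the paper's own proof: the paper likewise chooses $T$ large enough that $G_\xi-\tfrac1T G_v$ and $G_\tau-\tfrac1T G_v$ are positive semidefinite, realises these as Gram matrices of auxiliary families (denoted $\xi_x^-,\tau_x^-$ there, your $\hat\xi_x,\hat\tau_x$), observes that $(v_x)$ is still feasible for the auxiliary problem, applies \rf{thm:mainupper} to it, and then conjugates by the input-independent unitaries witnessing the Gram-matrix equalities. Your write-up is, if anything, slightly more explicit about the role of $U_0$ and $U_T$ and the workspace bookkeeping.
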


\pfstart
Since $G_\xi, G_\tau\succ 0$, there exists an integer $T$ such that $G_\xi, G_\tau\succeq \frac{1}{T} G_v$.
Choose $\xi^-_x$ and $\tau^-_x$ so that their Gram matrices are $G_{\xi^-}=G_\xi-\frac{1}{T} G_v$ and $G_{\tau^-}=G_\tau-\frac{1}{T} G_v$. 
Note that, for all $x,y\in D$:
\[
\ipA<\xi_x^-, \xi_y^-> - \ipA<\tau_x^-, \tau_y^-> =
\ip<\xi_x, \xi_y> - \ip<\tau_x, \tau_y>.
\]
Hence, $v_x$ is a feasible solution to the adversary bound~\rf{eqn:adversary} for state conversion $\xi^-_x\mapsto \tau^-_x$ as well.
Applying \rf{thm:mainupper} to the latter and using~\rf{eqn:mainupperConversion}, we get an algorithm performing exact state conversion 
\[
\xi'_x = \xi_x^- \oplus \frac1{\sqrt T} v_x
\quad\longmapsto\quad 
\tau'_x = \tau_x^- \oplus \frac1{\sqrt T} v_x,
\]
where on each step the state $v_x/\sqrt{T}$ is processed by the oracle.
These collections of vectors have Gram matrices $G_\xi$,  and $G_\tau$, respectively.
Hence, they can be turned into $\xi_x$ and $\tau_x$, respectively, by unitaries.

In the illustration in \rf{fig:algorithm}, the $\xi_x$ and $\tau_x$ are equivalent to $\xi^+_x$ and $\tau^+_x$, and the algorithm follows the straight line connecting $G_{\xi^+}$ and $G_{\tau^+}$.
\pfend

The assumptions $G_\xi\succ 0$ and $G_\tau\succ 0$ are very strong, and almost never hold.
For instance, the state generation problem has $G_\xi$ of rank 1.
For (exact and coherent) Boolean function evaluation, the rank of $G_\tau$ is 2.
However, we will be able to apply this lemma by first ``pushing'' both $G_\xi$ and $G_\tau$ into the space of positive-definite matrices.

But for that we will need some additional assumptions.
First, we assume that all $O_x$ are unitaries.
Second, we get the point $(\|v_x\|^2)_{x\in D}$ only as a limit of points in the feasible complexity space.
This is because pushing $G_\xi$ and $G_\tau$ takes complexity, which can be made arbitrary small, but cannot be made zero.
In \rf{sec:ExampleOfTwoLabels}, we will show that the above two assumptions are necessary.
Finally, for simplicity we assume that all $O_x$ are pairwise distinct.
We will lift this restriction in \rf{sec:linearConsistency}.

Let us start with the question when a state conversion $\xi_x\mapsto\tau_x$ is possible at all.
If there is an algorithm performing the transformation, we say that $G_\tau$ is \emph{achievable} from $G_\xi$.
Denote by $\cR_\xi$ the real affine space of $D\times D$ Hermitian matrices $A$ satisfying $A\elem[x,x] = \|\xi_x\|^2$ for all $x\in D$.

\begin{lem}
\label{lem:cH_xi}
Assuming that the input oracles $O_x$ are unitary are pairwise distinct, we have:
\begin{itemize}
\itemsep0pt
\item[(a)] If the state conversion $\xi_x\mapsto \tau_x$ is possible, then $G_\tau \in \cR_\xi$ and $\cR_\tau = \cR_\xi$;
\item[(b)] For any two $M, M'\in\cR_\xi$, the optimisation problem
\[
\onegamma\sA[{M\elem[x,y] - M'\elem[x,y] \mid I- O_x^*O_y}]_{x,y\in D}
\]
has a feasible solution.
\end{itemize}
\end{lem}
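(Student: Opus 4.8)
The plan is to prove part (a) first, then derive part (b) from part (a) together with \rf{prp:onegammaFeasible}. For part (a), suppose an algorithm $\cA$ performs the state conversion $\xi_x \mapsto \tau_x$. Since each $\cA(O_x)$ is a unitary (because all $O_x$ are unitary, as noted after \rf{defn:algorithm}), we immediately have $\|\tau_x\| = \|\cA(O_x)\xi_x\| = \|\xi_x\|$, so $G_\tau \in \cR_\xi$. The crux is proving $\cR_\tau = \cR_\xi$, which amounts to showing $\|\tau_x\| = \|\xi_x\|$ for all $x$ — but that is exactly what we just established, so $\cR_\tau$ and $\cR_\xi$ are defined by the same diagonal constraints and hence coincide. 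The one subtlety is where the hypothesis that the $O_x$ are \emph{pairwise distinct} is used: I expect it is needed so that no two labels are forced to have the same image, i.e.\ so that the affine space $\cR_\xi$ is exactly the right object and there are no hidden off-diagonal constraints linking $x$ and $y$ when $O_x = O_y$ (that degenerate case being handled separately in \rf{sec:linearConsistency}). I would state this carefully: with distinct oracles, the only universal constraint forced on $G_\tau$ by achievability is the norm-preservation on the diagonal.

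For part (b), the goal is to show that for any $M, M' \in \cR_\xi$ the optimisation problem $\onegamma[M[x,y] - M'[x,y] \mid I - O_x^* O_y]_{x,y \in D}$ is feasible. Here I would invoke \rf{prp:onegammaFeasible}: its hypotheses are that $\Delta_{x,y} = \Delta_{y,x}^*$ and $\Delta_{x,x} = 0$, and that the left-hand sides $e_{x,y}$ satisfy $e_{x,y} = e_{y,x}^*$ and vanish whenever $\Delta_{x,y} = 0$. With $\Delta_{x,y} = I - O_x^* O_y$, we have $\Delta_{x,x} = I - O_x^* O_x = 0$ since $O_x$ is unitary, and $\Delta_{x,y}^* = I - O_y^* O_x = \Delta_{y,x}$. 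For the left-hand side, $e_{x,y} = M[x,y] - M'[x,y]$; both $M$ and $M'$ are Hermitian so $e_{x,y} = \overline{e_{y,x}}$. The remaining condition is that $e_{x,y} = 0$ whenever $\Delta_{x,y} = I - O_x^* O_y = 0$, i.e.\ whenever $O_x = O_y$ (using unitarity). Since the $O_x$ are assumed pairwise distinct, $\Delta_{x,y} = 0$ happens only when $x = y$, and then $e_{x,x} = M[x,x] - M'[x,x] = \|\xi_x\|^2 - \|\xi_x\|^2 = 0$ because both $M$ and $M'$ lie in $\cR_\xi$. So all hypotheses of \rf{prp:onegammaFeasible} are met, and feasibility follows.

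The main obstacle I anticipate is being precise about part (a): pinning down \emph{exactly} what it means for the state conversion to be ``possible'' (presumably: there exists an algorithm in the sense of \rf{defn:algorithm} solving it exactly and coherently) and making sure the pairwise-distinctness hypothesis is genuinely what rules out extra constraints, rather than something stronger being needed. Everything in part (b) is then a routine verification of the hypotheses of \rf{prp:onegammaFeasible}, with unitarity of the $O_x$ doing the work of guaranteeing $\Delta_{x,x} = 0$ and pairwise-distinctness guaranteeing that the only vanishing $\Delta_{x,y}$ are the diagonal ones.
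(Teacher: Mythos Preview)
Your proposal is correct and follows essentially the same approach as the paper: part (a) is just norm preservation by unitarity of $\cA(O_x)$, and part (b) is a direct application of \rf{prp:onegammaFeasible}. Your worry about where pairwise distinctness enters in part (a) is unnecessary---$\cR_\xi$ is defined purely by diagonal constraints, so $\|\tau_x\|=\|\xi_x\|$ already gives both $G_\tau\in\cR_\xi$ and $\cR_\tau=\cR_\xi$; the distinctness hypothesis is used only in part (b), exactly where you invoke it, to ensure $I-O_x^*O_y=0$ iff $x=y$.
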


\pfstart
The point (a) merely says that unitaries $\cA(O_x)$ do not change the norm of a vector.

The point (b) follows from~\rf{prp:onegammaFeasible}. 
Indeed, in notation of the this proposition, 
$\Delta_{x,y} = I - O_x^*O_y = 0$ if and only if $x = y$, and 
$e_{x,x} = M\elem[x,x]-M'\elem[x,x]=0$ for all $x\in D$.
\pfend

We can now describe our algorithm for exact state conversion.

\begin{thm}
\label{thm:exactTransformation}
Let $\xi_x\mapsto \tau_x$ be a state conversion problem with pairwise distinct unitary input oracles $O_x$, where $x$ ranges over a finite set $D$.
Assume $(v_x)_{x\in D}$ is a feasible solution to the adversary optimization problem~\rf{eqn:adversary}.
Then, for every $\delta>0$, there exists a quantum algorithm $\cA$ with the following properties:
\begin{itemize}
\itemsep0pt
\item $\cA$ solves state conversion $\xi_x \mapsto \tau_x$ exactly with unidirectional access to $O_x$;
\item for each $x\in D$, we have $\normA|{L_x(\cA) - \Dnorm|v_x|^2 }|\le\delta$.
\end{itemize}
\end{thm}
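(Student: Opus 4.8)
The plan is to realise the required conversion $\xi_x\mapsto\tau_x$ as a composition of three \emph{exact} conversions that route through two families of states whose Gram matrices are positive definite, so that the already-available \rf{lem:exactForPositive} governs the middle conversion and a carve-out of the same flavour governs the two ends; then let the ``detour'' shrink to zero.

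Concretely, after discarding the trivial labels with $\xi_x=0$, fix a positive-definite $M_0\in\cR_\xi$ (for instance $M_0=\mathrm{diag}(\|\xi_x\|^2)$). For a parameter $\epsilon\in(0,1)$ set $M_\epsilon:=(1-\epsilon)G_\xi+\epsilon M_0$ and $N_\epsilon:=(1-\epsilon)G_\tau+\epsilon M_0$. Both lie in $\cR_\xi=\cR_\tau$, both satisfy $M_\epsilon,N_\epsilon\succeq\epsilon M_0\succ0$, and $M_\epsilon-N_\epsilon=(1-\epsilon)(G_\xi-G_\tau)$, $G_\xi-M_\epsilon=\epsilon(G_\xi-M_0)$, $N_\epsilon-G_\tau=\epsilon(M_0-G_\tau)$. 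Since $M_\epsilon-N_\epsilon=(1-\epsilon)(G_\xi-G_\tau)$, the rescaled family $\sqrt{1-\epsilon}\,v_x$ is feasible for the adversary problem $\onegamma( (M_\epsilon-N_\epsilon)[x,y] \mid I-O_x^*O_y )_{x,y\in D}$; feeding it to \rf{lem:exactForPositive} (both endpoint Gram matrices are positive definite) produces an algorithm $\cA_2$ performing the conversion between states of Gram matrix $M_\epsilon$ and states of Gram matrix $N_\epsilon$ \emph{exactly}, with Las Vegas complexity $(1-\epsilon)\Dnorm|v_x|^2$ on input $x$.

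For the two outer conversions $\xi_x\mapsto\mu_x$ (Gram $M_\epsilon$) and $\nu_x\mapsto\tau_x$ (from Gram $N_\epsilon$), \rf{lem:cH_xi}(b) (via \rf{prp:onegammaFeasible}, using that $I-O_x^*O_y=0$ only on the diagonal) supplies feasible solutions $a_x,b_x$ for $\onegamma( (G_\xi-M_0)[x,y] \mid I-O_x^*O_y )$ and $\onegamma( (M_0-G_\tau)[x,y] \mid I-O_x^*O_y )$; then $\sqrt\epsilon\,a_x$ and $\sqrt\epsilon\,b_x$ are feasible for the respective outer problems. These outer conversions must be realised \emph{exactly}, not merely $\epsilon$-approximately as in \rf{thm:mainupper}/\rf{cor:main}: running the algorithm of \rf{thm:mainupper} without its catalyst would leave an $x$-dependent remainder that cannot be produced from $\xi_x$ alone, and in fact a direct computation of $\Query_t$ shows that running that algorithm on $\xi_x$ rather than $\xi_x^+$ incurs Las Vegas complexity $\Theta(1)$ instead of $O(\epsilon)$. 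So instead I would imitate the carve-out inside \rf{lem:exactForPositive}: for a suitable integer $T$ one has $M_\epsilon\succeq\frac1T(\epsilon G_a)$ automatically, and \emph{provided the Gram matrix of $a_x$ is supported on $\mathrm{range}(G_\xi)$} one also gets $G_\xi\succeq\frac1T(\epsilon G_a)$, so the decompositions $\xi_x=\xi^-_x\oplus\frac1{\sqrt T}(\sqrt\epsilon\,a_x)$ and $\mu_x=\mu^-_x\oplus\frac1{\sqrt T}(\sqrt\epsilon\,a_x)$ exist; applying \rf{thm:mainupper} to $\xi^-_x\mapsto\mu^-_x$ with feasible solution $\sqrt\epsilon\,a_x$ and following with free unitaries (the resulting collections have Gram matrices $G_\xi$ and $M_\epsilon$) yields $\cA_1$ performing $\xi_x\mapsto\mu_x$ exactly with Las Vegas complexity $\epsilon\Dnorm|a_x|^2$, and symmetrically $\cA_3$ performs $\nu_x\mapsto\tau_x$ exactly with complexity $\epsilon\Dnorm|b_x|^2$. \textbf{The main obstacle} is precisely this support requirement: one must choose $a_x$, $b_x$ (and perhaps $M_0$) so that $G_a$, $G_b$ lie in $\mathrm{range}(G_\xi)$, $\mathrm{range}(G_\tau)$ while $\Dnorm|a_x|^2,\Dnorm|b_x|^2$ stay bounded, which is where the hypothesis of pairwise-distinct $O_x$ is really used, by refining the one-pair construction in \rf{prp:onegammaFeasible}; a small separate reduction handles the degenerate situation where $G_\xi$ and $G_\tau$ share a kernel direction (or where some $\xi_x=0$).

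Finally I would glue the pieces: extend $\cA_1,\cA_2,\cA_3$ to a common workspace by \rf{prp:workSpaceExtension}, insert the free unitaries matching the intermediate state families (which have the prescribed Gram matrices), and form $\cA:=\cA_3*\cA_2*\cA_1$. Then $\cA$ solves $\xi_x\mapsto\tau_x$ exactly, and by \rf{prp:sequential} its Las Vegas complexity on $x$ equals $\epsilon\Dnorm|a_x|^2+(1-\epsilon)\Dnorm|v_x|^2+\epsilon\Dnorm|b_x|^2=\Dnorm|v_x|^2+\epsilon\bigl(\Dnorm|a_x|^2+\Dnorm|b_x|^2-\Dnorm|v_x|^2\bigr)$. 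Since $D$ is finite and $a_x,b_x$ are fixed, choosing $\epsilon$ small enough makes the error term bounded by $\delta$ in every coordinate at once, which is the claim. Everything besides the support refinement — positivity of $M_\epsilon,N_\epsilon$, feasibility of the rescaled solutions, and additivity of Las Vegas complexity under sequential composition — is routine given the results already in hand.
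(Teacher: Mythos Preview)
Your plan matches the paper's in the middle step and in the final limiting argument, but the ``main obstacle'' you flag is a genuine gap and your proposed fix does not work. For the outer leg $G_\xi\to M_\epsilon$ the carve-out needs $G_\xi\succeq\frac{\epsilon}{T}G_a$, so $G_a$ must lie in $\mathrm{range}(G_\xi)$. Take state generation with one-dimensional oracles: $\cM=\bC$, $O_x=e^{i\phi_x}$ for pairwise distinct $\phi_x$, all $\xi_x$ equal. Then $G_\xi$ is the all-ones matrix $J$ (rank $1$), forcing $G_a\propto J$, hence all $a_x$ equal to some fixed $a$. Feasibility then pins down $M_0=(1-\|a\|^2)J+G_w$ with $w_x=e^{i\phi_x}a$; but $G_w$ is again rank $1$, so $M_0$ has rank at most $2$, and for $|D|\ge 3$ no choice of $a$ (nor of $M_0$) makes $M_0\succ 0$. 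The analogous rank obstruction hits the $\tau$-side whenever $G_\tau$ is low rank. Refining \rf{prp:onegammaFeasible} cannot help here: the issue is not the existence of \emph{some} feasible $a_x$ but of one whose Gram matrix is dominated by the degenerate $G_\xi$.

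The paper avoids this by not using the carve-out for the outer legs at all. It first runs \rf{cor:main} to obtain an \emph{actual} algorithm $\cB$ that, on input $\xi_x$, lands exactly at some state with Gram matrix $M'\approx M$; since $M\succ 0$ and the inner product is continuous, $M'\succ 0$. Then the exact conversion $G_\xi\to(1-\epsilon)G_\xi+\epsilon M'$ is simply $I\oplus\cB$ acting on $\sqrt{1-\epsilon}\,\xi_x\oplus\sqrt{\epsilon}\,\xi_x$, with complexity $\epsilon L_x(\cB)$ by \rf{prp:workSpaceExtension} and \rf{prp:scaling}; no positive-definiteness of $G_\xi$ is needed. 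For the $\tau$-side it does the same with the \emph{reversed} oracles $O_x^*$ to get $\cE\colon G_\tau\to M''\succ 0$, and then \rf{prp:inversion} turns $I\oplus\cE^{-1}$ into an $O_x$-algorithm performing $(1-\epsilon)G_\tau+\epsilon M''\to G_\tau$ exactly. Because $M'\neq M''$ in general, an extra step $\cC_\epsilon$ (both endpoints now positive definite, so \rf{lem:exactForPositive} applies directly) bridges $\epsilon M'\to\epsilon M''$. So the chain has four legs rather than three, and the two ideas you are missing are: (i) let an approximate algorithm determine the interior target rather than prescribing it, then scale via direct sum with the identity; and (ii) use \rf{prp:inversion} on an algorithm built with $O_x^*$ to handle the $\tau$-endpoint while keeping unidirectional access to $O_x$.
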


Together with \rf{thm:mainlower} and \rf{clm:feasibleClosed}, this gives the following main result of the paper:

\begin{thm}
\label{thm:LasVegas}
In the assumptions of \rf{thm:exactTransformation}, the following two sets are equal:
\itemstart
\item the topological closure of the feasible complexity space of the state conversion problem; and
\item the feasible objective space of the corresponding adversary optimisation problem~\rf{eqn:adversary}.
\itemend
\end{thm}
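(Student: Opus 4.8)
The plan is to prove the two inclusions between the sets separately; each follows almost immediately from results already established, with the bulk of the work having been done in \rf{thm:mainlower} and \rf{thm:exactTransformation}.

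\textbf{Closure of feasible complexity space $\subseteq$ feasible objective space.} Let $\cA$ be any algorithm solving the state conversion problem $\xi_x\mapsto\tau_x$ exactly. By \rf{thm:mainlower}, its complexity profile $L_D(\cA)$ lies in the feasible objective space of the adversary optimisation problem~\rf{eqn:adversary}. Since the left-hand sides $\ip<\xi_x,\xi_y> - \ip<\tau_x,\tau_y>$ are scalars, \rf{clm:feasibleClosed} applies and tells us this feasible objective space is a topologically closed subset of $\bR^D\otimes\bR^s$. A closed set that contains the feasible complexity space must contain its closure, which gives the first inclusion.

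\textbf{Feasible objective space $\subseteq$ closure of feasible complexity space.} Let $w = (\Dnorm|v_x|^2)_{x\in D}$ be the objective profile of an arbitrary feasible solution $(v_x)_{x\in D}$ to~\rf{eqn:adversary}. Fix $\delta>0$. By \rf{thm:exactTransformation} there is an algorithm $\cA_\delta$ solving $\xi_x\mapsto\tau_x$ \emph{exactly} with $\normA|{L_x(\cA_\delta) - \Dnorm|v_x|^2}|\le\delta$ for every $x\in D$. Hence the complexity profile $L_D(\cA_\delta)$, which by definition belongs to the feasible complexity space, is within distance $\delta$ of $w$ in any norm on the finite-dimensional space $\bR^D\otimes\bR^s$ (here we use that $D$, and $s$ in the multiple-oracle case, are finite). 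Letting $\delta\to 0$ exhibits $w$ as a limit of points of the feasible complexity space, so $w$ lies in its closure. As $(v_x)$ was arbitrary, the whole feasible objective space is contained in that closure.

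Combining the two inclusions yields the claimed equality. There is essentially no obstacle internal to this theorem: the only two points that need to be checked are that the relevant left-hand family $E$ is one-dimensional, so that \rf{clm:feasibleClosed} is available, and that $D$ (and $s$) are finite, so that the pointwise $\delta$-closeness produced by \rf{thm:exactTransformation} is genuine convergence in $\bR^D\otimes\bR^s$; both hold by the hypotheses of \rf{thm:exactTransformation}. All the genuine difficulty is packed into \rf{thm:exactTransformation} (exact upper bound), \rf{thm:mainlower} (lower bound), and \rf{clm:feasibleClosed} (closedness), which we invoke here as black boxes.
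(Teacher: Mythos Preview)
Your proof is correct and follows exactly the approach the paper indicates: the paper does not spell out a proof of this theorem at all, merely stating that it follows from \rf{thm:mainlower}, \rf{thm:exactTransformation}, and \rf{clm:feasibleClosed}, and you have faithfully unpacked this into the two inclusions using precisely those three ingredients.
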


\begin{proof}[Proof of \rf{thm:exactTransformation}]
The idea is as follows.
First, we show that we can transform $G_\xi$ and $G_\tau$ into some $M', M''\succ 0$.
As soon as we get into the space of full-rank Gram matrices, we can use \rf{lem:exactForPositive}.  Namely, we use it to perform the two middle steps in the following chain of transformations:
\begin{equation}
\label{eqn:chainOfTransformations}
G_\xi 
\mapsto (1-\eps)G_\xi + \eps M'
\mapsto (1-\eps)G_\xi + \eps M''
\mapsto (1-\eps)G_\tau + \eps M''
\mapsto G_\tau,
\end{equation}
where $\eps>0$ is some small number.
See \rf{fig:exactAlgorithm}(b).
The main work happens in the third step.
We use Propositions~\ref{prp:workSpaceExtension} and~\ref{prp:scaling} to show that the complexity of the other steps vanishes with $\eps\to 0$.
The final idea is that we perform the last transformation in reverse using \rf{prp:inversion}.

Let us proceed with the proof.
We may assume that $G_\tau \in \cR_\xi$.  Otherwise, neither the state conversion is possible, nor the optimisation problem has a feasible solution.
We may also assume that neither of $\xi_x$ is zero, since then $G_\tau\in\cR_\xi$ implies $\tau_x=0$ and any algorithm always transforms $0\mapsto 0$, so we may drop this input.
%
Consider the matrix $M$ defined by
\begin{equation}
\label{eqn:M}
M\elem[x,y] = \begin{cases}
G_\xi\elem[x,x]=G_\tau\elem[x,x], &\text{if $x = y$;}\\
0,&\text{otherwise.}
\end{cases}
\end{equation}
Clearly, $M\in\cR_\xi$ and $M\succ 0$.
By \rf{lem:cH_xi}(b), the adversary bound~\rf{eqn:adversary} corresponding to the transformation $G_\xi\mapsto M$ has a feasible solution.
Using \rf{cor:main} and continuity of the inner product, we can get Gram matrices achievable from $G_\xi$ that are arbitrarily close to $M$.
Since $M$ is positive definite, there exists a positive definite $M'$ among them.
Let $\cB$ be the algorithm that performs the transformation $G_\xi \mapsto M'$.
 
Using the same argument but with $\xi_x$ replaced by $\tau_x$ and $O_x$ replaced by $O^*_x$, we get $M''\succ 0$ and an algorithm $\cE$ that transforms $G_\tau \mapsto M''$ using the input oracles $O_x^*$.

By \rf{lem:cH_xi}(a), both $M'$ and $M''$ are in $\cR_\xi$.
They are both positive definite.
By point (b) of the same lemma and \rf{lem:exactForPositive}, there exists a quantum algorithm $\cC$ that transforms $M'\mapsto M''$ exactly.
These algorithms are depicted in \rf{fig:exactAlgorithm}(a).

\myfigure{\label{fig:exactAlgorithm}}
{
The algorithms used in the proof of \rf{thm:exactTransformation}.
The Gram matrices $G_\xi$ and $G_\tau$ are not of full rank here, therefore are depicted on the edge of the cone of positive semidefinite matrices.  
The matrix $M$ is positive definite, and so are all the matrices in a sufficiently small circle around $M$.
In (b), the algorithms $\cB_\eps$ and $\cC_\eps$ are the scaled-down versions of $\cB$ and $\cC$.  The algorithm $\cE^{-1}$ is additionally reversed.
As $\eps\to 0$, the algorithm $\cD_\eps$ approaches the line connecting $G_\xi$ and $G_\tau$.
}
{
\[
\begin{tikzpicture}
[
>={Latex[length=2.5mm]},
inner sep=1.5pt,
tochka/.style={circle, fill=black, radius=1pt}
]
\draw (-6,-2) to[out=-15, in=-165] 
node[pos=0.15, tochka, label=below:$G_\xi$] (xi) {} 
node[pos=0.93, tochka, label=below:$G_\tau$] (tau) {} 
(6, -2);
\node at (-7,0) {(a)};
\node [tochka, label=above:$M$] at (0,0) (M) {};
\draw (0,0) circle [radius=2cm] {};
\node [tochka, label=above:$M'$] at (-1, 0.4)  (M1) {};
\node [tochka, label=above:$M''$] at (1, -1)  (M2) {};
\draw [->, decorate, decoration={zigzag, amplitude=.5mm}] (xi) to node[above left] {$\cB$} (M1);
\draw [->] (M1) to node[below left] {$\cC$} (M2);
\draw [->, decorate, decoration={zigzag, amplitude=.5mm}] (tau) to node[above right] {$\cE$} (M2);
\end{tikzpicture}
\]
\bigskip
\[
\begin{tikzpicture}
[
>={Latex[length=2.5mm]},
inner sep=1.5pt,
tochka/.style={circle, fill=black, radius=1pt}
]
\draw (-6,-2) to[out=-15, in=-165] 
node[pos=0.15, tochka, label=below:$G_\xi$] (xi) {} 
node[pos=0.93, tochka, label=below:$G_\tau$] (tau) {} 
(6, -2);
\node at (-7,0) {(b)};
\node [tochka, label=above:$M$] at (0,0) (M) {};
\node [tochka, label=above:$M'$] at (-1, 0.4)  (M1) {};
\node [tochka, label=above:$M''$] at (1, -1)  (M2) {};
\draw [dashed] (xi) to node[tochka, pos=0.3] (N1) {} (M1);
\draw [dashed] (xi) to node[tochka, pos=0.3] (N) {} (M2);
\draw [dashed] (tau) to node[tochka, pos=0.3] (N2) {} (M2);
\draw [->, decorate, decoration={zigzag, amplitude=.4mm}] (xi) to node[above left] {$\cB_\eps$} (N1);
\draw [->] (N1) to node[above right] {$\cC_\eps$} (N);
\draw [->] (N) to node[below] {$\cD_\eps$} (N2);
\draw [->, decorate, decoration={zigzag, amplitude=.4mm}] (N2) to node[above right] {$\cE_\eps^{-1}$} (tau);
\end{tikzpicture}
\]
}

For small enough $\eps>0$, the following table lists the algorithms performing the transformations in~\rf{eqn:chainOfTransformations} with their Las Vegas complexities.
A graphical representation of the algorithms is given in \rf{fig:exactAlgorithm}(b).
\[
\begin{tabular}{ccrc@{$\mapsto\quad$}lcc}
Algorithm &\quad& \multicolumn{3}{c}{Transformation} &\quad& Complexity\\
\hline
$\cB_\eps$ && $G_\xi$&& $(1-\eps)G_\xi + \eps M'$ && $L_x(\cB_\eps) = \eps L_x(\cB)$\\
$\cC_\eps$ && $(1-\eps)G_\xi + \eps M'$ && $(1-\eps)G_\xi + \eps M''$ && $L_x(\cC_\eps) = \eps L_x(\cC)$\\
$\cD_\eps$ && $(1-\eps)G_\xi + \eps M''$ && $(1-\eps)G_\tau + \eps M''$ && $L_x(\cD_\eps) = (1-\eps) \Dnorm|v_x|^2$\\
$\cE^{-1}_\eps$ && $(1-\eps)G_\tau + \eps M''$ && $G_\tau$ && $L_x(\cE_\eps) = \eps L_x(\cE)$\\
\hline
\end{tabular}
\]

The algorithm $\cB_\eps$ is $I\oplus \cB$ of \rf{prp:workSpaceExtension}, where $\cB$ transforms $\eps G_\xi \mapsto \eps M'$.  The complexity follows from \rf{prp:scaling}.
The algorithm $\cC_\eps$ is analogous with $\cC$ transforming $\eps M'\mapsto \eps M''$.
The algorithm $\cE_\eps^{-1}$ is $I\oplus \cE^{-1}$, where $\cE^{-1}$ transforms $\eps M''\mapsto \eps G_\tau$ due to Propositions~\ref{prp:inversion} and~\ref{prp:scaling}.

To get $\cD_\eps$, note that both matrices are positive definite, and their difference is
$(1-\eps)(G_\xi-G_\tau)$, hence, we can use \rf{lem:exactForPositive} with $\sqrt{1-\eps}\, v_x$ as a feasible solution to the corresponding adversary bound~\rf{eqn:adversary}.

The algorithm $\cA$ is the sequential composition of these subroutines, hence, by \rf{prp:sequential}, we have
\[
L_x(\cA) = \eps L_x(\cB) + \eps L_x(\cC) + (1-\eps) \Dnorm|v_x|^2 + \eps L_x(\cE) \to \Dnorm|v_x|^2
\]
as $\eps\to 0$.
\end{proof}

\subsection{Example with Two Labels}
\label{sec:ExampleOfTwoLabels}
Here we consider an example when $D=\{0,1\}$ and the input oracles $O_0\ne O_1$ are unitary.
For normalized states, Gram matrices can be parametrized by a single off-diagonal parameter $a$.  We write
\begin{equation}
G_a = \begin{pmatrix}
1 & a \\
a^* & 1
\end{pmatrix}.
\end{equation}
Consider a transformation $G_a\mapsto G_b$.
In other words, we have that $\ip<\xi_0,\xi_1>=a$ and $\ip<\tau_0,\tau_1>=b$.

\begin{clm}
\label{clm:2lab}
The feasible objective space of the corresponding adversary optimisation problem~\rf{eqn:adversary} is the epigraph of a hyperbola:
\begin{equation}
\label{eqn:2labA}
\sfig{(w_0, w_1) \midB \text{$w_0,w_1\ge 0$ and $\sqrt{w_0w_1} \ge \frac{|a-b|}{\|O_0 - O_1\|}$} }.
\end{equation}
\end{clm}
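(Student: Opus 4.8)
The plan is to unfold the adversary optimisation problem \rf{eqn:adversary} for $D=\{0,1\}$ into a single scalar constraint, and then read off the feasible objective space using Cauchy--Schwarz.

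First I would observe that the diagonal constraints of \rf{eqn:adversary} carry no information: for $x=y$ the left-hand side is $\ip<\xi_x,\xi_x>-\ip<\tau_x,\tau_x>=1-1=0$ while $I-O_x^*O_x=0$. The $(1,0)$ constraint is the adjoint of the $(0,1)$ one, so a feasible solution is exactly a space $\cW$ together with vectors $v_0,v_1\in\cM\otimes\cW$ satisfying
\[
\ip<v_0,(N\otimes I_\cW)v_1>=a-b,\qquad N:=I-O_0^*O_1=O_0^*(O_0-O_1),
\]
and its objective profile is $(\|v_0\|^2,\|v_1\|^2)$. Because $O_0$ is unitary, left multiplication by $O_0^*$ is an isometry, so $\|N\|=\|O_0-O_1\|=:\kappa$, and $\kappa>0$ since $O_0\ne O_1$. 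This is the single place where unitarity of the oracles enters, which is consistent with the stated necessity of that hypothesis.

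For the inclusion of the feasible objective space in \rf{eqn:2labA}: for any feasible $(v_0,v_1)$, Cauchy--Schwarz together with $\|N\otimes I_\cW\|=\|N\|$ give $|a-b|=|\ip<v_0,(N\otimes I_\cW)v_1>|\le\kappa\,\|v_0\|\,\|v_1\|$, i.e.\ $\sqrt{w_0w_1}\ge|a-b|/\kappa$ with $w_i=\|v_i\|^2\ge 0$. For the reverse inclusion, fix $(w_0,w_1)$ with $w_0,w_1\ge0$ and $\sqrt{w_0w_1}\ge|a-b|/\kappa$. If $w_0w_1=0$ the constraint forces $a=b$ and the zero solution has objective $(0,0)$. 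Otherwise take $\cW$ one-dimensional, choose a unit vector $q\in\cM$ with $\|Nq\|=\kappa$ (it exists by finite-dimensionality), set $p=Nq/\kappa$, so that $\ip<p,Nq>=\kappa$, and put
\[
v_0=\sqrt{w_0}\,p,\qquad v_1=\frac{a-b}{\kappa\sqrt{w_0}}\,q .
\]
Then $\ip<v_0,Nv_1>=a-b$, $\|v_0\|^2=w_0$, and $\|v_1\|^2=|a-b|^2/(\kappa^2w_0)\le w_1$ by the hypothesis. In all cases we obtain a feasible solution whose objective profile is $\le(w_0,w_1)$ coordinatewise; since $I-O_x^*O_x=0$, \rf{prp:upwardsClosed} says the feasible objective space is upwards closed, so $(w_0,w_1)$ itself lies in it. The two inclusions together give the claim.

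I do not expect a real obstacle here; the only spots that want a touch of care are the norm identity $\|I-O_0^*O_1\|=\|O_0-O_1\|$ (which rests on unitarity of $O_0$, and would fail for general linear oracles) and the degenerate boundary cases $w_0w_1=0$ and $a=b$, both of which are cleanly absorbed by upwards-closedness of the feasible objective space.
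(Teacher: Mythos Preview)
Your proof is correct and follows essentially the same route as the paper: reduce to the single off-diagonal constraint, use the operator-norm bound (Cauchy--Schwarz) for one inclusion, construct an explicit solution from a top singular pair of $I-O_0^*O_1$ for the other, and finish with \rf{prp:upwardsClosed}. The only cosmetic differences are that the paper names left/right singular vectors $u,v$ directly and builds a point exactly on the boundary hyperbola before invoking upwards-closedness, whereas you build a dominated point; both are fine.
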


\pfstart
First, since $O_0$ and $O_1$ are unitaries, we get that
\[
G_a\elem[x,x] - G_b\elem[x,x] = 0 
= \ip<v_x, ((I-O_i^*O_i)\otimes I_\cW)v_x>
\]
for all $v_x$ and $x=0,1$.
So we only have to analyse the off-diagonal term
\[
G_a\elem[0,1]-G_b\elem[0,1] = a - b.
\]

Denote $d = \|I - O_0^*O_1\| = \|O_0 - O_1\|$.
If $v_0, v_1$ is a feasible solution to the adversary optimisation problem, then from~\rf{eqn:advExplicitCondition}, we get
\[
|a - b| = \absA|\ip<v_0, ((I-O_0^*O_1)\otimes I_\cW) v_1>| \le d\,\|v_0\|\cdot \|v_1\|,
\]
implying the lower bound in~\rf{eqn:2labA}.

In the opposite direction, assume $\sqrt{w_0w_1} = |a-b|/d$.
Let $u$ and $v$ be the normalised left and right singular vectors of $I-O_0^*O_1$ with the singular value $d$.
Then, we have
\[
a-b = (\sqrt{w_0}u)^* (I-O^*_0O_1) \frac{(a-b)\sqrt{w_1} v}{|a-b|},
\]
implying that $(w_0,w_1)$ is in the feasible objective space.
The claim follows from~\rf{prp:upwardsClosed}.
\pfend

By \rf{thm:LasVegas}, the topological closure of the feasible complexity space of the corresponding state conversion problem equals~\rf{eqn:2labA}.
We show that, in general, not all points in this set are attained as complexity profiles of the algorithms performing the transformation.

\begin{clm}
Let $O_0=1$ and $O_1=-1$ be 1-dimensional unitaries.
Consider the transformation $G_1\mapsto G_\ii$ (where $\ii$ is the imaginary unit) on these input oracles.
The point $(1/\sqrt2, 1/\sqrt2)$ is in the feasible objective space of the corresponding adversary optimisation problem, but not in the feasible complexity space of the problem.
\end{clm}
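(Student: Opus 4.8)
The plan is to handle the two assertions in turn. Membership in the feasible objective space is a one-liner from \rf{clm:2lab}: here $a=1$, $b=\ii$, and $\|O_0-O_1\|=|1-(-1)|=2$, so $|a-b|=|1-\ii|=\sqrt2$ and the set in \rf{clm:2lab} is $\{(w_0,w_1):w_0,w_1\ge0,\ \sqrt{w_0w_1}\ge1/\sqrt2\}$; since $\sqrt{(1/\sqrt2)(1/\sqrt2)}=1/\sqrt2$, the point $(1/\sqrt2,1/\sqrt2)$ lies on the boundary of this closed region, hence in the feasible objective space. It remains to show that no algorithm realizes $(1/\sqrt2,1/\sqrt2)$ as its complexity profile, which is where all the work goes.

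I would argue by contradiction. Suppose an algorithm $\cA$ as in \rf{defn:algorithm} solves $\xi_0\mapsto\tau_0$ on $O_0=1$ and $\xi_1\mapsto\tau_1$ on $O_1=-1$, with $\ip<\xi_0,\xi_1>=1$, $\ip<\tau_0,\tau_1>=\ii$, all four vectors of unit norm, and $L_0(\cA)=L_1(\cA)=1/\sqrt2$. From $\|\xi_0\|=\|\xi_1\|=1$ and $\ip<\xi_0,\xi_1>=1$ we get $\xi_0=\xi_1=:\xi$. Write $\psi_{t,x}=\State_t(\cA,O_x)\xi$ and $q_t^{(x)}=\Query_t(\cA,O_x)\xi=\Pi\psi_{t,x}$, where $\Pi$ is the projector onto the part of the workspace processed by the oracle. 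Telescoping $\ip<\psi_{t,0},\psi_{t,1}>$ over $t$ exactly as in the proof of \rf{thm:mainlower}, and using $I_\cM-O_0^*O_1=1-(-1)=2$, gives $\ip<\xi,\xi>-\ip<\tau_0,\tau_1>=\sum_{t=1}^T 2\,\ip<q_t^{(0)},q_t^{(1)}>$, i.e.\ $\sum_t\ip<q_t^{(0)},q_t^{(1)}>=(1-\ii)/2$. Hence
\[
\frac1{\sqrt2}=\absA|(1-\ii)/2|\ \le\ \sum_t\absA|\ip<q_t^{(0)},q_t^{(1)}>|\ \le\ \sum_t\|q_t^{(0)}\|\,\|q_t^{(1)}\|\ \le\ \sqrt{L_0(\cA)\,L_1(\cA)}\ =\ \frac1{\sqrt2},
\]
so every inequality is tight.

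Extracting rigidity from this chain is, I expect, the main obstacle — not because it is deep, but because one must juggle several equality cases (and the degenerate case $q_t^{(0)}=0$). Tightness of the last step (Cauchy--Schwarz for the sequences $(\|q_t^{(0)}\|)_t$ and $(\|q_t^{(1)}\|)_t$), together with $L_0(\cA)=L_1(\cA)$, forces $\|q_t^{(0)}\|=\|q_t^{(1)}\|$ for all $t$; tightness of the Cauchy--Schwarz for each $t$ forces $q_t^{(0)}$ and $q_t^{(1)}$ to be parallel; tightness of the triangle inequality forces every nonzero $\ip<q_t^{(0)},q_t^{(1)}>$ to have argument $\arg(1-\ii)=-\pi/4$. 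Assembling these (when $q_t^{(0)}=0$ the conclusion is trivial since then $q_t^{(1)}=0$ as well), one gets $q_t^{(1)}=e^{-\ii\pi/4}\,q_t^{(0)}$ for every $t$.

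With this identity the rest is a short induction showing $\psi_{t,0}=\psi_{t,1}$ for all $t\in\{1,\dots,T+1\}$. It holds for $t=1$ since $\psi_{1,x}=U_0\xi$. For the inductive step: $\psi_{t,0}=\psi_{t,1}$ gives $q_t^{(0)}=q_t^{(1)}$, which with $q_t^{(1)}=e^{-\ii\pi/4}q_t^{(0)}$ and $e^{-\ii\pi/4}\ne1$ forces $q_t^{(0)}=q_t^{(1)}=0$; since $O_0=1$ we have $\tO_0=I$, hence $\psi_{t+1,0}=U_t\psi_{t,0}$, while since $O_1=-1$ we have $\tO_1=I-2\Pi$, hence $\tO_1\psi_{t,1}=\psi_{t,1}-2q_t^{(1)}=\psi_{t,1}$ and $\psi_{t+1,1}=U_t\psi_{t,1}=U_t\psi_{t,0}=\psi_{t+1,0}$. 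Therefore $q_t^{(0)}=0$ for all $t$, so $L_0(\cA)=\sum_t\|q_t^{(0)}\|^2=0\ne1/\sqrt2$ (and, just as directly, $\tau_0=\psi_{T+1,0}=\psi_{T+1,1}=\tau_1$, already contradicting $\ip<\tau_0,\tau_1>=\ii$). Either contradiction shows that $(1/\sqrt2,1/\sqrt2)$ is not in the feasible complexity space, completing the proof.
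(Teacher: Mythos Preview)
Your proof is correct and takes a genuinely different route from the paper's. The paper argues via the trajectory of Gram matrices: it first observes that from $G_1$ one query can only reach $G_c$ with $c\in\bR$ (since $\xi_0=\xi_1$ forces the two queried states to coincide), so the sequence $G_1\mapsto G_{c_1}\mapsto\cdots\mapsto G_\ii$ must pass through some real $c_j\ne1$; it then splits $\cA=\cC*\cB$ at that step, applies the lower bound of \rf{thm:mainlower} (via \rf{clm:2lab}) to each piece, and concludes by the \emph{strict} triangle inequality $|1-c_j|+|c_j-\ii|>|1-\ii|$ for real $c_j\ne1$. You instead work directly with the query vectors, write the single telescoping identity $\sum_t\ip<q_t^{(0)},q_t^{(1)}>=(1-\ii)/2$, and squeeze it between $|1-\ii|/2$ and $\sqrt{L_0L_1}$; the equality cases of the triangle and Cauchy--Schwarz inequalities then pin down $q_t^{(1)}=e^{-\ii\pi/4}q_t^{(0)}$, after which a short induction (using $\xi_0=\xi_1$) collapses all queries to zero. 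Your argument is more self-contained---it avoids invoking sequential composition and the adversary lower bound as black boxes---while the paper's version has a cleaner geometric picture (a detour off the segment $[1,\ii]$ in the Gram-matrix parameter) and generalises more readily to other target points on the boundary hyperbola.
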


\pfstart
The first statement follows from~\rf{clm:2lab}.  It remains to prove there is no algorithm solving the problem with this complexity profile.

Consider an algorithm $\cA$ that performs this transformation in $T$ queries, and assume that it goes through the following Gram matrices during its execution:
\begin{equation}
\label{eqn:Gsequence}
G_1 \mapsto G_{c_1} \mapsto G_{c_2} \mapsto \cdots \mapsto G_{c_{T-1}} \mapsto G_\ii.
\end{equation}

First, we claim that only $G_b$ with $b \in \mathbb{R}$ are achievable from $G_{1}$ in one query.
Indeed, $G_1$ corresponds to a state collection $\xi_0$, $\xi_1$ with $\xi_0 = \xi_1$.
Therefore, the state processed by the input oracle is the same for $x=0$ and $x=1$.
Denote it $\psi'$.
For the states $\tau_0$ and $\tau_1$ after the query, we have 
\[
\ip<\tau_0, \tau_1> = 1 - 2\|\psi'\|^2 \in \bR.
\]

Therefore, among $c_1,\dots,c_{T-1}$ there exists $c_j\in \bR\setminus\{1\}$.
Write the algorithm as a sequential composition $\cA = \cC * \cB$, where $\cB$ performs the transformation $G_1\mapsto G_{c_j}$ in~\rf{eqn:Gsequence}, and $\cC$ the transformation $G_{c_j}\mapsto G_\ii$.

Using~\rf{clm:2lab}, we get that for any point $(w_0,w_1)$ in the feasible objective space for transformation $G_a\mapsto G_b$ with our choice of input oracles
\[
w_0 + w_1 \ge 2\sqrt{w_0w_1} \ge |a-b|.
\]
Hence, by \rf{thm:mainlower}:
\[
L_0(\cB) + L_1(\cB) \ge |1 - c_j|
\qqand
L_0(\cC) + L_1(\cC) \ge |c_j - \ii|.
\]
Combining this with \rf{prp:sequential} and the triangle inequality in $\bC$, we get
\[
L_0(\cA) + L_1(\cA) 
=L_0(\cB) + L_0(\cC) +
L_1(\cB) + L_1(\cC)
\ge |1 - c_j| + |c_j - \ii| > |1 - \ii| = \sqrt2.
\]
Thus, $(1/\sqrt2, 1/\sqrt2)$ is not in the feasible complexity space.
\pfend

Let us now move to the case when $O_0$ and $O_1$ are contractions.
Our goal is to show that Theorems~\ref{thm:exactTransformation} and~\ref{thm:LasVegas} are false in this case.
For that, consider the transformation $G_1\mapsto G_0$ with the input oracles in $\bC^2$ given by
\[
O_0 = \begin{pmatrix}
1&0\\0&0
\end{pmatrix}
\qqand
O_1 = \begin{pmatrix}
0&-1\\0&0
\end{pmatrix}.
\]

\begin{clm}
For the above problem, the adversary optimisation problem has a feasible solution, but there is no algorithm performing the required transformation exactly.
\end{clm}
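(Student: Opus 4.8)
The plan is to handle the two assertions separately. For feasibility of the adversary optimisation problem~\rf{eqn:adversary} attached to the state conversion $G_1\mapsto G_0$ --- that is, for $\ip<\xi_0,\xi_1>=1$, $\ip<\tau_0,\tau_1>=0$, $\|\xi_x\|=\|\tau_x\|=1$ --- I would exhibit a one-dimensional feasible solution with $\cW=\bC$: take $v_0=\ket|0>$ and $v_1=\ket|1>$ in $\cM=\bC^2$. One checks that $O_0^*O_0$ and $O_1^*O_1$ are the rank-one projectors onto $\ket|0>$ and $\ket|1>$, so $v_x$ lies in the kernel of $I-O_x^*O_x$ and the diagonal constraints read $0=\ip<\xi_x,\xi_x>-\ip<\tau_x,\tau_x>$. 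For the off-diagonal constraint, $I-O_0^*O_1$ fixes $\ket|0>$ and sends $\ket|1>\mapsto\ket|0>+\ket|1>$, so $\ip<v_0,(I-O_0^*O_1)v_1>=1=\ip<\xi_0,\xi_1>-\ip<\tau_0,\tau_1>$, the $(1,0)$ constraint being its conjugate. Thus $(v_0,v_1)$ is feasible, with objective profile $(1,1)$.

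For the non-existence of an exact algorithm, the plan is to show that on this instance any exact algorithm is forced to end in the \emph{same} state on both inputs, which is incompatible with $\ip<\tau_0,\tau_1>=0$. The structural fact I would use is that $O_0$ acts isometrically only on the line $\bC\ket|0>$ and $O_1$ only on the line $\bC\ket|1>$, and these meet only in $0$. In detail: suppose an algorithm $\cA$ with query $\tO_x=(O_x\otimes I_\cB)\oplus I_\cC$ on $\cH=(\cM\otimes\cB)\oplus\cC$ solves $\xi_x\mapsto\tau_x$ exactly, with $\ip<\xi_0,\xi_1>=1$, $\ip<\tau_0,\tau_1>=0$, and all four vectors of unit norm. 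From $\ip<\xi_0,\xi_1>=\|\xi_0\|\,\|\xi_1\|$ and the equality case of Cauchy--Schwarz, $\xi_0=\xi_1$. Writing $\psi_{t,x}=\State_t(\cA,O_x)\xi_x$, the fact that $O_x$ (hence $\tO_x$) is a contraction and the $U_i$ are unitary gives the non-increasing chain $1=\|\xi_x\|=\|\psi_{1,x}\|\ge\|\psi_{2,x}\|\ge\dots\ge\|\psi_{T+1,x}\|=\|\tau_x\|=1$, which must be constant; hence $\|\tO_x\psi_{t,x}\|=\|\psi_{t,x}\|$ for all $t$. Since $(O_x\otimes I_\cB)^*(O_x\otimes I_\cB)$ is the projector onto $\ket|x>\otimes\cB$, this forces the queried part $\Pi\psi_{t,x}$ to lie in $\ket|x>\otimes\cB$. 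I would then induct on $t$ to prove $\psi_{t,0}=\psi_{t,1}$: the base case $\psi_{1,0}=U_0\xi_0=U_0\xi_1=\psi_{1,1}$ uses $\xi_0=\xi_1$; in the step, if $\psi_{t,0}=\psi_{t,1}=:\psi_t$ then $\Pi\psi_t\in(\ket|0>\otimes\cB)\cap(\ket|1>\otimes\cB)=\{0\}$, so $\psi_t$ has no queried part and $\tO_0\psi_t=\psi_t=\tO_1\psi_t$, whence $\psi_{t+1,0}=U_t\psi_t=\psi_{t+1,1}$. In particular $\tau_0=\psi_{T+1,0}=\psi_{T+1,1}=\tau_1$, so $\ip<\tau_0,\tau_1>=\|\tau_0\|^2=1\ne 0$ --- a contradiction.

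The routine parts are the two small matrix computations in the first paragraph and the norm bookkeeping in the second. The step I expect to need the most care is the passage from exactness to norm preservation \emph{at every single query}, together with the deduction that norm preservation under $\tO_x$ pins the queried part to $\ket|x>\otimes\cB$; both rest on $O_x$ being a contraction with $O_x^*O_x$ a projector, so that $\|(O_x\otimes I_\cB)\phi\|=\|\phi\|$ holds exactly on the range of that projector. I would also note explicitly that the argument is insensitive to any gap between the output space $\cK$ and the workspace $\cH$, and to the dimension of the ancilla $\cB$, since only the position of the queried part inside $\cM\otimes\cB$ enters; and that, as the surrounding text observes, this refutes \rf{thm:exactTransformation} and \rf{thm:LasVegas} for contraction oracles --- here the feasible complexity space is in fact \emph{empty}.
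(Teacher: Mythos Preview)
Your proof is correct. The feasible solution $v_0=\ket|0>$, $v_1=\ket|1>$ is exactly the one the paper gives, and your verification of the constraints is fine.

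For the impossibility part, your argument and the paper's share the same engine---$\xi_0=\xi_1$ together with the fact that $O_x^*O_x$ is the rank-one projector onto $\ket|x>$---but they are organised differently and reach different contradictions. The paper looks at the \emph{first} nontrivial query (after reducing to a sliced algorithm): if the queried state is $\alpha\ket|0>+\beta\ket|1>$ with $(\alpha,\beta)\ne(0,0)$, then either $O_1$ kills the $\alpha$-component or $O_0$ kills the $\beta$-component, so one of the post-query norms drops strictly below~$1$ and can never recover, contradicting $\|\tau_x\|=1$. You instead deduce from $\|\tau_x\|=1$ that norm is preserved at \emph{every} query for \emph{both} inputs, which pins the queried part to $\ket|x>\otimes\cB$; combined with an induction keeping $\psi_{t,0}=\psi_{t,1}$, this forces every query to be trivial and yields $\tau_0=\tau_1$, contradicting $\ip<\tau_0,\tau_1>=0$. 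Your route avoids the slicing reduction and works directly with the tensored query $(O_x\otimes I_\cB)\oplus I_\cC$; the paper's route is a touch shorter once slicing is invoked. Both are valid.
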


\pfstart
The feasible solution is $v_0 = \k0$ and $v_1=\k1$.
Let us prove there is no algorithm solving the problem.

The initial Gram matrix $G_1$ means that we have equal initial states $\xi_0 = \xi_1$ of unit norm.
As $G_0\ne G_1$, the algorithm has to make at least one query.
Consider the first query.
Since $\xi_0=\xi_1$, the state given to the oracle is the same for both inputs, denote it $\psi'$.
We may assume $\psi'\ne 0$, as otherwise this query can be ignored.
Let $\psi_x$ be the state of the algorithm after the query on input $x$.

Without loss of generality, the algorithm is sliced, hence, $\psi' = \alpha\k0 + \beta \k1$ for some $\alpha,\beta\in\bC$.
As $\psi'\ne 0$, either $\alpha\ne 0$, or $\beta\ne 0$.
If $\alpha\ne 0$, then $\norm|\psi_1| < 1$.
If $\beta\ne 0$, then $\norm|\psi_0| < 1$.
In either case, it is impossible to get both terminal states $\tau_0$ and $\tau_1$ to have unit norm.
Therefore, there is no algorithm solving the problem.
\pfend

\subsection{Boolean Function Evaluation}
\label{sec:booleanFunction}

Here, we derive the adversary bound for Boolean function evaluation as a simple special case of \rf{thm:LasVegas}.

Let us specify exactly what we mean by Boolean function evaluation in this context.
Let $f\colon D\to\bool$ with $D\subseteq\cube$ be a (partial) Boolean function.
We assume the input oracle $O_x$ encodes $x\in D$ in the phase, and we consider the multi-oracle settings.
That is, there are $n$ unitary input oracles $O^{(i)}_x\colon \bC\to\bC$ defined by $O^{(i)}_x = (-1)^{x_i}$.
Since $O_x$ is Hermitian, there is no difference between unidirectional and bidirectional access to this oracle.
We also assume that the function is evaluated in the phase, exactly and coherently.
That is, the output space $\cK = \bC$ and the goal is to map $\ket|0>\mapsto (-1)^{f(x)}\ket|0>$.
Since the space is one-dimensional, this can also be seen as an instance of unitary implementation.

We get that $\ip<\xi_x,\xi_y> - \ip<\tau_x,\tau_y> = 2\cdot 1_{f(x)\ne f(y)}$, where $1_P$ is the indicator variable.
Similarly, $I - O_x^*O_y = 2\bigoplus_{j=1}^n 1_{x_j\ne y_j}$, where $\bigoplus$ is a direct sum of $n$ matrices, each of size $1\times 1$, resulting in an $n\times n$ diagonal matrix.
Dividing by 2, we get the adversary optimisation problem
\begin{equation}
\label{eqn:BooleanOne}
\onegamma\sB[1_{f(x)\ne f(y)} \midA \bigoplus_{j=1}^n 1_{x_j \ne y_j}].
\end{equation}
It is similar to the corresponding expression in~\cite{belovs:variations}, except that it is  unidirectional.
In \rf{sec:bidirectionality}, we will show that since $O_x = O_x^*$, the unidirectional version is equal to the usual bidirectional relative $\gamma_2$-bound.

The multi-objective optimisation problem~\rf{eqn:BooleanOne} exactly characterises the Las Vegas complexity of each of the $n$ individual input symbols on every input $x\in D$.

It is also possible to substitute the input oracle with the one that encodes $x_j$ in the phase.  
Namely, with the oracle $\tO^{(i)}_x\colon \bC^2\to\bC^2$ given by $\ket|b> \mapsto \ket |b\oplus x_j>$, where $\oplus$ is XOR here.
Indeed, in the Fourier basis, $\tO^{(i)}_x = I_1 \oplus O^{(i)}_x$, hence, the algorithm can just ignore the $I_1$ part.
The same holds for the output, if we require that the algorithm has to perform the transformation $\ket|b>\mapsto \ket|b\oplus f(x)>$ in the output space $\cK = \bC^2$.

\section{Subspace Conversion}
\label{sec:subspaceConversion}

In \rf{sec:linearConsistency}, we continue the settings of \rf{thm:exactTransformation} but without the assumption that input oracles are pairwise distinct.
This leads to our investigation of the linear consistency of feasible solutions.
The corresponding problem can be formulated as subspace conversion, which we analyse in \rf{sec:subspaceConversionProblem}, define the corresponding notion of complexity and extend the connection to the adversary bound.
Finally, in \rf{sec:compositionRevisited}, we revisit the functional composition property of \rf{sec:composition}.
The notion of complexity we introduced for the subspace conversion problem will allow us to formulate and prove a simpler estimate on the complexity of the composed algorithm.

\subsection{Linear Consistency}
\label{sec:linearConsistency}

Throughout the section, we assume we have a state conversion problem $\xi_x\mapsto \tau_x$ in $\cK$ with input oracles $O_x$, where $x$ ranges over $D$.
We will be particularly interested in pairs of inputs $x$, $y$ with $O_x = O_y$.

For $O\in\{O_x\mid x\in D\}$, let $D_O = \{x\in D\mid O_x = O\}$ and $\cK_O = \spn\{\xi_x\mid x\in D_O\}$.
The important point is that the algorithm performs the same linear transformation $\cA(O)$ on all $x\in D_O$.
Therefore, the pairs $\xi_x\mapsto\tau_x$ for $x\in D_O$ should be linearly consistent.
The adversary optimisation problem is in accord with this requirement as shown in the next result.

\begin{prp}
\label{prp:consistencySolution}
Assume that the adversary optimisation problem~\rf{eqn:adversary} for a state conversion problem $\xi_x\mapsto \tau_x$ with contraction oracles $O_x$ has a feasible solution.
Then, for each $O$, there exists a linear transformation $T_O\colon \cK_O\to\cK$ such that $\tau_x = T_O\xi_x$ for all $x\in D_O$.
Moreover, if $O$ is unitary, $T_O$ is unitary.
\end{prp}

\pfstart
Fix $O$, and restrict the optimisation problem to $D_O$.
Since $O$ is a contraction, $I-O^*O$ exists semi-definite, and $S = ((I-O^*O)\otimes I_{\cW})^{1/2}$ is defined.
By~\rf{eqn:advExplicitCondition}, we have
\[
\ip<\xi_x, \xi_y> - \ip<\tau_x,\tau_y> = \ipA<v_x,\;  ((I-O^*O)\otimes I_{\cW}) v_y> = \ipA<S v_x,\;S v_y>
\]
for all $x,y\in D_O$.  Thus, there exists a unitary that maps $\xi_x \mapsto \tau_x \oplus Sv_x$ for all $x$.  Hence, the mapping $T_O\colon \xi_x\mapsto \tau_x$ is linear.  If $O$ is unitary, then $S=0$, and $T_O$ is unitary.
\pfend

Note that the latter result is false for general linear transformations.
For instance, if $O=2I$ it is easy to construct a feasible solution for a non-linear state conversion $0\mapsto \k0$.
This does \emph{not} contradict \rf{thm:mainupper} though, because there the initial state is perturbed.
Effects like this is the main reason why we focus on contraction oracles.

Let us also consider linear consistency of feasible solutions.

\begin{defn}[Linear consistency of feasible solutions]
\label{defn:consistency}
We say that a feasible solution $v_x$ to the adversary optimisation problem~\rf{eqn:adversary} is \emph{linearly consistent} if, for each $O$, there exists a linear transformation $V_O\colon \cK_O\to \cM\otimes\cW$ such that $v_x = V_O\xi_x$ for all $x\in D_x$.
\end{defn}

One way to ensure this condition is to impose the following.

\begin{defn}[Linear independence assumption]
\label{defn:independence}
We say that a state conversion problem satisfies the \emph{linear independence assumption} if, for each $O$, the vectors in $\{\xi_x\mid x\in D_O\}$ are linearly independent.
\end{defn}

Under this assumption, we are losing nothing in relation to the transformation performed.
For each $O$, we can uniquely extend this state conversion to all $\xi\in\cK_O$ by linearity.
We will call the latter the \emph{linearly extended state conversion problem}.

\begin{prp}
\label{prp:linearConsistency}
We have
\begin{itemize}
\item[(a)] Any feasible solution obtained via \rf{thm:mainlower} is linearly consistent.
\item[(b)] Linear independence assumption implies linear consistency for all feasible solutions.  
\item[(c)] Moreover, under linear independence assumption, any feasible solution can be uniquely extended to a linearly consistent feasible solution to the linearly extended state conversion problem.
\end{itemize}
\end{prp}

\pfstart
For Point (a), use $V_O = \TotalQuery(\cA, O)$ from~\rf{eqn:TotalQuery}.
Point (b) is obvious.
Point (c) follows by linearity.
Let $x_i$ range over $D_O$ and $y_j$ over $D_{O'}$.  If
\[
\ip<\xi_{x_i}, \xi_{y_j}> - \ip<\tau_{x_i}, \tau_{y_j}> = \ipA<v_{x_i},\;  ((I-O^*O')\otimes I_{\cW}) v_{y_j}>
\]
for all $x_i$ and $y_j$, then
\[
\ipB<\sum_{i} a_i\xi_{x_i}, \sum_{j} b_j\xi_{y_j}> - \ip<\sum_{i} a_i\tau_{x_i}, \sum_{j} b_j\tau_{y_j}> = \ipB<\sum_{i} a_i v_{x_i},\;  ((I-O^*O')\otimes I_{\cW}) \sum_{j} b_jv_{y_j}>
\]
for all complex $a_i$ and $b_j$.
\pfend

Contrary to Propositions~\ref{prp:consistencySolution} and~\ref{prp:linearConsistency}(a), feasible solutions to the adversary optimisation problem need not satisfy linear consistency.
For example, let $D = \{0,1,+,-\}$, $\cK$ be a qubit, $\xi_0 = \k0$, $\xi_1 = \k1$, $\xi_+ = (\k0+\k1)/\sqrt2$, and $\xi_- = (\k0-\k1)/\sqrt2$.
We let $O_x = I$ and $\tau_x = \xi_x$ for all $x\in D$.
This problem in trivially solvable in 0 queries.
But the following is a feasible solution to~\rf{eqn:adversary}, which is not linearly consistent:
$v_0 = v_1 = 0$, and $v_+ = v_- = \k0$.

This poses a problem for strengthening \rf{thm:LasVegas}.
The feasible solution above gives us a point $(0,0,1,1)$ in the feasible objective space.
On the other hand, by the parallelogram identity, \rf{prp:parallelogram}, we have that for every algorithm $\cA$:
\[
L_+(\cA) + L_-(\cA) = L_0(\cA) + L_1(\cA),
\]
implying that the point $(0,0,1,1)$ is not in the topological closure of the feasible complexity space.

One can say that this example is artificial.
There is no need to deteriorate the solution $v_0=v_1=v_+=v_-=0$ by increasing $v_+$ and $v_-$.
The following result states that this is a general observation.
Recall that a solution to a multi-objective optimisation problem is called \emph{Pareto optimal} if it is not strictly dominated by any other solution.
In our case that means that there is no other feasible solution $v_x'$ to the same optimisation problem such that $\Dnorm|v_x'|^2\le \Dnorm|v_x|^2$ for all $x$ and $\Dnorm|v_x'|^2 < \Dnorm|v_x|^2$ for some $x$.

\begin{prp}
Any Pareto optimal solution to the adversary optimisation problem with contraction input oracles is linearly consistent.
\end{prp}

\begin{proof}
Let $v_x$ be a feasible solution.
Take any $O$.  We will show that either $v_x$ is linear consistent on $D_O$, or there is a feasible solution that strictly dominates $v_x$.

By an argument like in \rf{prp:linearConsistency}(b) and (c), there exists a feasible solution $v'_x$ that is linearly consistent on $D_O$ and equal to $v_x$ outside of $D_O$.
That is, there exists a linear map $V'\colon \cK_O\to\cM\otimes\cW$ such that $v'_x = V'\xi_x$ for all $x\in D_O$.
Recall the conditions~\rf{eqn:advExplicitCondition}:
\[
\ip<\xi_x, \xi_y> - \ip<\tau_x,\tau_y> = \ipA<v_x,\;  ((I-O^*_xO_y)\otimes I_{\cW}) v_y> .
\]
First, let us consider these constraints for $x\in D_O$ and $y\notin D_O$.
(The constraints with $x\notin D_O$ and $y\in D_O$ are equivalent to these ones due to the symmetry imposed on a unidirectional relative $\gamma_2$-optimisation problem, \rf{defn:onegamma}).
Let $\Pi_1$ denote the projector onto the span of $((I-OO_y)\otimes I_{\cW}) v_y$ as $y$ ranges over $D\setminus D_O$.
The constraints~\rf{eqn:advExplicitCondition} are linear in $v_x$ and define $\Pi_1 v_x$ uniquely.
Therefore, $\Pi_1 v_x = \Pi_1 v'_x$, and the mapping $\xi_x \mapsto \Pi_1 v_x$ is linear for $x\in D_O$.

Next, consider the constraints~\rf{eqn:advExplicitCondition} for $x,y\in D_O$.
Similarly to the proof of \rf{prp:consistencySolution}, the operator $(I-O^*O)\otimes I_W$ is positive semi-definite.
Let $S = ((I-O^*O)\otimes I_W)^{1/2}$ and $\Pi_2$ denote the projector onto its range.
We claim that the mapping $\xi_x \mapsto \Pi_2 v_x$ is linear on $D_O$ as well.
Indeed, for $x,y\in D_O$, we have:
\[
\ip<S v'_x, S v'_y> = \ip<\xi_x,\xi_y> - \ip<\tau_x,\tau_y> = \ip<S\Pi_2 v_x, S\Pi_2 v_y>
\]
Hence, there is a unitary $U$ that maps $Sv'_x \mapsto S\Pi_2 v_x$.  
Then, $\Pi_2 v_x = S^+USV'\xi_x$, where $S^+$ is the Moore-Penrose pseudoinverse.

Let $\Pi$ denote the projector onto the span of $\Pi_1$ and $\Pi_2$.
Since both $\xi_x \mapsto \Pi_1v_x$ and $\xi_x\mapsto \Pi_2v_x$ are linear on $D_O$, the same holds for $\xi_x \mapsto \Pi v_x$.
If we replace $v_x$ by $\Pi v_x$ for each $x\in D_O$, we get a feasible solution that is linearly consistent on $D_O$ and dominates $v_x$.
\end{proof}

Thus, by imposing the linear consistency condition of \rf{defn:consistency} we are only losing solutions where some of the objectives $\Dnorm|v_x|^2$ are artificially inflated.
In the following, we will only consider linearly consistent solutions.
By \rf{prp:linearConsistency}(c), we may assume the problem satisfies the linear independence condition.
We have the following generalisation of \rf{thm:exactTransformation}.

\begin{thm}
\label{thm:exactConsistent}
Let $\xi_x\mapsto \tau_x$ be a state conversion problem with unitary oracles $O_x$, where $x$ ranges over a finite set $D$, and which satisfies the linear independence condition.
Assume $(v_x)_{x\in D}$ is a feasible solution to the adversary optimization problem~\rf{eqn:adversary}, and
let $T_O$ and $V_O$ be like in \rf{prp:consistencySolution} and \rf{defn:consistency}.
Then, for every $\delta>0$, there exists a quantum algorithm $\cA$ with the following properties:
\begin{itemize}
\itemsep0pt
\item for every $O$, and $\xi\in\cK_O$, $\cA$ transforms $\xi \mapsto T_O\xi$ on input oracle $O$;
\item moreover, $\normA|{L(\cA, O, \xi) - \Dnorm|V_O\xi|^2 }|\le\delta$.
\end{itemize}
\end{thm}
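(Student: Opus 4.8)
The plan is to follow the proof of \rf{thm:exactTransformation} almost line for line, with a single conceptual change. In that proof, pairwise distinctness of the $O_x$ entered only through \rf{lem:cH_xi}, via the two facts that $I-O_x^*O_y=0$ exactly when $x=y$ and that a unitary algorithm preserves the affine space $\cR_\xi$ of matrices with prescribed diagonal. Here $I-O_x^*O_y=0$ exactly when $O_x=O_y$, and a unitary algorithm preserves every diagonal \emph{block} of a Gram matrix; so I would replace $\cR_\xi$ by the affine space $\cR$ of Hermitian $D\times D$ matrices that agree with $G_\xi$ on each block $D_O\times D_O$, and replace the diagonal regulariser by the block-diagonal matrix $M$ whose $D_O$-block is the restriction of $G_\xi$ to $D_O\times D_O$.

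First I would record the needed structural facts. Since $\tau_x=T_O\xi_x$ with $T_O$ unitary (\rf{prp:consistencySolution}), $G_\xi$ and $G_\tau$ coincide on every diagonal block, so both lie in $\cR$. By the linear independence assumption each block of $G_\xi$ is a Gram matrix of linearly independent vectors, hence positive definite, so $M\in\cR$ and $M\succ 0$. For any algorithm $\cA'$ and unitary oracle $O$ the operator $\cA'(O)$ is unitary, hence preserves $\ip<\xi_x,\xi_y>$ for $x,y\in D_O$; thus every Gram matrix achievable from $G_\xi$ (with oracles $O_x$, or with oracles $O_x^*$) again lies in $\cR$, the analogue of \rf{lem:cH_xi}(a). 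Finally, the analogue of \rf{lem:cH_xi}(b) holds: for any $N,N'\in\cR$ the optimisation problem $\onegamma\sA[{N\elem[x,y]-N'\elem[x,y]\mid I-O_x^*O_y}]_{x,y\in D}$ (and likewise with $I-O_xO_y^*$) has a feasible solution by \rf{prp:onegammaFeasible}, because the relevant $\Delta_{x,x}$ vanishes and $\Delta_{x,y}=0$ forces $O_x=O_y$, hence $x,y$ into a common block, where $N$ and $N'$ both equal $G_\xi$.

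With these substitutions the construction of \rf{thm:exactTransformation} goes through unchanged: using \rf{cor:main} I would push $G_\xi$, and (via the oracles $O_x^*$) $G_\tau$, to positive-definite matrices $M',M''$; by the facts above $M',M''\in\cR$, so they still agree on diagonal blocks, and the $\cR$-version of \rf{lem:cH_xi}(b) together with \rf{lem:exactForPositive} yields an exact algorithm $\cC$ from $M'$ to $M''$; the main step $\cD_\eps$ realising $(1-\eps)G_\xi+\eps M''\mapsto(1-\eps)G_\tau+\eps M''$ comes from \rf{lem:exactForPositive} with the feasible solution $\sqrt{1-\eps}\,v_x$. Composing the scaled-down subroutines $\cB_\eps,\cC_\eps,\cD_\eps,\cE^{-1}_\eps$ as in \rf{eqn:chainOfTransformations} produces an algorithm $\cA$ with $\cA(O_x)\xi_x=\tau_x$ for all $x\in D$ and with $L_x(\cB_\eps)=\eps L_x(\cB)$, $L_x(\cC_\eps)=\eps L_x(\cC)$, $L_x(\cD_\eps)=(1-\eps)\Dnorm|v_x|^2$, $L_x(\cE^{-1}_\eps)=\eps L_x(\cE)$.

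It remains to upgrade the per-label conclusions to the per-subspace ones. For the action: all $x\in D_O$ share the oracle $O$, so $\xi\mapsto\cA(O)\xi$ is linear on $\cK_O$ and equals $T_O$ on the basis $\{\xi_x\mid x\in D_O\}$, hence on all of $\cK_O$. For the complexity: \rf{lem:exactForPositive} ensures that, run on $\xi_x$, the subroutine $\cD_\eps$ processes exactly $\sqrt{1-\eps}\,v_x/\sqrt T$ on every query; since $v_x=V_O\xi_x$ is linearly consistent and $\Query_t$ is linear, run on $\xi=\sum_i a_i\xi_{x_i}\in\cK_O$ it processes $\sqrt{1-\eps}\,V_O\xi/\sqrt T$ on every query, so $L(\cD_\eps,O,\xi)=(1-\eps)\Dnorm|V_O\xi|^2$; by \rf{prp:sequential} and \rf{prp:scaling} the remaining three subroutines add only $\eps\bigl(L(\cB,O,\xi)+L(\cC,O,\cdot)+L(\cE^{-1},O,\cdot)\bigr)$, which is continuous in the initial state and hence bounded on the unit ball of $\cK_O$, so taking $\eps$ small enough gives $\normA|{L(\cA,O,\xi)-\Dnorm|V_O\xi|^2}|\le\delta$ uniformly over unit $\xi\in\cK_O$; this pins down $L(\cA,O,\cdot)$ on all of $\cK_O$. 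I expect the genuine work to be exactly these first steps: identifying $\cR$ as the right invariant and checking that both achievable Gram matrices and the regularisers $M,M',M''$ lie in it, and verifying that linear consistency of $(v_x)$ survives the scaling $v_x\mapsto\sqrt{1-\eps}\,v_x$ and propagates through $\Query_t$ so that the complexity identity holds on all of $\cK_O$ and not merely on the vectors $\xi_x$.
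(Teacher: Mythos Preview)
Your proposal is correct and follows essentially the same route as the paper: redefine $\cR_\xi$ as the affine space of Hermitian matrices agreeing with $G_\xi$ on each block $D_O\times D_O$, take $M$ to be the corresponding block-diagonal matrix (positive definite by linear independence), rerun the construction of \rf{thm:exactTransformation}, and then extend the conclusion from the basis $\{\xi_x\}$ to all of $\cK_O$ by linearity of $\cA(O)$ and of $\Query_t$. Your write-up is in fact slightly more careful than the paper's own proof in making the uniformity over $\xi$ explicit (restricting to unit vectors and using boundedness of the auxiliary subroutines on the unit ball of $\cK_O$).
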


\pfstart
The proof is a modification of the proof of \rf{thm:exactTransformation}.
Redefine $\cR_\xi$ as the real affine space of $D\times D$ Hermitian matrices $A$ satisfying $A\elem[x,y] = \ip<\xi_x, \xi_y>$ for all $x,y\in D$ with $O_x = O_y$.
The two points of \rf{lem:cH_xi} still hold.
The matrix $M$ is defined by
\begin{equation}
\label{eqn:M2}
M\elem[x,y] = \begin{cases}
G_\xi\elem[x,y]=G_\tau\elem[x,y], &\text{if $O_x = O_y$;}\\
0,&\text{otherwise.}
\end{cases}
\end{equation}
Again, $M\in\cR_\xi$, as well as $M\succ 0$, since it is a block-diagonal matrix whose blocks are Gram matrices of linearly independent collections of vectors.
Other than that, the algorithm is exactly the same as in \rf{thm:exactTransformation}.

The algorithm transforms $\xi_x\mapsto \tau_x$ on the input oracle $O$ for all $x\in D_O$.
By linearity, it transforms $\xi\mapsto T_O\xi$ for all $\xi\in\cK_O$.
The same linearity property holds for all queries made by the algorithm.
Therefore, the Las Vegas query complexity of the $\cD_\eps$ subroutine of the algorithm when the $\cA$ is executed on the initial state $\xi\in\cK_O$ is $(1-\eps)\Dnorm|V_O\xi|^2$.
The complexities of other subroutines tend to zero as $\eps\to 0$, which gives the required result.
\pfend

\subsection{Subspace Conversion Problem}
\label{sec:subspaceConversionProblem}

As one can see, what \rf{thm:exactConsistent} actually solves is the subspace conversion problem from \rf{defn:subspaceConversion}.
Let us restate the problem assuming general input oracles.

\begin{defn}[Subspace Conversion, Restated]
\label{defn:subspaceConversionRestated}
Let $D$ be a set of labels, and $\cM$ and $\cK$ be vector spaces.
For each $x\in D$, let $O_x\colon \cM\to\cM$ be a linear transformation.
A \emph{subspace conversion problem} is given by a collection of linear maps $T_x\colon \cK_x\to \cK$ with $\cK_x\subseteq \cK$, where $x$ ranges over $D$.
Assume that $\cK$ is embedded in the space $\cH$ of a quantum algorithm $\cA$.
We say that the algorithm $\cA$ \emph{solves} the subspace conversion problem $T_x$ on input oracles $O_x$, if, for every $x\in D$, the map $\cA(O_x)$ agrees with $T_x$ on $\cK_x$.
We define the complexity $L_x(\cA)$ on the input $x$ as the supremum of $L(\cA, O, \xi)$ as $\xi$ ranges over the unit vectors in $\cK_x$.
The remaining complexity-related definitions are as in \rf{defn:LAStateConversion}.
\end{defn}

In the case of a single input oracle, the supremum in the definition of $L_x(\cA)$ is just the maximum.
In the case of multiple input oracles of \rf{sec:multipleOracles}, the supremum is understood with respect to the dominance relation $u\le v$.
In other words, it is the entry-wise maximum.  
Note that it is not true, in general, that there exists a unit $\xi\in\cK_x$ such that $L_x(\cA) = L(\cA, O_x, \xi)$, since different input oracles can attain their maxima at different $\xi$.

The corresponding adversary optimisation problem is as follows:
\begin{defn}[Adversary for Subspace Conversion]
Consider a subspace conversion problem as defined above with input oracles $O_x$, and let $K_x$ be the projector onto $\cK_x$.
The corresponding adversary optimisation problem is given by
\begin{equation}
\label{eqn:adversarySubspaceConversion}
\onegamma \sA[K_x^*K_y - T_x^*T_y \mid I-O_x^*O_y  ]_{x,y\in D}.
\end{equation}
\end{defn}

Note that while~\rf{eqn:onegammaMultiobjective} states that $V_x\colon \cK\to\cM\otimes\cW$, in this optimisation problem we actually have $V_x\colon \cK_x \to \cM\otimes\cW$, as $T_x$ is defined on $\cK_x$, and the coimage of $K_x$ is $\cK_x$ as well.
In particular, for the state conversion problem, where each $\cK_x$ is one-dimensional, we get back the definition from~\rf{eqn:adversary}.
On the other extreme, for the unitary implementation problem, where $\cK_x=\cK$ for all $x$, Eq.~\rf{eqn:adversarySubspaceConversion} reads as
\[
\onegamma \sA[I - T_x^*T_y \mid I-O_x^*O_y  ]_{x,y\in D}.
\]

\rf{thm:exactConsistent} can be reformulated as follows.
\begin{thm}
Let $T_x\colon \cK_x\to \cK$ be an isometric subspace conversion problem with unitary input oracles $O_x$ and finite set of labels $D$.
Then, the topological closure of the feasible complexity space of this problem coincides with the feasible objective space of the corresponding adversary optimisation problem~\rf{eqn:adversarySubspaceConversion}.
\end{thm}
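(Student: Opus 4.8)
The plan is to recognise this theorem as a packaging of three ingredients already available: the lower bound \rf{thm:mainlower} (in its subspace form), the closedness of the feasible objective space (\rf{clm:feasibleClosed}), and \rf{thm:exactConsistent}. Accordingly I would prove the two inclusions separately, and throughout I may assume that the $O_x$ are pairwise distinct: if $O_x = O_y$ then $\cA(O_x) = \cA(O_y)$ forces $T_x$ and $T_y$ to extend jointly, so \rf{prp:consistencySolution} lets one consolidate all labels carrying a common oracle into a single label with subspace $\spn\{\cK_x : O_x = O\}$ without changing either side of the claimed equality.

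For the inclusion ``closure of the feasible complexity space $\subseteq$ feasible objective space'', let $\cA$ solve the subspace conversion problem and put $V_x = \TotalQuery(\cA, O_x)\big|_{\cK_x}\colon \cK_x\to\cM\otimes\cW$ with $\cW$ the direct sum of $T$ copies of $\Ib$. For $\xi\in\cK_x$ and $\eta\in\cK_y$ one has $\cA(O_x)\xi = T_x\xi$ and $\cA(O_y)\eta = T_y\eta$, so the telescoping computation of \rf{thm:mainlower} (in particular~\rf{eqn:mainderivation}) gives $\ip<\xi,\eta> - \ip<T_x\xi, T_y\eta> = \ip<V_x\xi,\; ((I-O_x^*O_y)\otimes I_\cW) V_y\eta>$. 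Both sides are sesquilinear in $(\xi,\eta)$, so this is exactly the constraint $K_x^*K_y - T_x^*T_y = V_x^*((I-O_x^*O_y)\otimes I_\cW)V_y$ of~\rf{eqn:adversarySubspaceConversion}; hence $(V_x)$ is feasible. Moreover $\|V_x^{(i)}\xi\|^2 = L^{(i)}(\cA, O_x, \xi)$ by~\rf{eqn:DnormSum}, so $\|V_x^{(i)}\|^2 = \sup_{\|\xi\|=1,\,\xi\in\cK_x} L^{(i)}(\cA,O_x,\xi)$, which is precisely the $i$-th entry of $L_x(\cA)$ from \rf{defn:subspaceConversionRestated}; thus the objective profile of $(V_x)$ equals the complexity profile of $\cA$. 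Therefore the feasible complexity space is contained in the feasible objective space, and since the latter is topologically closed (\rf{clm:feasibleClosed}, in its general form), so is its closure.

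For the reverse inclusion I would reduce to \rf{thm:exactConsistent}. Fix an orthonormal basis $e_{x,1},\dots,e_{x,d_x}$ of each $\cK_x$ and form the state conversion problem on $D' = \{(x,k)\}$ with $O'_{(x,k)} = O_x$, $\xi_{(x,k)} = e_{x,k}$, $\tau_{(x,k)} = T_x e_{x,k}$; because the $O_x$ are distinct this satisfies the linear independence condition. A feasible solution $(V_x)$ of~\rf{eqn:adversarySubspaceConversion} yields $v_{(x,k)} := V_x e_{x,k}$, which one checks is feasible for the adversary problem~\rf{eqn:adversary} of this state conversion problem (expand $\ip<e_{x,i},e_{y,j}> - \ip<T_x e_{x,i}, T_y e_{y,j}> = \ip<e_{x,i}, (K_x^*K_y - T_x^*T_y)e_{y,j}>$ and use feasibility of $(V_x)$), and which is linearly consistent by \rf{prp:linearConsistency}(b) with consistency map $V_{O_x} = V_x$ on $\cK_{O_x} = \cK_x$. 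Applying \rf{thm:exactConsistent} produces, for each $\delta>0$, an algorithm $\cA$ transforming $\xi\mapsto T_x\xi$ for all $\xi\in\cK_x$ — hence solving the original subspace conversion problem — with $\normA|{L(\cA,O_x,\xi) - \Dnorm|V_x\xi|^2}|\le\delta$ for every unit $\xi\in\cK_x$. Since this bound is uniform in $\xi$, taking the supremum over unit $\xi\in\cK_x$ gives $\normA|{L_x(\cA) - \Dnorm|V_x|^2}|\le\delta$ componentwise, where $\Dnorm|V_x|^2 = (\|V_x^{(i)}\|^2)_i$ is exactly the objective profile of $(V_x)$. Letting $\delta\to 0$ places that objective profile in the closure of the feasible complexity space.

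The main obstacle is the treatment of oracles that coincide on several labels, together with the matching of the operator-norm objective of~\rf{eqn:adversarySubspaceConversion} with the defining supremum of $L_x(\cA)$. Without the consolidation step the feasible objective space can be strictly larger than the closure of the feasible complexity space — the four-label example $\xi_0 = \k0$, $\xi_1 = \k1$, $\xi_\pm = (\k0\pm\k1)/\sqrt2$ with all $O_x = I$ shows this, since the parallelogram identity (\rf{prp:parallelogram}) forces $L_0 + L_1 = L_+ + L_-$ for every algorithm while $(0,0,1,1)$ is a feasible objective profile — and it is precisely the passage to the subspaces $\spn\{\cK_x : O_x = O\}$, legitimised by \rf{prp:consistencySolution}, that removes the gap and restores the linear independence hypothesis needed by \rf{thm:exactConsistent}. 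Everything else is routine: the telescoping identity is already in \rf{thm:mainlower}, the operator norm $\|V_x^{(i)}\|^2$ matches $\sup_{\xi}L^{(i)}(\cA,O_x,\xi)$ by construction, and the $\delta$-estimate survives the supremum because \rf{thm:exactConsistent} delivers it uniformly over initial states.
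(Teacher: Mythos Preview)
Your main argument matches the paper's approach --- reduce subspace conversion to a linearly-independent state conversion problem and invoke \rf{thm:exactConsistent} for the upper bound and (the operator form of) \rf{thm:mainlower} for the lower bound --- and you spell out more carefully than the paper does why the operator-norm objective $\Dnorm|V_x|^2$ coincides with the supremum defining $L_x(\cA)$.

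The consolidation step, however, does not work as stated. Merging all labels sharing a common oracle into a single label changes the ambient space of both sides from $\bR^D\otimes\bR^s$ to $\bR^{D^*}\otimes\bR^s$ with $|D^*|<|D|$, so ``without changing either side of the claimed equality'' cannot be literally true. Worse, the theorem is simply false when the $O_x$ are not pairwise distinct: take $D=\{0,1\}$, $O_0=O_1=I_\cM$, $\cK_0=\cK_1=\cK$, $T_0=T_1=I_\cK$; then $I-O_0^*O_1=0$ makes the adversary constraint vacuous, so $(0,1)$ is a feasible objective profile, yet every algorithm has $L_0(\cA)=L_1(\cA)$. This is the same obstruction as your four-label example, and consolidation does not rescue the \emph{original} problem --- it only manufactures a \emph{different} problem for which the equality holds. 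The paper's one-line proof (``follows from \rf{thm:exactConsistent}'') tacitly assumes one label per distinct oracle, since that is precisely what the passage in \rf{sec:linearConsistency} from state conversion under the linear independence condition to the subspace formulation produces. The correct fix is to add the hypothesis that the $O_x$ are pairwise distinct; with that in place, your argument goes through unchanged.
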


\pfstart
If $V_x$ is a feasible solution, then by \rf{eqn:onegammaMultiobjectiveCondition}:
\[
K_x^*K_y - T_x^*T_y = V_x^*((I-O_x^*O_y)\otimes I_\cW)V_y.
\]
Multiplying by $\xi^*$ on the left and $\xi'$ on the right gives
\begin{equation}
\label{eqn:Kxxi}
\ip<K_x\xi, K_y\xi'> - \ip<T_x\xi, T_x\xi'> = \ipA<V_x\xi,\;((I-O_x^*O_y)\otimes I_\cW)V_y\xi'>.
\end{equation}
Hence, $V_x\xi$ is a feasible solution to the adversary optimisation problem of state conversion $\xi\mapsto T_x\xi$ with oracles $O_x$ as $x$ ranges over $D$ and $\xi$ over $\cK_x$.
The theorem follows from~\rf{thm:exactConsistent}.

In other words, Eq.~\rf{eqn:adversarySubspaceConversion} is just a way to write down a linearly consistent feasible solution to an adversary optimisation problem, where $V_x$ acts like $V_O$ in \rf{defn:consistency}.
The objective $\Dnorm|V_x|^2$ is then the largest (entry-wise) complexity on the oracle $O_x$ as $\xi$ ranges over unit vectors in $\cK_x$.
\pfend

\subsection{Composition, Revisited}
\label{sec:compositionRevisited}
Here we revisit the composition properties from \rf{sec:composition} using the notions from \rf{sec:subspaceConversionProblem}.

Let $T_x\colon \cN_x\to\cN$ be a subspace conversion problem as $x$ ranges over $D$, and $\cB$ be an algorithm solving this problem on input oracles $O_x\colon \cM\to\cM$.
For each $x\in D$, let $O'_x\colon\cN\to\cN$ agree with $T_x$ on $\cN_x$.
Let $\xi_x\mapsto\tau_x$ be a state conversion problem with the input oracles $O'_x$.
Assume that a sliced algorithm $\cA$ solves this problem and has the following property:
\begin{equation}
\label{eqn:compositionAssumption}
\forall x\in D\; \forall t\colon \Query_t(\cA, O'_x)\xi_x \in \cN_x.
\end{equation}

We can estimate the complexity of the composed algorithm as the product of complexities of its constituents.

\begin{prp}
\label{prp:compositionRevisited}
Under the above assumption, the composed algorithm $\cA\circ\cB$ defined in \rf{prp:composition} solves the state conversion problem $\xi_x\mapsto\tau_x$ with input oracles $O_x$.
Moreover,
\begin{equation}
\label{eqn:compositionRevisited}
L_x(\cA\circ\cB) \le L_x(\cA)L_x(\cB).
\end{equation}
\end{prp}

\begin{proof}
By~\rf{eqn:compositionAssumption} and the fact that $\cB$ solves the subspace conversion problem, we have that $O'_x$ and $\cB(O_x)$ satisfy the condition~\rf{eqn:inputOracleConsistency} of \rf{prp:inputSpaceExtension}.
Therefore, we can replace the input oracle $O'_x$ of $\cA$ by $\cB(O_x)$.
The first statement then follows from \rf{prp:composition}.

Concerning complexity, we have:
\begin{align*}
L(\cA\circ\cB, O_x,\xi_x) &= \sum_t L\sB[\cB, O_x, {\Query_t\sA[\cA, \cB(O_x)]}\xi_x] \\
& = \sum_t L\sB[\cB, O_x, {\Query_t\sA[\cA, O'_x]}\xi_x] \\
& \le L_x(\cB)\sum_{t} \norm|\Query_t\sA[\cA, O'_x]|^2 = L_x(\cB) L_x(\cA).
\end{align*}
Here, we used~\rf{eqn:composition} on the first step,
\rf{prp:inputSpaceExtension} on the second step,
and the definition of $L_x(\cB)$ and \rf{prp:scaling} on the third step.
\end{proof}

In the above proposition it is assumed that the algorithm $\cA$ has a single input oracle (while $\cB$ can have multiple input oracles).
Similarly, it is possible to get an analogue of~\rf{eqn:compositionMultipleOracles}.
Again, assume $\cA$ has multiple input oracles $O^{(i)}\colon \cN^{(i)}\to\cN^{(i)}$.
For each $i$, let $T^{(i)}_x\colon \cN^{(i)}_x \to \cN^{(i)}$ be a subspace conversion problem with $x$ ranging over $D$, 
and $\cB^{(i)}$ be an algorithm that solves the above problem with input oracle $O_x\colon \cM\to\cM$.
By \rf{prp:directSum}, the algorithm $\cB = \bigoplus_i \cB^{(i)}$ solves state conversion $T_x = \bigoplus_i T^{(i)}_x \colon \cN_x \to \cN$, where $\cN_x = \bigoplus_i \cN^{(i)}_x$ and $\cN = \bigoplus_i \cN^{(i)}$.
Let, for each $x\in D$ and $i$, ${O'_x}^{(i)}$ be a linear map on $\cN^{(i)}$ that agrees with $T^{(i)}_x$ on $\cN^{(i)}_x$, and denote $O'_x = \bigoplus_i {O'_x}^{(i)}$.
Then, using a similar estimate as in the proof of \rf{prp:compositionRevisited}, but with~\rf{eqn:compositionMultipleOracles} instead of~\rf{eqn:composition}, we get:
\[
L_x(\cA\circ \cB) \le \sum_{i} L_x(\cA)\elem[i]\cdot L_x(\cB^{(i)}).
\]
Above we assumed for simplicity that all the subspace conversion problems $T^{(i)}_x$ have the same set of labels $D$.
This is without loss of generality.
If the $i$-th problem has the set of labels $D^{(i)}$, it is possible to take $D$ as the Cartesian product $D = \prod_i D^{(i)}$ or some subset thereof.

\section{Bidirectionality}
\label{sec:bidirectionality}
In this section, we consider aspects specific to bidirectional access to the input oracle.
In particular, we show how one can obtain the main results from~\cite{belovs:variations}.
As mentioned in the introduction, bidirectional case is just a special case of the unidirectional case.
\begin{prp}
\label{prp:bidirectional}
For each state conversion problem with unitary input oracle $(O_x)_{x\in D}$, the feasible complexity space assuming bidirectional access to the oracle $O_x$ coincides to the feasible complexity space assuming unidirectional access to the oracle $O_x\oplus O_x^*$.
Moreover, the corresponding Monte Carlo complexities differ at most by a factor of 2.
\end{prp}

\pfstart
Each algorithm $\cA$ with bidirectional access to $O_x$ can be simulated with unidirectional access to $O_x\oplus O_x^*$ by using the parts $O_x$ and $O_x^*$ of the oracle to process direct and reverse queries of $\cA$.  Both Monte Carlo and Las Vegas complexities do not change.

On the other hand, if $\cA$ has unidirectional access to $O_x\oplus O_x^*$, it can be simulated with bidirectional access to $O_x$ by first processing the $O_x$-part with the direct query, and then the $O_x^*$-part with the reverse query.  Las Vegas complexity does not change, and the Monte Carlo complexity grows by a factor of 2.
\pfend

Let us define the (bidirectional) relative $\gamma_2$-norm.
We start with the single-objective version, which is the version used in~\cite{belovs:variations}.

\begin{defn}[Bidirectional relative $\gamma_2$-bound]
\label{defn:twogamma}
Let $\cK$, and $\cM$ be vector spaces, and $D$ be a set of labels.
Let $E = \{E_{xy}\}$ and $\Delta = \{\Delta_{xy}\}$, where $x,y\in D$ be two families of linear operators: $A_{xy}\colon \cK\to\cK$ and $\Delta_{xy}\colon \cM\to\cM$.

The \emph{relative $\gamma_2$-norm}
\[
\twogamma(E | \Delta) = \twogamma(E_{xy} \mid \Delta_{xy})_{x,y\in D},
\]
is defined as the optimal value of the following optimisation problem, where $U_x$ and $V_x$ are linear operators,
\begin{subequations}
\label{eqn:twogammaMultidimensional}
\begin{alignat}{3}
&\mbox{\rm minimise} &\quad& \max\nolimits_{x\in D} \max\{\norm|U_x|^2, \norm|V_x|^2\} &\quad&\\
& \mbox{\rm subject to}&&  
E_{xy} = U_x^* (\Delta_{xy}\otimes I_{\cW}) V_y && \text{\rm for all $x, y\in D$;}  \\
&&& \text{$\cW$ is a vector space}, &&
U_x, V_x \colon \cK\to \cM\otimes\cW.
\end{alignat}
\end{subequations}
\end{defn}

The one-dimensional version is :
\begin{subequations}
\label{eqn:twogamma}
\begin{alignat}{3}
&\mbox{\rm minimise} &\quad& \max\nolimits_{x\in D} \max\{\norm|u_x|^2, \norm|v_x|^2\} &\quad&\\
& \mbox{\rm subject to}&&  
e_{xy} = \ipA<u_x,\;  (\Delta_{xy}\otimes I_{\cW}) v_y> && \text{\rm for all $x, y\in D$;}  \\
&&& \text{$\cW$ is a vector space}, &&
u_x, v_x \in \cM\otimes\cW.
\end{alignat}
\end{subequations}

The relative $\gamma_2$-norm can be also defined in terms of the unidirectional $\gamma_2$-bound as
\begin{equation}
\label{eqn:bidirectional}
\twogamma(E_{xy} \mid \Delta_{xy})_{x,y\in D} = \onegamma (\widetilde E_{xy} \mid \widetilde\Delta_{xy})_{x,y\in D\cup D'}.
\end{equation}
Here $D' = \{x'\mid x\in D\}$ is a disjoint copy of $D$,
and $\widetilde E$ and $\widetilde \Delta$ are defined as
\begin{align*}
\tE_{x,y'} &= \tE_{x',y}^* = E_{x,y},&
\tE_{x,y}&=\tE_{x',y'}=0,\\
\tDelta_{x,y'} &= \tDelta_{x',y}^* = \Delta_{x,y},&
\tDelta_{x,y}&=\tDelta_{x',y'}=0
\end{align*}
for all $x,y\in D$.
This instantly gives a dual for the the one-dimensional version of the bound, which was already proven in~\cite{belovs:variations}:

\begin{thm}
\label{thm:twogammadual}
The optimal value of~\rf{eqn:twogamma} is equal to the optimal value of the following optimization problem:
\begin{subequations}
\label{eqn:dualAdversary}
\begin{alignat}{2}
&\mbox{\rm maximise} &\quad& \|\Gamma\circ E\| \\
& \mbox{\rm subject to}&&  \|\Gamma\circ \Delta\|\le 1, 
\end{alignat}
\end{subequations}
where $\Gamma$ ranges over $D\times D$ matrices.
\end{thm}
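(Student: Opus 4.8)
The plan is to derive this from the unidirectional duality theorem, \rf{thm:onegammadual}, via the translation~\rf{eqn:bidirectional}. First I would invoke~\rf{eqn:bidirectional} to write $\twogamma(e_{xy}\mid \Delta_{xy})_{x,y\in D} = \onegamma(\tE_{xy}\mid\tDelta_{xy})_{x,y\in D\cup D'}$, where $\tE$ and $\tDelta$ are the ``doubled'' families supported only on the $D\times D'$ and $D'\times D$ blocks. Since all $e_{xy}$ are scalars, so are all $\tE_{xy}$, hence by \rf{thm:onegammadual} the left-hand optimal value equals $\subgamma(\tE\mid\tDelta)$, i.e.\ the optimisation problem~\rf{eqn:onegammadual} over Hermitian $(D\cup D')\times(D\cup D')$ matrices $\widetilde\Gamma$: maximise $\lambda_{\max}(\widetilde\Gamma\circ\tE)$ subject to $\lambda_{\max}(\widetilde\Gamma\circ\tDelta)\le 1$.

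The core of the argument is then to show that this ``doubled'' $\lambda_{\max}$-program has the same optimal value as the program~\rf{eqn:dualAdversary}, which maximises $\|\Gamma\circ E\|$ over arbitrary (not necessarily Hermitian) $D\times D$ matrices $\Gamma$ subject to $\|\Gamma\circ\Delta\|\le1$. The key step is the standard passage between the spectral norm of a rectangular/square matrix $M$ and the largest eigenvalue of its Hermitian dilation $\begin{pmatrix}0 & M\\ M^* & 0\end{pmatrix}$, for which $\lambda_{\max} = \|M\|$. Concretely, given a Hermitian $\widetilde\Gamma$ feasible for the doubled problem, write it in block form with blocks indexed by $D,D'$; because $\tE$ and $\tDelta$ vanish on the diagonal blocks, only the off-diagonal blocks of $\widetilde\Gamma$ contribute, and one checks $\widetilde\Gamma\circ\tE$ has the dilation form $\begin{pmatrix}0 & \Gamma\circ E\\ (\Gamma\circ E)^* & 0\end{pmatrix}$ where $\Gamma$ is the $D\times D'$ off-diagonal block of $\widetilde\Gamma$ (relabelling $D'$ as $D$ via $x'\leftrightarrow x$, and using $\tE_{x,y'}=E_{x,y}$). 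Hence $\lambda_{\max}(\widetilde\Gamma\circ\tE)=\|\Gamma\circ E\|$ and likewise $\lambda_{\max}(\widetilde\Gamma\circ\tDelta)=\|\Gamma\circ\Delta\|$, so feasibility and objective match. Conversely, any $D\times D$ matrix $\Gamma$ feasible for~\rf{eqn:dualAdversary} yields a Hermitian $\widetilde\Gamma$ with that $\Gamma$ in the $(D,D')$-block, its adjoint in the $(D',D)$-block, and zeros on the diagonal blocks; the same dilation identity shows this $\widetilde\Gamma$ is feasible for the doubled problem with the same objective value. This two-way correspondence gives equality of the two optimal values, completing the proof.

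I expect the main obstacle to be purely bookkeeping: carefully matching the index sets $D$ and $D'$ under the involution $x\leftrightarrow x'$ so that the Hadamard-with-blocks product $\widetilde\Gamma\circ\tDelta$ really does assemble into the $2\times2$ block dilation of $\Gamma\circ\Delta$, and checking that diagonal blocks of $\widetilde\Gamma$ are genuinely irrelevant (they Hadamard-multiply against the zero blocks $\tE_{x,y}=\tDelta_{x,y}=0$ and $\tE_{x',y'}=\tDelta_{x',y'}=0$, so they may be taken to be $0$ without loss). There is no analytic difficulty here since everything reduces to the elementary dilation fact $\lambda_{\max}\begin{pmatrix}0&M\\M^*&0\end{pmatrix}=\|M\|$; the only subtlety is that~\rf{eqn:dualAdversary} ranges over all matrices $\Gamma$ while the doubled problem ranges over Hermitian $\widetilde\Gamma$, and the block construction is exactly what reconciles these two quantifier ranges. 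Alternatively, one could prove Theorem~\ref{thm:twogammadual} directly by mimicking the strong-duality argument of \rf{thm:onegammadual} (found in \rf{app:duality}) with $\lambda_{\max}$ replaced by $\|\cdot\|$ throughout, but routing through~\rf{eqn:bidirectional} is shorter and reuses work already done.
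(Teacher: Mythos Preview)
Your proposal is correct and matches the paper's proof essentially verbatim: the paper also invokes the representation~\rf{eqn:bidirectional}, applies \rf{thm:onegammadual}, and then uses the dilation identity $\|A\| = \lambda_{\max}\begin{pmatrix}0 & A\\ A^* & 0\end{pmatrix}$ to pass from the $\lambda_{\max}$ formulation to the spectral-norm formulation. The bookkeeping concerns you flag are real but routine, and the paper leaves them implicit.
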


\pfstart
Use the above representation, \rf{thm:onegammadual}, and the fact that
\[
\|A\| = \lambda_{\max} 
\begin{pmatrix}
0 & A \\ A^* & 0
\end{pmatrix}
.\qedhere
\]
\pfend

Now let us move to the connection between unidirectional and bidirectional oracles.
By \rf{prp:bidirectional}, unidirectional access to $O_x$ is equivalent to bidirectional access to $O_x\oplus O_x^*$.
The following proposition shows that we can substitute unidirectional $\gamma_2$ bound with oracle $O_x\oplus O^*_x$ with bidirectional $\gamma_2$-norm with oracle $O_x$.

\begin{prp}
\label{prp:1to2}
Let $O_x = O_x^{(1)}\oplus \cdots\oplus O_x^{(s)}$ be unitary oracles as $x$ ranges over $D$, and assume that $e_{y,x} = e^*_{x,y}$ and $e_{x,x}=0$ are complex numbers for all $x,y\in D$.
Consider the following two optimization problems 
\[
\twogamma(e_{x,y} \mid I-O_x^*O_y)_{x,y\in D} 
\qqand
\onegamma\sA[e_{x,y} \mid I-(O_x^*O_y \oplus O_xO_y^*)]_{x,y\in D} .
\]
Then, for every collection $(L_x)_{x\in D}$, with $L_x\in \bR^s$, the following statements are equivalent:
\itemstart
\item[(a)] there exists a feasible solution $u_x, v_x$ to the first optimization problem with $L_x = (\Dnorm|u_x|^2 + \Dnorm|v_x|^2)/2$;
\item[(b)] there exists a feasible solution $u_x, v_x$ to the first optimization problem with $L_x = \Dnorm|u_x|^2 = \Dnorm|v_x|^2$;
\item[(c)] there exists a feasible solution $\tilde v_x$ to the second optimization problem with $L_x = \Dnorm|\tilde v_x|^2$.
\itemend
\end{prp}

\pfstart
First, let us prove $(a)\Rightarrow(c)$.
Assume we have a feasible solution $u_x$, $v_x$ to~\rf{eqn:twogamma} with $\Delta_{x,y} = I - O_x^*O_y$.  Let us denote $\tO_x = O_x\otimes I_{\cW}$.
In particular, we have
\[
e_{x,y} = \ip<u_x, (I-\tO_x^*\tO_y) v_y>
\;\Longrightarrow\;
e_{x,y} = \ip<u_x, v_y> - \ip<\tO_x u_x, \tO_y v_y> ,
\]
and
\[
e_{y,x} = \ip<u_y, (I - \tO_y^*\tO_x) v_x>
\;\Longrightarrow\;
e_{x,y} = \ip<v_x, u_y> - \ip<\tO_x v_x, \tO_y u_y>.
\]

Consider the following two equalities
\begin{align*}
\Bigl<u_x + v_x,\; \rlap{$\sA[I - \tO_x^*\tO_y] (u_y + v_y)\Bigr>\hfil$} \\
&&= &&
\ip<u_x, u_y> &&+&& \ip<u_x, v_y> &&+&& \ip<v_x, u_y> &&+&& \ip<v_x, v_y>\\
&&-&&\ip<\tO_xu_x, \tO_y u_y> &&-&& \ip<\tO_xu_x, \tO_yv_y> &&-&& \ip<\tO_xv_x, \tO_y u_y> &&-&& \ip<\tO_xv_x, \tO_yv_y>,
\end{align*}
and
\begin{align*}
\Bigl<\tO_xu_x-\rlap{$\tO_xv_x,\; \sA[I-\tO_x\tO_y^*] (\tO_yu_y-\tO_yv_y)\Bigr>\hfil$} \\
&&= &&\ip<\tO_xu_x, \tO_yu_y> &&-&& \ip<\tO_xu_x, \tO_yv_y> &&-&& \ip<\tO_xv_x, \tO_yu_y> &&+&& \ip< \tO_xv_x, \tO_yv_y>\\
&&-&&\ip<u_x, u_y> &&+&& \ip<u_x, v_y> &&+&& \ip<v_x, u_y> &&-&& \ip<v_x, v_y>.
\end{align*}

Hence,
\[
\tilde v_{x} = \frac{(u_x+v_x) \oxplus (\tO_x u_x - \tO_x v_x)}2
\]
is a feasible solution to~\rf{eqn:onegamma} with $\Delta_{x,y} = I-(O_x^*O_y \oplus O_xO_y^*)$.
We have
\[
\Dnorm|\tilde v_x|^2 
= \frac{\DnormA|u_x+v_x|^2 + \DnormA|\tO_x u_x - \tO_x v_x|^2} 4
= \frac{\DnormA|u_x+v_x|^2 + \DnormA|u_x - v_x|^2} 4
= \frac{\DnormA|u_x|^2 + \DnormA|v_x|^2} 2,
\]
where we used~\rf{eqn:DnormScaling}, \rf{eqn:DnormSum}, \rf{eqn:DnormUnitary} and~\rf{eqn:DnormParallelogram}, respectively.
This proves that $(a)\Rightarrow (c)$.

Now let us prove $(c)\Rightarrow (b)$.
Assume $\tilde v_x$ is a feasible solution to the second optimization problem.
Let $\tilde v'_x$ and $\tilde v''_x$ be the parts of $\tilde v_x$ processed by $I-O_x^*O_y$ and $I-O_x O_y^*$, respectively.
Set
\[
u_x = \tilde v'_x \oxplus \tO_x^* \tilde v''_x
\qqand
v_x = \tilde v'_x \oxplus [-\tO_x^* \tilde v''_x].
\]
Then,
\begin{align*}
\ip<u_x, ((I-O_x^*O_y)\otimes (I_\cW\oplus I_\cW))v_y> &=
\ip<\tilde v'_x, (I-\tO_x^*\tO_y) \tilde v'_y> 
+ \ip<\tO_x^* \tilde v''_x, (\tO_x^*\tO_y - I) \tO_y^*\tilde v''_y> \\
&=\ip<\tilde v'_x, (I-\tO_x^*\tO_y ) \tilde v'_y> 
+ \ip<\tilde v''_x, (I-\tO_x\tO_y^* ) \tilde v''_y>  = e_{x,y}.
\end{align*}
Again, using~\rf{eqn:DnormSum} and~\rf{eqn:DnormUnitary}, $\Dnorm|u_x| = \Dnorm|v_x| = \Dnorm|\tilde v_x|$.
This proves $(c)\Rightarrow (b)$.  The remaining implication $(b)\Rightarrow (a)$ is obvious.
\pfend

Therefore, we can make the following definition.
\begin{defn}
The multi-objective bidirectional relative $\gamma_2$-optimisation problem
\[
\twogamma \sA[e_{x,y}\mid \Delta_{x,y}]_{x,y\in D}
\]
is defined as
\begin{subequations}
\label{eqn:twogammaMultiobjective}
\begin{alignat}{3}
&\mbox{\rm minimise} &\quad& \sB[\frac{\Dnorm|u_x|^2 + \Dnorm|v_x|^2}2]_{x\in D} &\quad& \label{eqn:twogammaMultiobjectiveA}\\
& \mbox{\rm subject to}&&  
e_{xy} = \ipA<u_x,\;  (\Delta_{xy}\otimes I_{\cW}) v_y> && \text{\rm for all $x, y\in D$;}  \\
&&& \text{$\cW$ is a vector space}, &&
u_x, v_x \in \cM\otimes\cW.
\end{alignat}
\end{subequations}
Alternatively, one may substitute~\rf{eqn:twogammaMultiobjectiveA} with
\[
\mbox{\rm minimise} \quad \sB[\max\sfigA{\Dnorm|u_x|^2,\; \Dnorm|v_x|^2}]_{x\in D}.
\]
\end{defn}

\begin{defn}[Bidirectional Adversary Optimisation Problem]
Assume $\xi_x\mapsto \tau_x$ is a state conversion problem with bidirectional input oracles $O_x\colon \cM\to\cM$, as $x\in D$.
Its \emph{adversary optimisation problem} is
\begin{equation}
\label{eqn:twoAdversary}
\twogamma\sB[\ip<\xi_x, \xi_y> - \ip<\tau_x,\tau_y> \mid I_\cM - O_x^*O_y]_{x,y\in D}.
\end{equation}
\end{defn}

An important corollary is as follows.

\begin{cor}
Assuming bidirectional access, the adversary bound~\rf{eqn:adversary} can be replaced with the corresponding bidirectional version~\rf{eqn:twoAdversary}, and the results of the corresponding Theorems~\ref{thm:mainlower}, \ref{thm:mainupper}, \ref{thm:exactTransformation}, \ref{thm:LasVegas}, and \rf{cor:main}
still hold.
\end{cor}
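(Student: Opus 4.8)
The plan is to reduce everything to the unidirectional statements already proved, by composing the two identifications at our disposal. \rf{prp:bidirectional} says that bidirectional access to $O_x$ is the same as unidirectional access to $\tilde O_x:=O_x\oplus O_x^*$ (with the Las Vegas complexity profile unchanged and the Monte Carlo cost changing by at most a factor $2$), and \rf{prp:1to2} relates the bidirectional optimisation problem to a unidirectional one. First I would record that the unidirectional adversary optimisation problem~\rf{eqn:adversary} associated to the oracles $\tilde O_x$ is $\onegamma\sA[\ip<\xi_x,\xi_y>-\ip<\tau_x,\tau_y>\mid I-\tilde O_x^*\tilde O_y]_{x,y\in D}$, and that $I-\tilde O_x^*\tilde O_y=(I-O_x^*O_y)\oplus(I-O_xO_y^*)$, so this is precisely the right-hand problem appearing in \rf{prp:1to2} (with each oracle component split in two). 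Since all $O_x$ are unitary, any solvable state conversion has $\|\xi_x\|=\|\tau_x\|$, hence the diagonal left-hand sides vanish and \rf{prp:1to2} applies; its equivalence $(a)\Leftrightarrow(c)$ says that the feasible objective space of this unidirectional problem, measured by $\bigl(\Dnorm|\tilde v_x|^2\bigr)_{x\in D}$, coincides with the feasible objective space of the bidirectional problem~\rf{eqn:twoAdversary}, measured by $\bigl(\frac12(\Dnorm|u_x|^2+\Dnorm|v_x|^2)\bigr)_{x\in D}$.

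Next I would transfer each theorem through the composite reduction. For \rf{thm:mainlower}: a bidirectional algorithm $\cA$ solving $\xi_x\mapsto\tau_x$ becomes, by \rf{prp:bidirectional}, a unidirectional algorithm on $\tilde O_x$ with the same per-oracle Las Vegas complexity profile — a direct query of $\cA$ to the $i$-th oracle feeds the $O^{(i)}_x$-block of $\tilde O^{(i)}_x$ and a reverse query feeds the $(O^{(i)}_x)^*$-block, so the squared-norm totals for each oracle index match; the unidirectional \rf{thm:mainlower} followed by \rf{prp:1to2} then puts $L_D(\cA)$ in the feasible objective space of~\rf{eqn:twoAdversary}. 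For \rf{thm:mainupper} and \rf{cor:main}: given a feasible solution $(u_x,v_x)$ of~\rf{eqn:twoAdversary}, \rf{prp:1to2} produces $\tilde v_x$ with $\Dnorm|\tilde v_x|^2=\frac12(\Dnorm|u_x|^2+\Dnorm|v_x|^2)$ and $\|\tilde v_x\|^2\le\max\{\|u_x\|^2,\|v_x\|^2\}$; feeding $\tilde v_x$ to the unidirectional theorem gives an algorithm on $\tilde O_x$ with the stated approximation and Las Vegas guarantees and Monte Carlo cost controlled by $\max_x\|\tilde v_x\|^2$, and re-expanding to bidirectional access via \rf{prp:bidirectional} keeps the Las Vegas profile and at most doubles the Monte Carlo cost — i.e.\ exactly the objective profile and the $O(L/\eps^2)$-type cost claimed. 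For \rf{thm:exactTransformation} and \rf{thm:LasVegas} the same chain works, using in addition that \rf{prp:bidirectional} is an \emph{equality} of feasible complexity spaces, so passing to topological closures commutes with it; here the standing hypothesis that all $O_x$ are unitary is exactly what legitimises the detour, as a non-unitary oracle cannot be reverse-queried.

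The rest is bookkeeping, and the only point I would check with a little care is that the multi-oracle decomposition survives intact: under $O^{(i)}_x\mapsto O^{(i)}_x\oplus(O^{(i)}_x)^*$ the $i$-th coordinate of the Las Vegas complexity of $\cA$ equals the $i$-th coordinate for the simulating unidirectional algorithm, and correspondingly \rf{prp:1to2} must be read coordinatewise so that the objective profiles in $\bR^D\otimes\bR^s$ — not merely their worst-case scalars — line up. I expect no genuine obstacle: both $\Dnorm|\cdot|^2$ and the query count are additive over the oracle index, so all identifications are literally coordinatewise. A purely cosmetic discrepancy is that the bidirectional form of \rf{thm:mainupper} now perturbs the initial and final states by $\tilde v_x/\sqrt T$ rather than by $(u_x\oplus v_x)/\sqrt T$; this is harmless since $\|\tilde v_x\|\le\max\{\|u_x\|,\|v_x\|\}$ keeps the perturbation below $\eps$ once $T\ge\max_x\|\tilde v_x\|^2/\eps^2$, and one could instead state the perturbation directly in terms of $(u_x,v_x)$ at the cost of a harmless constant.
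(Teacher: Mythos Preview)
Your proposal is correct and follows exactly the route the paper intends: the corollary is stated without proof there, being treated as immediate from \rf{prp:bidirectional} (bidirectional access to $O_x$ equals unidirectional access to $O_x\oplus O_x^*$) together with \rf{prp:1to2} (matching the feasible objective spaces of the two optimisation problems). Your write-up simply spells out the bookkeeping the paper leaves implicit, including the coordinatewise check for multiple oracles and the observation that $e_{x,x}=0$ holds because unitary oracles force $\|\xi_x\|=\|\tau_x\|$.
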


For instance, \rf{cor:main} after this transformation is the main technical result from~\cite{belovs:variations} with slightly better dependence on $\eps$.
And the adversary bound for Boolean function evaluation~\rf{eqn:BooleanOne} equals
\begin{equation}
\label{eqn:BooleanTwo}
\twogamma\sB[1_{f(x)\ne f(y)} \midA \bigoplus_{j=1}^n 1_{x_j \ne y_j}],
\end{equation}
which is equivalent to the known bound from~\cite{reichardt:spanPrograms}.
The corresponding dual~\rf{eqn:dualAdversary} is the lower bound from~\cite{hoyer:advNegative}.

\section{Unitary Permutation Inversion}
\label{sec:permutationInversion}
The goal of this section is to prove a separation between unidirectional and bidirectional access to an oracle on a natural problem.
We will achieve this using the following problem.

\begin{defn}[Unitary Permutation Inversion]
The set of labels is the set of permutations on $n$ elements $D=\mathfrak{S}_n$.
For each $\pi\in \mathfrak S_n$, let $O_\pi\colon \bC^n\to\bC^n$ be the input oracle defined by $O_\pi\k i = \ket|\pi(i)>$ for all $i\in[n]$
The task is to find $\pi^{-1}(1)$.
\end{defn}

First note that this problem is different from the usual permutation inversion problem.
In the latter, the permutation $\pi$ is encoded using the standard input oracle $\k i\k b\mapsto \k i\ket|b\oplus \pi(i)>$.
The latter is a well-known problem, first defined in~\cite{bennett:strengths}.  It is similar to Grover's search, but different enough to complicate direct reductions from the lower bound for unstructured search.
Ambainis~\cite{ambainis:adv} gave a tight lower bound of $\Omega(\sqrt n)$.
Nayak~\cite{nayak:invertingAPermutation} gave a direct reduction from unstructured search.
See also a recent paper by Rosmanis~\cite{rosmanis:invertingPermutation}.

Since the unidirectional and bidirectional access are equivalent for standard oracle, we resort to the unitary oracle.
The reason of requiring $\pi$ to be a permutation is solely to ensure that $O_\pi$ is a unitary.

The problem can be trivially solved in one query with bidirectional access:
apply $O^*_\pi$ to $\ket|1>$ and read out the result.
Since unitary inversion using the standard oracle requires $\Omega(\sqrt n)$ queries, this means that the unitary permutation oracle $\k i\mapsto \ket|\pi(i)>$ cannot be simulated by the standard oracle.

Intuitively, it seems the problem should be hard for unidirectional input oracles.
We show that this is indeed the case.

\begin{thm}
Any quantum query algorithm solving the unitary inversion problem (with bounded error and non-coherently) with unidirectional access to the input oracles has to make $\Omega(\sqrt n)$ queries.
\end{thm}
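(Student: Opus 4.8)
The plan is to realise the problem as exact, coherent function evaluation --- hence as a state conversion problem --- and to apply the lower-bound half of the $\gamma_2$-characterisation, \rf{thm:mainlower}, together with weak duality, \rf{thm:weakduality}. First I would strip off the ``bounded error, non-coherent'' wrapper, as in the remark after \rf{thm:mainlower}: an algorithm $\cA$ that outputs $\pi^{-1}(1)$ with probability at least $2/3$ on the input oracle $O_\pi$ realises, on the common (unit) initial state, the \emph{exact} coherent state conversion $\xi\mapsto\tau_\pi$ with $\tau_\pi=\cA(O_\pi)\xi$; and whenever $\pi^{-1}(1)\ne\sigma^{-1}(1)$ the measurement statistics force $|\ip<\tau_\pi,\tau_\sigma>|\le c$ for some absolute constant $c<1$, so the entry $E_{\pi\sigma}:=1-\ip<\tau_\pi,\tau_\sigma>$ has real part at least $1-c$ on such pairs. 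Hence the number of queries $T$ is at least the worst-case Las Vegas complexity of this state conversion, which by \rf{thm:mainlower} is at least the objective value of $\onegamma\bigl(E_{\pi\sigma}\mid I-O_\pi^*O_\sigma\bigr)_{\pi,\sigma}$, and by \rf{thm:weakduality} is at least $\lambda_{\max}(\Gamma\circ E)/\lambda_{\max}(\Gamma\circ\Delta)$ for every Hermitian $\Gamma$ indexed by the permutations, where $\Delta_{\pi\sigma}=I-O_\pi^*O_\sigma$. Everything then reduces to exhibiting $\Gamma$ with $\lambda_{\max}(\Gamma\circ E)=\Omega(n)$ and $\lambda_{\max}(\Gamma\circ\Delta)=O(\sqrt n)$.

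The adversary matrix should encode the search-like core of the problem. The indistinguishable instances are $\pi$ and $\sigma=\pi\cdot(\pi^{-1}(1)\ k)$: they agree except at the two positions $\pi^{-1}(1)$ and $k=\sigma^{-1}(1)$, they have different answers, and $\Delta_{\pi\sigma}=I-O_{(\pi^{-1}(1)\,k)}=u_{\pi\sigma}u_{\pi\sigma}^*$, where $u_{\pi\sigma}=\ket|\pi^{-1}(1)>-\ket|k>$ is a single vector supported on those two coordinates. Since $\ip<\tau_\pi,\tau_\sigma>$ is bounded away from $1$ precisely on such pairs, the natural choice is $\Gamma$ supported on the graph $G$ of these pairs; $G$ is an $(n-1)$-regular vertex-transitive graph on the symmetric group whose two colour classes are the even and the odd permutations, since adjacent vertices differ by a transposition. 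Taking the $0/1$ adjacency matrix of $G$ for $\Gamma$ makes the numerator easy: testing $\Gamma\circ E$ against the nonnegative Perron eigenvector of $G$ and using $\mathrm{Re}\,E_{\pi\sigma}\ge 1-c$ on $G$ gives $\lambda_{\max}(\Gamma\circ E)\ge(1-c)(n-1)$. The denominator, however, is then of order $n$, not $\sqrt n$: the unit vector $\bigoplus_\pi\mathrm{sgn}(\pi)\,u_\pi$, with $u_\pi=\ket|\pi^{-1}(1)>/\sqrt{n!}$, drives $\Gamma\circ\Delta$ to an eigenvalue of order $n$.

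The crux --- and the main obstacle --- is therefore to reweight $\Gamma$ on $G$ so that $\lambda_{\max}(\Gamma\circ\Delta)$ drops to $O(\sqrt n)$ while $\lambda_{\max}(\Gamma\circ E)$ stays $\Omega(n)$. It is essential here that \rf{defn:subgamma} controls only $\lambda_{\max}(\Gamma\circ\Delta)$ and not the spectral norm, since the vector above forces $\Gamma\circ\Delta$ to have an eigenvalue of order $-n$ no matter what; only the \emph{largest} eigenvalue can possibly be made small. Concretely, bounding $\lambda_{\max}(\Gamma\circ\Delta)$ amounts to bounding $\sum_{\pi\sim\sigma}\mathrm{Re}\bigl(\Gamma_{\pi\sigma}\ip<v_\pi,u_{\pi\sigma}u_{\pi\sigma}^*v_\sigma>\bigr)$ over all families with $\sum_\pi\|v_\pi\|^2=1$, and a single Cauchy--Schwarz over the edge set wastes a full $\sqrt n$ because it cannot see that $v_\pi$ cannot simultaneously be concentrated on the coordinate $\ket|\pi^{-1}(1)>$ for all $\pi$ \emph{and} be in phase along every edge of $G$. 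The plan is to put carefully chosen signs/phases on the edges of $G$ (compatible with Hermiticity) so that the ``concentrated'' part of this sum cannot add up constructively, to split each $v_\pi$ into its component along $\ket|\pi^{-1}(1)>$ and its orthogonal part, and to estimate the two resulting bilinear forms separately using the regularity and the bipartite structure of $G$; the phases on $\Gamma$ must be tuned so that $\lambda_{\max}(\Gamma)$ nonetheless remains linear in $n$. Once this spectral estimate is in place, dividing gives $T=\Omega(\sqrt n)$, and since one query to $O_\pi^*$ trivially returns $\ket|\pi^{-1}(1)>$, this is precisely the advertised separation between unidirectional and bidirectional access.
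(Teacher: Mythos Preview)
Your reduction to the unidirectional adversary lower bound via \rf{thm:mainlower} and \rf{thm:weakduality} is the right framework, and your diagnosis of why the naive $0/1$ adversary on the transposition graph fails is correct. The gap is that you do not actually construct the adversary: you propose to reweight or rephase the transposition graph so that $\lambda_{\max}(\Gamma\circ\Delta)$ drops to $O(\sqrt n)$ while $\lambda_{\max}(\Gamma\circ E)$ stays $\Omega(n)$, but give no concrete weights, and the split of $v_\pi$ into the $\ket|\pi^{-1}(1)>$-component and its complement remains a plan, not an estimate. This is precisely the hard step, and it is not clear it can be carried out on the transposition graph.

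The paper sidesteps the issue by choosing a \emph{different} relation. It restricts to single-cycle permutations $\mathfrak C_n$ and sets $\Gamma\elem[\pi,\sigma]=1$ when $\pi^{-1}\sigma$ is a $3$-cycle arising from an ``interval shift'' of the cycle of $\pi$; this graph has degree $\binom{n-1}{2}$, so the all-ones vector already gives $\lambda_{\max}(\Gamma\circ E)=\Omega(n^2)$. The $3\times 3$ block $\Delta_{\pi,\sigma}=I-O_{\pi^{-1}\sigma}$ is then split as $\Delta'+\Delta''$, where $\Delta''$ retains only the single entry with row index $\sA[\pi,\pi^{-1}(1)]$ and column index $\sA[\sigma,\sigma^{-1}(1)]$. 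In $\Gamma\circ\Delta'$, every nonzero entry has either its row or its column containing $O(n)$ nonzeros, so \rf{lem:spalek} gives $\|\Gamma\circ\Delta'\|=O(n^{3/2})$. The key point is that $\Gamma\circ\Delta''$ is, after relabelling, exactly $-\Gamma$; and the interval-shift combinatorics yield $(n-2)I+\Gamma=\sum_{k=1}^{n-2}\Gamma_k$ with each $\Gamma_k\succeq 0$ (for fixed split position $k$ the relation is an equivalence), whence $\lambda_{\max}(-\Gamma)\le n-2$. Combining, $\lambda_{\max}(\Gamma\circ\Delta)=O(n^{3/2})$, and the ratio yields $\Omega(\sqrt n)$. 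Thus the ``concentrated'' direction you worry about is controlled not by phases on $\Gamma$ but by arranging that it reproduces $-\Gamma$, whose top eigenvalue is small for structural reasons; you may want to switch to this $3$-cycle relation rather than search for phases on the transposition graph.
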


Note, however, that there is no matching upper bound.
Grover's search cannot be directly applied here because of the unidirectional access.
The remaining part of this section is devoted to the proof of this theorem.
The proof relies on \rf{thm:weakduality}, and we have to find the adversary matrix $\Gamma$ from~\rf{eqn:onegammadual}.

Interestingly, the analysis is a variant of the usual positive-weighted adversary, but it 
is different from the one used by Ambainis in the lower bound proof of the usual permutation inversion problem~\cite{ambainis:adv}.
We need the following technical result, which was used~\cite{spalek:advEquivalent} to reduce the combinatorial formulation of the positive-weighted adversary like in~\cite{ambainis:adv} to the spectral formulation as in~\cite{barnum:advSpectral,hoyer:advNegative}.
We give a slightly modified version.

\begin{lem}
\label{lem:spalek}
Let $A$ be a matrix with entries $0,\pm1$.
Then,
\[
\|A\|\le \max_{i,j\colon A\elem[i,j]\ne 0} \sqrt{R_i C_j},
\]
where $R_i$ and $C_j$ is the number of non-zero elements in the $i$-th row and $j$-column, respectively.
\end{lem}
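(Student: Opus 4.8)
The plan is to bound the spectral norm $\|A\|$ directly through its bilinear form. Write $\|A\| = \sup\{\,|y^*Ax| : \|x\|=\|y\|=1\,\}$. Since every entry of $A$ is $0$ or $\pm1$, the triangle inequality gives
\[
|y^*Ax| \;\le\; \sum_{i,j\colon A\elem[i,j]\ne 0} |y_i|\,|x_j|.
\]
The key idea is to split each surviving product $|y_i|\,|x_j|$ asymmetrically via the weighted AM--GM inequality: for a non-zero entry $(i,j)$ (which in particular forces $R_i\ge 1$ and $C_j\ge 1$) choose the weight $t_{ij}=\sqrt{C_j/R_i}$ and bound $|y_i|\,|x_j|\le \tfrac12\big(t_{ij}|y_i|^2 + t_{ij}^{-1}|x_j|^2\big)$. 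Summing over all non-zero entries splits the right-hand side into a ``row part'' $\tfrac12\sum_{i,j}t_{ij}|y_i|^2$ and a ``column part'' $\tfrac12\sum_{i,j}t_{ij}^{-1}|x_j|^2$, where both sums run over the support of $A$.

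Next I would regroup each part. For the row part, $\tfrac12\sum_{i}\tfrac{|y_i|^2}{\sqrt{R_i}}\sum_{j\colon A\elem[i,j]\ne 0}\sqrt{C_j}$; since the inner sum has exactly $R_i$ summands it is at most $R_i\max_{j\colon A\elem[i,j]\ne 0}\sqrt{C_j}$, so the row part is at most $\tfrac12\sum_i |y_i|^2\max_{j\colon A\elem[i,j]\ne 0}\sqrt{R_iC_j}\le \tfrac12 M$, writing $M=\max_{i,j\colon A\elem[i,j]\ne 0}\sqrt{R_iC_j}$ and using $\sum_i|y_i|^2=1$. A symmetric computation, this time grouping by columns and using that the $j$-th column has exactly $C_j$ non-zero entries, bounds the column part by $\tfrac12 M$ as well. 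Adding the two halves yields $|y^*Ax|\le M$ for all unit vectors $x,y$, hence $\|A\|\le M$, which is the claim.

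I do not expect a genuine obstacle here; the whole argument is a couple of lines once the weights are chosen. The only thing that needs care is precisely that choice of $t_{ij}=\sqrt{C_j/R_i}$, calibrated so that after regrouping both the row-sums and the column-sums collapse to the \emph{same} quantity $M$, together with the bookkeeping that the inner sums have exactly $R_i$ (respectively $C_j$) terms and that rows/columns outside the support of $A$ simply do not appear. It is worth noting in passing that this refines the standard Schur-test estimate $\|A\|\le\sqrt{(\max_i R_i)(\max_j C_j)}$, which is what one gets from the same argument without restricting the maximum to the support of $A$.
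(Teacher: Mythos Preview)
Your argument is correct. The weighted AM--GM splitting with $t_{ij}=\sqrt{C_j/R_i}$ is exactly the Schur-test computation, and your bookkeeping (rows and columns outside the support contribute nothing; the inner sums have precisely $R_i$ and $C_j$ terms) is sound.

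The paper does not actually prove this lemma: it observes that taking entrywise absolute values can only increase the spectral norm, reduces to the $0/1$ case, and then cites Lemma~4.2 of \v Spalek and Szegedy. Your proof is therefore more self-contained than the paper's, while the underlying idea is the same weighted-entry argument one finds in the cited reference. Your initial use of the triangle inequality $|y^*Ax|\le\sum_{A\elem[i,j]\ne 0}|y_i||x_j|$ already absorbs the sign issue, so you avoid the separate reduction step the paper makes.
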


\pfstart
Taking the absolute value of each entry can only increase the norm, hence, we can assume the matrix $A$ only has 0,1 entries.
Then, this is a special case of Lemma 4.2 of~\cite{spalek:advEquivalent}.
\pfend

Assume $\k0\mapsto \ket|\tau_\pi>$ is a state-generating problem such that measuring $\tau_\pi$ gives $\pi^{-1}(1)$ with probability at least $2/3$.
We use this property to ensure that
\begin{equation}
\label{eqn:iptaupitausigma}
\mathrm{Re}\ip<\tau_\pi, \tau_\sigma> \le \frac{2\sqrt2}3
\qquad\text{for $\pi,\sigma\in\mathfrak S_n$ such that $\pi^{-1}(1)\ne\sigma^{-1}(1)$}.
\end{equation}
Define the corresponding output object, which is an $\mathfrak S_n\times \mathfrak S_n$-matrix $E$ with
\[
E\elem[\pi,\sigma] = 1-\ip<\tau_\pi, \tau_\sigma>.
\]

Let us define the adversary matrix $\Gamma$.
Denote by $\mathfrak C_n$ the subset of $\mathfrak S_n$ formed by permutations having a single cycle of length $n$.
We will only consider permutations in $\mathfrak C_n$.

\mycutecommand\lrs{\leftrightsquigarrow}
We say that $\pi, \sigma \in \mathfrak C_n$ are \emph{in relation}, denoted $\pi\lrs\sigma$, if $\pi$ and $\sigma$ have cyclic structures of the following form:
\begin{equation}
\label{eqn:relation}
\begin{aligned}
\pi\colon &1\mapsto \cdots\mapsto p_{k} \mapsto 
p_{k+1}\mapsto \cdots p_\ell \mapsto p_{\ell+1}\mapsto \cdots p_n \mapsto 1,\\
\sigma\colon &1\mapsto \cdots\mapsto p_{k} \mapsto 
p_{\ell+1}\mapsto \cdots p_n \mapsto p_{k+1}\mapsto \cdots p_\ell \mapsto 1.
\end{aligned}
\end{equation}
for some  $1\le k<\ell<n$.
In other words, the interval $p_{k+1}\mapsto \cdots\mapsto p_\ell$ is taken out and put at the end of the cycle.
Alternatively, one can say that the suffix $p_{k+1}\mapsto \cdots\mapsto p_n$ is cyclically shifted.
This is a symmetric relation, but neither reflexive, nor transitive.

As usual for the positive-weighted adversary, define an $\mathfrak C_n\times \mathfrak C_n$ matrix $\Gamma$ by $\Gamma\elem[\pi,\sigma] = 1_{\pi\lrs\sigma}$.

\begin{lem}
\label{lem:Gamma}
We have the following properties of the matrix $\Gamma$:
\begin{itemize}\itemsep0pt
\item $\Gamma$ is a Hermitian matrix;
\item $\Gamma\elem[\pi,\sigma]=0$ if $\pi^{-1}(1) = \sigma^{-1}(1)$;
\item $\lambda_{\max}(\Gamma) = \Omega(n^2)$ with the principal eigenvector given by the all-1 vector;
\item $\lambda_{\max}(-\Gamma) \le n-2$.
\end{itemize}
\end{lem}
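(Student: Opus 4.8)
The plan is to handle the four items separately; the first three are bookkeeping on cyclic words, while the fourth, the bound $\lambda_{\max}(-\Gamma)\le n-2$, is where the real work is. \textbf{Items 1 and 2.} Hermiticity is immediate: $\Gamma$ is a real $0/1$ matrix and the relation $\lrs$ is symmetric. For item~2 I would read off the inverse images of $1$ from the cyclic descriptions in~\rf{eqn:relation}: there $\pi^{-1}(1)=p_n$ while $\sigma^{-1}(1)=p_\ell$, and since $1\le k<\ell<n$ the elements $p_\ell$ and $p_n$ are distinct. Hence $\pi\lrs\sigma$ forces $\pi^{-1}(1)\ne\sigma^{-1}(1)$, which is the contrapositive of item~2.

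\textbf{Item 3.} I would first check that the number $d$ of $\sigma\in\mathfrak C_n$ with $\pi\lrs\sigma$ does not depend on $\pi$. The map sending a pair $(k,\ell)$ with $1\le k<\ell<n$ to the permutation $\sigma$ of~\rf{eqn:relation} is injective — from $\sigma$ one recovers $\ell$ as the index with $p_\ell=\sigma^{-1}(1)$ and then $k$ as the length of the common prefix of the cycle words of $\pi$ and $\sigma$ read starting from $1$ — and each such pair yields a genuinely new element of $\mathfrak C_n$ (a non-trivial cyclic rotation of a suffix of an $n$-cycle word is again an $n$-cycle word, and differs from the original). So $d=\Omega(n^2)$, the same value for every $\pi$. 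Consequently $\Gamma$ has all row sums equal to $d$, so the all-ones vector is an eigenvector with eigenvalue $d$; and since $\Gamma\ge 0$ entrywise, $\lambda_{\max}(\Gamma)\le\max_\pi\sum_\sigma\Gamma\elem[\pi,\sigma]=d$. Hence $\lambda_{\max}(\Gamma)=d=\Omega(n^2)$ with the claimed principal eigenvector.

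\textbf{Item 4.} This is equivalent to $\Gamma\succeq-(n-2)I$, and I would obtain it by splitting $\Gamma$ along the ``cut point''. For $\pi\lrs\sigma$ let $k(\pi,\sigma)$ be the length of the common prefix of their cycle words read from $1$; by the analysis above this equals the parameter $k$ of~\rf{eqn:relation} and in particular lies in $\{1,\dots,n-2\}$. For each such $k$ let $\Gamma_k$ be the $0/1$ matrix with $\Gamma_k\elem[\pi,\sigma]=1$ iff $\pi\lrs\sigma$ and $k(\pi,\sigma)=k$, so that $\Gamma=\sum_{k=1}^{n-2}\Gamma_k$. The key claim is that each $\Gamma_k$ is the adjacency matrix of a disjoint union of cliques. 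To prove it, call $\pi$ and $\pi'$ \emph{$k$-equivalent} if their cycle words agree on the first $k$ steps from $1$ and their subsequent suffixes are cyclic rotations of one another; this is an equivalence relation whose classes partition $\mathfrak C_n$, no $\Gamma_k$-edge joins two different classes, and within one class any two distinct elements are $\lrs$-related with cut point exactly $k$ (a non-zero rotation changes the successor of position $k$, while the first $k$ steps stay fixed). Now the adjacency matrix of a disjoint union of cliques is block-diagonal with each block of the form $J-I$, hence has smallest eigenvalue $\ge-1$; so $\Gamma_k\succeq-I$. Summing over the $n-2$ values of $k$ gives $\Gamma\succeq-(n-2)I$, that is, $\lambda_{\max}(-\Gamma)\le n-2$.

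\textbf{Main obstacle.} The one non-routine point is item~4: one must verify that the cut point is well defined and, more importantly, that ``agreeing on the first $k$ steps and having rotated suffixes'' is a transitive relation whose classes are exactly the cliques of $\Gamma_k$. That is what makes $\Gamma_k$ a cluster graph and gives the clean bound $-1$ on its least eigenvalue; everything else is counting on cyclic words.
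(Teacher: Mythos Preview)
Your proof is correct and follows essentially the same route as the paper. For item~4 the paper defines $\Gamma_k$ with ones on the diagonal (i.e.\ $\Gamma_k\elem[\pi,\sigma]=1_{\pi\sim_k\sigma}$ where $\pi\sim_k\pi$ always holds), so that each $\Gamma_k$ is block-diagonal with all-ones blocks and hence $\succeq 0$, giving $\sum_{k=1}^{n-2}\Gamma_k=(n-2)I+\Gamma\succeq 0$; your version keeps the diagonal at zero and instead uses that the adjacency matrix of a clique has least eigenvalue $-1$, which is the same argument up to adding or subtracting $I$. You also spell out why $\sim_k$ is an equivalence relation (via the ``common prefix, rotated suffix'' description), which the paper simply asserts.
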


\pfstart
The first two properties follow from the definition of the relation:
If $\pi\lrs\sigma$, then $\sigma\lrs\pi$.
Also, in this case, $\pi^{-1}(1) \ne \sigma^{-1}(1)$.
The third property follows from the fact that each row has exactly $(n-1)(n-2)/2$ ones.

Now, let us prove the fourth property.
It is equivalent to $(n-2)I + \Gamma\succeq 0$.
Let us prove the latter.
Fix $1\le k\le n-2$.
Say that $\pi \sim_k \sigma$ if $\pi = \sigma$ or $\pi$ and $\sigma$ are in relation like in~\rf{eqn:relation} with this fixed value of $k$.
Note that $\sim_k$ is an equivalence relation.
Define the matrix $\Gamma_k$ by $\Gamma_k\elem[\pi,\sigma] = 1_{\pi\sim_k \sigma}$.
It is a block-diagonal matrix with all-1 blocks on the diagonal.
Hence, $\Gamma_k\succeq 0$, which gives
\[
\sum_{k=1}^{n-2} \Gamma_k  = (n-2)I + \Gamma \succeq 0.\qedhere
\]
\pfend

Let $u$ be the normalised all-1 vector.  Then,
\begin{equation}
\label{eqn:permutationLower}
\lambda_{\max}(\Gamma\circ E)
\ge u^*(\Gamma\circ E) u
\ge \sC[1-\frac{2\sqrt2}3] u^*\Gamma u = \Omega(n^2),
\end{equation}
where we used the second point of \rf{lem:Gamma} and~\rf{eqn:iptaupitausigma} on the second step, and the third point of \rf{lem:Gamma} on the third.

It remains to estimate $\Gamma\circ \Delta$, where
\[
\Delta_{\pi,\sigma} = I - O_\pi^*O_\sigma = I-O_{\pi^{-1}\sigma}.
\]
By the definition of $\Gamma$, we can restrict our attention to the pairs $\pi,\sigma$, which are in relation~\rf{eqn:relation}.
In notation of~\rf{eqn:relation}, we have that $\pi^{-1}\sigma$ is a single cycle of length 3
\[
\pi^{-1}\sigma\colon p_k\mapsto p_\ell\mapsto p_n \mapsto p_k
\]
and identity elsewhere.
Hence,
\begin{equation}
\label{eqn:Deltapisigma}
\Delta_{\pi,\sigma} = 
\begin{pmatrix}
1 & 0 & -1\\
-1& 1 & 0\\
0 & -1& 1\\
\end{pmatrix}
\end{equation}
where the rows and columns are labelled by $p_k,p_\ell,p_n$ in this order,
and the matrix has zeroes everywhere else.

The matrix $\Gamma\circ\Delta$ is labelled by the elements $(\pi,i)\in\mathfrak C_n\times[n]$.
The block~\rf{eqn:Deltapisigma} when embedded in the latter has 
\[
\text{rows $(\pi, p_k), (\pi, p_\ell) , (\pi, p_n)$ \qquad and \qquad columns $(\sigma, p_k), (\sigma, p_\ell), (\sigma, p_n)$}.
\]

We would like to apply \rf{lem:spalek} to $\Gamma\circ\Delta$.  For instance, we see that there are at most $n$ choices of $\rho\in\mathfrak C_n$ such that $\Delta_{\pi,\rho}$ has non-zero elements in row $(\pi, p_k)$, since $p_k$ has to be one of the two elements used to define the relation $\pi\lrs \rho$ for this to happen.
Similarly, in notation of \rf{lem:spalek}, we get the following estimates:
\begin{subequations}
\label{eqn:RC}
\begin{equation}
R_{\pi, p_k}, R_{\pi, p_\ell}, C_{\sigma, p_k}, C_{\sigma, p_n} \le 2n.
\end{equation}
For one row and one column we get a worse estimate, where we count the total number of $\rho$ in relation with $\pi$ (or $\sigma$, respectively):
\begin{equation}
R_{\pi, p_n}, C_{\sigma, p_\ell} \le 2n^2.
\end{equation}
\end{subequations}
Therefore, we should treat the element on the intersection of the latter row and the latter column separately.
Rewrite~\rf{eqn:Deltapisigma}:
\[
\Delta_{\pi,\sigma} = 
\begin{pmatrix}
1 & 0 & -1\\
-1& 1 & 0\\
0 & -1& 1\\
\end{pmatrix}
=
\begin{pmatrix}
1 & 0 & -1\\
-1& 1 & 0\\
0 & 0 & 1\\
\end{pmatrix}
+
\begin{pmatrix}
0 & 0 & 0\\
0 & 0 & 0\\
0 & -1& 0\\
\end{pmatrix}
,
\]
Let us denote the first and the second matrices in the last sum by $\Delta_{\pi,\sigma}'$ and $\Delta_{\pi,\sigma}''$, respectively, and the corresponding families by $\Delta'$ and $\Delta''$.

\begin{clm}
\label{clm:Delta'}
$\|\Gamma\circ\Delta'\| = O(n^{3/2})$.
\end{clm}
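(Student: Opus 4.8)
The goal is to bound $\|\Gamma\circ\Delta'\|$ where $\Delta'$ is the family obtained from $\Delta_{\pi,\sigma}$ by zeroing out the single $-1$ entry in row $p_n$, column $p_\ell$. The plan is to apply \rf{lem:spalek} to the $0,\pm1$ matrix $\Gamma\circ\Delta'$. The point of removing the troublesome entry is that, for $\Delta'_{\pi,\sigma}$, the only surviving $-1$'s sit in rows $p_k$ and $p_\ell$ and columns $p_k$ and $p_n$ — precisely the rows and columns for which we have the good $O(n)$ bounds in~\rf{eqn:RC}. So I would argue as follows.

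First I would recall that $\Gamma\circ\Delta'$ is indexed by $\mathfrak C_n\times[n]$, and for each pair $\pi\lrs\sigma$ in relation the embedded $3\times 3$ block $\Delta'_{\pi,\sigma}$ occupies rows $(\pi,p_k),(\pi,p_\ell),(\pi,p_n)$ and columns $(\sigma,p_k),(\sigma,p_\ell),(\sigma,p_n)$. Then I would go through the nine positions of the block $\Delta'_{\pi,\sigma}$ and check, for each nonzero entry, the row-count and column-count in $\Gamma\circ\Delta'$. The diagonal entries $(\pi,p_k)(\sigma,p_k)$, $(\pi,p_\ell)(\sigma,p_\ell)$, $(\pi,p_n)(\sigma,p_n)$ require a small subtlety: the diagonal of $\Delta'$ contributes to $\Gamma\circ\Delta'$ only through off-diagonal $\Gamma$ entries (since $\Gamma\elem[\pi,\pi]=0$, as $\lrs$ is irreflexive), so even the ``diagonal'' positions of the block only appear when $\pi\lrs\sigma$, and then each such row/column index is one of the $p_k,p_\ell,p_n$ of some relation pair. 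Using the bounds from~\rf{eqn:RC}, every nonzero entry of $\Gamma\circ\Delta'$ lies in a row and a column each having at most $O(n)$ nonzero entries — this holds because the entry in row $(\pi,p_n)$, column $(\sigma,p_\ell)$, which was the only one governed by the weak $O(n^2)$ bound, has been deleted in passing from $\Delta$ to $\Delta'$. Hence by \rf{lem:spalek}, $\|\Gamma\circ\Delta'\| \le \max \sqrt{R_i C_j} = O\sA[\sqrt{n\cdot n}\,] \cdot O(1)$... wait — that would give $O(n)$, not $O(n^{3/2})$.

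Let me reconsider: the column-count $C_{\sigma,p_\ell}$ is still $O(n^2)$ because $p_\ell$ can be *either* of the two distinguished elements of a relation $\sigma\lrs\rho$ in the column-labelled variable, and summing over all such $\rho$ gives $\Theta(n^2)$ pairs. So even in $\Delta'$, column $(\sigma,p_\ell)$ can have $\sim n^2$ nonzero entries (coming from the $+1$ in the middle diagonal position and the $-1$ just above it). But every such nonzero entry lies in row $(\pi,p_k)$ or $(\pi,p_\ell)$, both of which have row-count $O(n)$. So for those entries $\sqrt{R_i C_j} = \sqrt{O(n)\cdot O(n^2)} = O(n^{3/2})$. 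For every other nonzero entry, both $R_i$ and $C_j$ are $O(n)$, giving $O(n)\le O(n^{3/2})$. Taking the max over all nonzero entries, \rf{lem:spalek} yields $\|\Gamma\circ\Delta'\|=O(n^{3/2})$, as claimed.

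\textbf{Main obstacle.} The delicate part is the bookkeeping of which $(\rho,i)$ indices can contribute a nonzero entry to a given row or column of $\Gamma\circ\Delta'$ — in particular, carefully justifying the $R_{\pi,p_k},R_{\pi,p_\ell},C_{\sigma,p_k},C_{\sigma,p_n}=O(n)$ bounds versus the $C_{\sigma,p_\ell}=O(n^2)$ bound, and making sure that after deleting the $(p_n,p_\ell)$ entry no surviving nonzero entry simultaneously sits in a row of count $\Omega(n^{3/2+\epsilon})$ and a column of count $\Omega(n^{3/2+\epsilon})$. This amounts to verifying that the deleted entry was the unique one coupling the two ``bad'' (quadratic-count) directions $R_{\pi,p_n}$ and $C_{\sigma,p_\ell}$, which is exactly why the decomposition $\Delta=\Delta'+\Delta''$ was set up this way. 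Once this case analysis is in place, the rest is a one-line application of \rf{lem:spalek}.

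\begin{proof}[Proof sketch of \rf{clm:Delta'}]
The matrix $\Gamma\circ\Delta'$ has entries in $\{0,\pm1\}$ and is indexed by $\mathfrak C_n\times[n]$. A position $\sA[(\pi,i),(\sigma,j)]$ is nonzero only if $\pi\lrs\sigma$ (so that $\Gamma\elem[\pi,\sigma]=1$) and $(i,j)$ is one of the eight nonzero positions of the embedded block $\Delta'_{\pi,\sigma}$, i.e.\ $i,j\in\{p_k,p_\ell,p_n\}$ in the notation of~\rf{eqn:relation}. We apply \rf{lem:spalek}.

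For a fixed row $(\pi,i)$, the number of $(\sigma,j)$ with nonzero entry is controlled by the number of $\sigma\in\mathfrak C_n$ with $\pi\lrs\sigma$ and with $i$ among the three special elements $p_k,p_\ell,p_n$ of that relation. If $i=p_k$ or $i=p_\ell$, then $i$ is one of the two \emph{boundary} elements determining the interval that is moved, so there are at most $n$ choices for the other boundary element and hence $O(n)$ such $\sigma$; accounting for the three column choices $j\in\{p_k,p_\ell,p_n\}$ gives $R_{\pi,p_k},R_{\pi,p_\ell}\le 2n$. The same reasoning on columns gives $C_{\sigma,p_k},C_{\sigma,p_n}\le 2n$. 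For $i=p_n$ the element $p_n$ is forced (it is always the last element of the $n$-cycle through $1$), so $p_n$ imposes no restriction and $R_{\pi,p_n}$ can be as large as $2n^2$; symmetrically $C_{\sigma,p_\ell}\le 2n^2$.

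The crucial point is that the single entry of $\Delta_{\pi,\sigma}$ sitting in row $p_n$ and column $p_\ell$ — the only position whose row and column both have quadratic counts — is exactly the entry that was removed in passing to $\Delta'$ (it is the $-1$ of $\Delta''$). Consequently, every nonzero entry of $\Gamma\circ\Delta'$ lies either in a row among $\{(\pi,p_k),(\pi,p_\ell)\}$ or in a column among $\{(\sigma,p_k),(\sigma,p_n)\}$, hence has $R_i C_j \le 2n\cdot 2n^2 = 4n^3$. By \rf{lem:spalek},
\[
\|\Gamma\circ\Delta'\| \le \max_{(i,j)\colon (\Gamma\circ\Delta')\elem[i,j]\ne 0} \sqrt{R_i C_j} \le \sqrt{4n^3} = O(n^{3/2}). \qedhere
\]
\end{proof}
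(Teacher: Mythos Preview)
Your proof is correct and follows exactly the paper's approach: apply \rf{lem:spalek} using the row/column bounds established in~\rf{eqn:RC}, noting that the unique entry coupling the two quadratic-count directions has been removed in passing from $\Delta$ to $\Delta'$. One tiny slip: $\Delta'_{\pi,\sigma}$ has five nonzero positions, not eight, but your case analysis at the end is unaffected and the bound $\sqrt{R_iC_j}\le\sqrt{2n\cdot 2n^2}=O(n^{3/2})$ is exactly right.
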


\pfstart
This follows from \rf{lem:spalek} using the estimates in~\rf{eqn:RC}.
\pfend

\begin{clm}
\label{clm:Delta''}
$\lambda_{\max}(\Gamma\circ\Delta'') = O(n)$.
\end{clm}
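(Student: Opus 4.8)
The plan is to exploit the near-total sparsity of the family $\Delta''$: each matrix $\Delta''_{\pi,\sigma}$ has a single nonzero entry, equal to $-1$, and by \rf{eqn:relation} that entry lies in row $p_n=\pi^{-1}(1)$ and column $p_\ell=\sigma^{-1}(1)$. Consequently, when $\Delta''_{\pi,\sigma}$ is placed as the $(\pi,\sigma)$ block of $\Gamma\circ\Delta''$, its unique nonzero entry occupies position $\bigl((\pi,\pi^{-1}(1)),(\sigma,\sigma^{-1}(1))\bigr)$, and this location depends on $\pi$ and $\sigma$ separately rather than on the relation parameters $k,\ell$. This already suggests that $\Gamma\circ\Delta''$ is, up to padding with zero rows and columns, just a copy of $-\Gamma$.

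Concretely, I would introduce the injection $\iota\colon\mathfrak C_n\to\mathfrak C_n\times[n]$, $\iota(\pi)=(\pi,\pi^{-1}(1))$. First I would note that for $i\ne\pi^{-1}(1)$ the entire row $(\pi,i)$ of $\Gamma\circ\Delta''$ vanishes, since the only nonzero row of every $\Delta''_{\pi,\sigma}$ is $\pi^{-1}(1)$; symmetrically every column $(\sigma,j)$ with $j\ne\sigma^{-1}(1)$ vanishes. Hence $\Gamma\circ\Delta''$ is supported on $\iota(\mathfrak C_n)\times\iota(\mathfrak C_n)$. Next I would compute, for $\pi,\sigma\in\mathfrak C_n$,
\[
(\Gamma\circ\Delta'')\elem[\iota(\pi),\iota(\sigma)] = \Gamma\elem[\pi,\sigma]\cdot(\Delta''_{\pi,\sigma})\elem[\pi^{-1}(1),\sigma^{-1}(1)] = -\,1_{\pi\lrs\sigma} = -\Gamma\elem[\pi,\sigma],
\]
using $\Gamma\elem[\pi,\sigma]=1_{\pi\lrs\sigma}$ together with the fact that the displayed entry of $\Delta''_{\pi,\sigma}$ equals $-1$ exactly when $\pi\lrs\sigma$. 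Finally I would observe that $\Gamma\circ\Delta''$ is Hermitian: $\Gamma$ is real symmetric and, whenever $\pi\lrs\sigma$, the matrices $\Delta''_{\pi,\sigma}$ and $\Delta''_{\sigma,\pi}$ carry their single $-1$ at transposed positions, so $\Delta''_{\sigma,\pi}=(\Delta''_{\pi,\sigma})^{*}$.

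Putting these together, a Hermitian matrix all of whose nonzero entries lie in $\iota(\mathfrak C_n)\times\iota(\mathfrak C_n)$ is block-diagonal with respect to that index set and its complement, so $\Gamma\circ\Delta''$ is unitarily equivalent to $(-\Gamma)\oplus 0$, with $-\Gamma$ acting on $\spn\{e_{\iota(\pi)}:\pi\in\mathfrak C_n\}\cong\bC^{\mathfrak C_n}$ and the zero operator on the orthogonal complement. Its spectrum is therefore that of $-\Gamma$ together with a string of zeros, whence
\[
\lambda_{\max}(\Gamma\circ\Delta'') = \max\{\lambda_{\max}(-\Gamma),\,0\} \le \max\{n-2,\,0\} = O(n)
\]
by the fourth item of \rf{lem:Gamma}. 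I do not anticipate any real obstacle; the points needing care are the bookkeeping of block labels — making sure the identifications $p_n=\pi^{-1}(1)$ and $p_\ell=\sigma^{-1}(1)$ are read off \rf{eqn:relation} correctly — and checking Hermiticity so that $\lambda_{\max}$ is well-defined. This bound then combines with \rf{clm:Delta'} via $\lambda_{\max}(\Gamma\circ\Delta)\le\lambda_{\max}(\Gamma\circ\Delta')+\lambda_{\max}(\Gamma\circ\Delta'')=O(n^{3/2})$, which with~\rf{eqn:permutationLower} and \rf{thm:weakduality} yields the claimed $\Omega(\sqrt n)$ lower bound.
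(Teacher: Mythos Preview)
Your proof is correct and follows exactly the same approach as the paper. The paper's own proof is a one-sentence version of what you wrote: it observes that $\Gamma\circ\Delta''$ is the matrix $-\Gamma$ with row and column label $\pi$ replaced by $(\pi,\pi^{-1}(1))$ and extended by zeroes elsewhere, then invokes the fourth item of \rf{lem:Gamma}; you have simply unpacked this identification via the injection $\iota$ and verified Hermiticity explicitly.
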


\pfstart
The matrix $\Gamma\circ\Delta''$ is just the matrix $-\Gamma$ where the row and the column label $\pi$ becomes $\sA[\pi, \pi^{-1}(1)]$ and the matrix is extended by zeroes elsewhere.
Hence, the claim follows from the fourth point of \rf{lem:Gamma}.
\pfend

Combining Claims~\ref{clm:Delta'} and~\ref{clm:Delta''}, we get that
\[
\lambda_{\max} (\Gamma\circ\Delta) = O(n^{3/2}).
\]
Together with~\rf{eqn:permutationLower}, this gives the required lower bound.

\section{Discussion and Future Work}
\label{sec:discussion}
In this paper, we defined a natural notion of Las Vegas complexity, and demonstrated its versatility for various composition results.
We proved that a Las Vegas algorithm can be turned into an approximate Monte Carlo algorithm with a slight increase in complexity.
We have shown that Las Vegas complexity is equal to the adversary bound for exact state and subspace conversion.

The latter is exciting as the same object is shown to have two different facets.
For some problems, intuition gathered from quantum algorithms might be helpful in coming up with good Las Vegas algorithms.
For other problems, it might be easier to forget about limitations of quantum algorithms and work directly with optimisation problems.
Our algorithm of~\rf{sec:mainupper} can be seen as a way to guess arbitrarily large states to process by the input oracle.
It is interesting to understand consequences of such a subroutine.

Due to its exactness, Las Vegas complexity results in ``cleaner'' algorithms without necessity to worry about error reduction.
Similar results have been already obtained for function evaluation using compositional properties of the adversary bound.
But having ``clean'' subroutines for various state-generating and state-converting problems might be helpful as well, especially, given that they not always have built-in tools for error reduction.

This paper should be seen as a prequel to Ref.~\cite{belovs:variations} since it gives a more general and simple exposition of the first half of Ref.~\cite{belovs:variations}, but mostly ignores the second half, which deals with applications to function and relation evaluation.
Complete reconciliation of the results from~\cite{belovs:variations} with the current paper is left as important future work.
Let us just mention two results that can be easily obtained in this way.
\itemstart
\item Purifiers of~\cite{belovs:variations} imply that, assuming bidirectional access, a Monte Carlo algorithm for approximate and non-coherent function evaluation can be turned into an exact coherent Las Vegas algorithm for the same function with constant increase in complexity.
As mentioned in the introduction, this is in contrast to randomised Las Vegas complexity.
\item The bound~\rf{eqn:BooleanTwo} is equal to Las Vegas complexity of bidirectional function evaluation also for non-Boolean functions.
However, we only get this result up to a constant factor.
Understanding the exact relation between the two is still an open problem.
\itemend

We list just a few other open problems.
What is Las Vegas complexity of various important subroutines, for instance, amplitude amplification?
Can the adversary bound for contraction oracles be applied for some problems like faulty oracles?
Is there an nice formulation of the adversary bound for (approximate and non-coherent) function evaluation with unidirectional input oracles?
In particular, what is the true complexity of the unitary permutation inversion problem?
The purifiers mentioned above seem to crucially depend on bidirectionality.

Finally, an interesting research direction is to obtain analogues of some of the results in this paper for time complexity.

\subsection*{Acknowledgements}
We thank anonymous reviewers for their comments on an earlier draft of the paper.

A.B. is supported by the ERDF project number 1.1.1.5/18/A/020 ``Quantum algorithms: from complexity theory to experiment''.

\bibliographystyle{habbrvM}
{
\small
\bibliography{belov}
}

\appendix

\section{Duality}
\label{app:duality}
We use semi-definite duality.
The dual is constructed by explicitly writing down the Lagrangian and transforming it.  
Thus, \emph{weak duality} (the maximisation problem bounds the minimisation problem from below) is apparent.  
To prove \emph{strong duality} (their optimal values are equal), we rely on Slater's condition.  The latter says that strong duality holds if one of the optimisation problems is convex and \emph{strictly feasible}, i.e. there exists a feasible solution making all the inequalities in the problem strict.

It turns out that the calculations are concise using multidimensional tensors with contractions given by the inner product formula between Hermitian matrices: $\ip<A,B> = \tr A^*B$.
However, given that matrices are tensors themselves, this notation might be confusing, so we opted to use the following one, that we find more intuitive.

We assume the matrices are square and are labelled by elements of direct products of some sets.  If $A$ is a matrix labelled by $X\times Y$, and $B$ is a matrix labelled by $X\times Z$, then $A\circ B$ is a matrix labelled by $X\times Y\times Z$ given by
\[
A\circ B\elem[(x,y,z),(x',y',z')] = A\elem[(x,y),(x',y')]\; B\elem[(x,z),(x',z')].
\]
This includes the usual Hadamard product (when $|Y|=|Z|=1$), the tensor product (when $|X|=1$) and the version of the Hadamard product used in~\rf{eqn:onegammadualCondition} (when $|Y|=1$).

\newcommand{\suml}{\mathop{\textstyle\sum}\nolimits}

For the matrix $A$ as above, let $\suml_Y A$ be the $X\times X$ matrix given by
\[
\sA[\suml_Y A]\elem[x,x'] = \sum\nolimits_{y,y'\in Y} A\elem[(x,y), (x',y')].
\]
$\suml$ without the subindex stands for the total sum of all entries.
In particular, we have $\ip<A,A'> = \suml(\overline A\circ A')$ and the partial trace is $\tr_Y(A) = \suml_Y(A\circ I_Y)$, where $\overline A$ is complex conjugate and $I_Y$ is the $Y\times Y$ identity matrix.

\pfstart[Proof of \rf{thm:onegammadual}]
We have three sets of labels: $D$, and the bases of $\cM$ and $\cW$, for which we use letters $M$ and $W$.
By~\rf{eqn:onegammaCondition}:
\[
e_{xy} 
= \tr \skA[v_x^*(\Delta_{xy}\otimes I_W) v_y]
= \tr \skA[v_yv_x^*(\Delta_{xy}\otimes I_W)]
= \tr \skA[(v_xv_y^*)^*(\Delta_{xy}\otimes I_W)]
= \suml\sA[\overline{v_xv_y^*} \circ \Delta_{xy}\circ I_W].
\]
Let us merge all these conditions into one.
Let $E$ be the $D\times D$ matrix given by $(e_{xy})$, and
$\Delta$ be the $(D\times M)\times (D\times M)$ matrix with the blocks $\Delta_{x,y}$.
Both these matrices are Hermitian.
Let also $v$ be the vector in $\bC^{D\times M\times W}$ obtained by joining all $v_x$.
Then, all the constraints in~\rf{eqn:onegammaCondition} can be concisely written as
\[
E = \suml_{M,W} (\overline{vv^*} \circ \Delta \circ I_W)
= \suml_M\sA[\suml_W (\overline{vv^*}\circ I_W)\circ \Delta ] = \suml_M\sA[X\circ \Delta],
\]
where $X$ is a positive semi-definite $(D\times M)\times (D\times M)$-matrix given by $X = \tr_W(\overline{vv^*})$.
Conversely, any positive semi-definite matrix can be written in this way for a large enough $W$.
Also, the matrix $\sum_M(X\circ I_{D,M})$ is the diagonal matrix with $\|v_x\|^2$ on the diagonal.
Therefore, we get the following equivalent formulation of the optimisation problem~\rf{eqn:onegamma}:
\begin{subequations}
\label{eqn:alternativeonegamma}
\begin{alignat}{2}
&\mbox{\rm minimise} &\quad& t \\
& \mbox{\rm subject to}&&  
tI_D\succeq \suml_M(X\circ I_{D,M})  \\
&&&E = \suml_M(X\circ \Delta)\\
&&&X\succeq 0,\quad t\in \bR.
\end{alignat}
\end{subequations}

We introduce two Lagrangian multipliers $Y\succeq 0$ and $\Lambda$ which are $D\times D$ Hermitian matrices, resulting in the following Lagrangian:
\[
t + \suml_D\skB[ Y\circ {\sA[\suml_M (X\circ I_{D,M}) - t I_D]}] + \suml_D\skB[ \Lambda\circ {\sA[E - \suml_M(X\circ \Delta)]}]
\]
After rearrangement:
\[
\suml_D (\Lambda\circ E) + t\skA[1-\tr Y] + \suml_{D,M} \skA[X\circ(Y\circ I_{D,M} - \Lambda\circ\Delta)]
\]
This gives the following dual:
\begin{subequations}
\label{eqn:dual}
\begin{alignat}{2}
&\mbox{\rm maximise} &\quad& \suml_D(\Lambda\circ E) \label{eqn:alternativeonegammaObjective}\\
& \mbox{\rm subject to}&&  
\tr Y = 1 \label{eqn:trY}\\
&&& \Lambda\circ\Delta \preceq  Y\circ I_{D,M} \label{eqn:LambdacircDelta}\\
&&&Y\succeq 0,\quad\text{$\Lambda$ Hermitian}.
\end{alignat}
\end{subequations}
Note that this optimisation problem is strictly feasible as it suffices to take $\Lambda=0$ and $Y$ a multiple of the identity matrix satisfying $\tr Y=1$.
Therefore, by Slater's condition, the optimal values of~\rf{eqn:alternativeonegamma} and~\rf{eqn:dual} are equal.

By studying~\rf{eqn:LambdacircDelta}, we see that we can assume that $Y$ is rank-1 (by extending the diagonal matrix $Y\circ I_{D,M}$), and we can write $\Lambda$ as $\Gamma\circ Y$ for some Hermitian $D\times D$-matrix $\Gamma$.
Then~\rf{eqn:LambdacircDelta} becomes
\begin{equation}
\label{eqn:YcircGammacircDelta}
Y\circ\Gamma\circ\Delta \preceq  Y\circ I_{D,M},
\end{equation}
and the objective~\rf{eqn:alternativeonegammaObjective} becomes
\begin{equation}
\label{eqn:alternativeonegammaObjective2}
\suml_D(Y\circ\Gamma\circ E).
\end{equation}
This is clearly continuous in $Y$ for fixed $\Gamma$ and $E$, thus, we can additionally assume that $Y$ has non-zero diagonal.
Then, the Hadamard inverse of $Y$ is defined and positive semi-definite, hence, Eq.~\rf{eqn:YcircGammacircDelta} is equivalent to
\begin{equation}
\label{eqn:GammacircDelta}
\Gamma\circ\Delta \preceq I_{D,M}.
\end{equation}
Altogether, the objective~\rf{eqn:alternativeonegammaObjective2} with conditions~\rf{eqn:GammacircDelta}, \rf{eqn:trY}, and $Y$ is positive semi-definite rank-1 gives us the dual
\begin{alignat*}{2}
&\mbox{\rm maximise} &\quad& \lambda_{\max} (\Gamma\circ E) \\
& \mbox{\rm subject to}&&  \lambda_{\max}(\Gamma\circ \Delta)\le 1 
\end{alignat*}
as required.
\pfend

\begin{proof}[Proof of \rf{clm:feasibleClosed}]
The set of feasible solutions of the optimisation problems~\rf{eqn:onegammaMultidimensional} and~\rf{eqn:onegammaMultiobjective} is the same so we can use the same characterisation~\rf{eqn:alternativeonegamma} as in the proof of \rf{thm:onegammadual}.

Let $W$ denote the feasible objective space of the optimisation problem, 
and $B_R$ denote the set of vectors in $\bR^D\otimes \bR^s$ with the sum of entries bounded by $R$.
The objective profile $w = (\Dnorm|v_x|)_{x\in D}$ can be obtained by summing the diagonal entries of the corresponding matrix $X$.
Thus, $W\cap B_R$ is the image under a continuous map of the set of feasible solutions $X$ to~\rf{eqn:alternativeonegamma} with $\tr X\le R$.
The latter set is easily seen to be compact, hence, $W\cap B_R$ is closed.
As $R$ is arbitrary, $W$ is closed as well.
\end{proof}

\end{document}